\documentclass[11pt]{article} 

\usepackage[margin=1in,letterpaper]{geometry} 
\usepackage[utf8]{inputenc}
\usepackage[T1]{fontenc}
\usepackage{mathpazo} 

\usepackage{authblk,comment,cprotect,xspace}
\usepackage[normalem]{ulem} 

\usepackage{booktabs,subcaption} 

\usepackage{enumitem} 
\usepackage[dvipsnames]{xcolor} 
\definecolor{ptblue}{RGB}{15,76,129} 
\definecolor{ptemerald}{HTML}{009473} 
\definecolor{ptilluminating}{HTML}{F5DF4D} 
\definecolor{ptgray}{HTML}{939597} 

\usepackage{tikz} 
\usetikzlibrary{decorations.pathreplacing}

\usepackage{amsmath,amsfonts,amssymb,amsthm,bbm,mathtools} 

\let\OLDleft\left
\let\OLDright\right
\renewcommand{\left}{\mathopen{}\mathclose\bgroup\OLDleft}
\renewcommand{\right}{\aftergroup\egroup\OLDright}

\let\OLDland\land
\renewcommand{\land}{\:\OLDland\:}
\let\OLDlor\lor
\renewcommand{\lor}{\:\OLDlor\:}
\let\OLDforall\forall
\renewcommand{\forall}{\OLDforall\:}
\let\OLDexists\exists
\renewcommand{\exists}{\OLDexists\,}

\newcommand*{\diff}[1]{\mathop{}\!\mathrm{d}#1} 

\DeclareMathOperator*{\argmin}{arg\,min}

\usepackage[ruled,linesnumbered,vlined]{algorithm2e}

\SetCommentSty{mycommentfont}

\usepackage[square,sort]{natbib} 
\usepackage{hyperref}
\hypersetup{
linktocpage,
colorlinks=true,
citecolor=ptemerald, 
urlcolor=ptblue, 
linkcolor=Plum, 
}
\usepackage{cleveref} 

\theoremstyle{plain}
\newtheorem{theorem}{Theorem}[section]
\newtheorem{claim}[theorem]{Claim}

\newtheorem{corollary}[theorem]{Corollary}
\newtheorem{lemma}[theorem]{Lemma}

\theoremstyle{definition}
\newtheorem{definition}[theorem]{Definition}
\newtheorem{example}[theorem]{Example}

\theoremstyle{remark}
\newtheorem*{remark}{\upshape\bfseries Remark}


\makeatletter 
\newcommand{\EF}[1]{\if\relax\detokenize\expandafter{\@firstofone#1{}}\relax \text{EF}\xspace\else \text{EF#1}\fi}
\makeatother
\newcommand{\EFOne}{\EF{1}\xspace}
\newcommand{\EFX}{\EF{X}\xspace}
\newcommand{\EFM}{\EF{M}\xspace}
\newcommand{\EFXM}{\EF{XM}\xspace}
\newcommand{\EFoneM}{\EF{1M}\xspace} 

\newcommand{\twoAgentTwoThirdsAlloc}{2\textsc{-Agent-}\frac{2}{3}\textsc{-MMS-Alloc}\xspace}
\newcommand{\twoThirdsAlloc}{3\textsc{-Agent-}\frac{2}{3}\textsc{-MMS-Alloc}\xspace}
\newcommand{\highValuedAlg}{\textsc{High-Valued-Alloc}\xspace}
\newcommand{\goodBundle}{reducible bundle\xspace}
\newcommand{\GoodBundle}{\expandafter\MakeUppercase\goodBundle} 

\newcommand{\alloc}{\mathcal{A}}
\newcommand{\length}[1]{\texttt{len}(#1)}
\newcommand{\indicator}[1]{\mathbbm{1}_{\left\{#1\right\}}\xspace}
\newcommand{\MMS}{\text{MMS}}
\newcommand{\utility}[2]{u_{#1}(#2)}
\newcommand{\indGoodsOf}[1]{M_{#1}^{\text{IND}}} 
\newcommand{\divGoodsOf}[1]{M_{#1}^{\text{DIV}}} 
\newcommand{\pieceOfGoodForAgent}[2]{\widetilde{#1}_{#2}} 

\title{Fair Division with Subjective Divisibility}
\author[1]{Xiaohui Bei}
\author[2]{Shengxin Liu}
\author[3]{Xinhang Lu}
\affil[1]{Nanyang Technological University, \nolinkurl{xhbei@ntu.edu.sg}}
\affil[2]{Harbin Institute of Technology, Shenzhen, \nolinkurl{sxliu@hit.edu.cn}}
\affil[3]{UNSW Sydney, \nolinkurl{xinhang.lu@unsw.edu.au}}
\date{}

\begin{document}
\maketitle

\begin{abstract}
The classic fair division problems assume the resources to be allocated are either divisible or indivisible, or contain a mixture of both, but the agents always have a predetermined and uncontroversial agreement on the (in)divisibility of the resources.
In this paper, we propose and study a new model for fair division in which agents have their own \emph{subjective divisibility} over the goods to be allocated.
That is, some agents may find a good to be indivisible and get utilities only if they receive the \emph{whole} good, while others may consider the same good to be divisible and thus can extract utilities according to the fraction of the good they receive.
We investigate fairness properties that can be achieved when agents have subjective divisibility.
First, we consider the \emph{maximin share (MMS) guarantee} and show that the worst-case MMS approximation guarantee is at most~$2/3$ for $n \geq 2$ agents and this ratio is \emph{tight} in the two- and three-agent cases.
This is in contrast to the classic fair division settings involving two or three agents.
We also give an algorithm that produces a $1/2$-MMS allocation for an arbitrary number of agents.
Second, we study a hierarchy of envy-freeness relaxations, including EF1M, EFM and EFXM, ordered by increasing strength.
While EF1M is compatible with non-wastefulness (an economic efficiency notion), this is not the case for EFM, even for two agents.
Nevertheless, an EFXM and non-wasteful allocation always exists for two agents if at most one good is discarded.
\end{abstract}

\section{Introduction}
\label{sec:introduction}

Fair division studies how to allocate scarce resources among interested agents with potentially different preferences in such a way that every agent involved feels that she gets a fair share.
Dating back to the~1940s, \citet{Steinhaus49} formulated and studied how to fairly divide a cake---the problem is commonly known as \emph{cake cutting}, with the cake serving as a metaphor for heterogeneous \emph{divisible} goods such as land or time.
The two most prominent fairness notions in the domain are \emph{proportionality}~\citep{Steinhaus49} and \emph{envy-freeness}~\citep{Foley67}.
An allocation is said to be proportional if every agent receives value at least~$1/n$ of her total value for the grand set of goods (here, $n$ denoting the number of agents), and envy-free (\EF{}) if each agent weakly prefers her own bundle to any other agent's bundle in the allocation.

Looking beyond cake cutting, recently, there has been considerable attention to the allocation of heterogeneous \emph{indivisible} goods such as jewellery, electronics, artworks, and many other common items~\citep{AmanatidisAzBi23,Suksompong21,Suksompong25}.
While both proportionality and envy-freeness can always be satisfied in cake cutting, neither can always be satisfied when dividing indivisible goods.
In order to circumvent this issue, relaxations of the notions have been studied.
A natural alternative to proportionality is the \emph{maximin share (MMS) guarantee}~\citep{Budish11}, which requires that every agent receives value at least their own \emph{maximin share}, i.e., the largest value that the agent can guarantee for herself if she is allowed to partition the goods into $n$ parts and always receives the worst part.
An MMS allocation may not exist, but a constant multiplicative approximation to MMS can always be satisfied~\citep{AkramiGa24,KurokawaPrWa18}.
On the other hand, envy-freeness is often relaxed to \emph{envy-freeness up to one good (\EFOne)}~\citep{Budish11}, which requires that any envy an agent has towards another agent can be eliminated by removing some good from the latter agent's bundle.
An \EFOne allocation always exists~\citep{CaragiannisKuMo19,LiptonMaMo04}.

Recently, \citet{BeiLiLi21} generalized the two aforementioned classic settings and studied the fair allocation of mixed divisible and indivisible goods (henceforth referred to as \emph{mixed goods}).
They introduced a notion called \emph{envy-freeness for mixed goods (\EFM)} which naturally generalizes both \EF{} and \EFOne, and showed the guaranteed existence of an \EFM allocation.
The weaker and stronger variants of envy-freeness relaxations (e.g., \emph{\EFoneM} and \emph{\EFXM}) have also been studied for mixed-goods allocation~\citep{CaragiannisKuMo19,NishimuraSu23}.
Moreover, \citet{BeiLiLu21} investigated the existence, approximation, and computation of MMS allocations in the mixed-goods model.

In all of the three aforementioned models---cake cutting, indivisible-goods allocation, or mixed-goods model---the (in)divisibility of the goods is \emph{objective} and predetermined.
Put differently, all agents agree with each other on whether a good is divisible or not.
In many real-world scenarios, however, agents may have \emph{subjective divisibility} towards the goods:
\begin{itemize}
\item As our first example, consider the allocation of predetermined time slots for venue usage at a university.
A particular time slot may be considered indivisible for some users---for instance, professors usually need a full time slot for their lectures or the final examination of their courses.
In the meanwhile, the same time slot may be viewed as divisible for the student activity groups, because their usage of the venue is flexible and their utilities may simply be proportional to the time booked.

\item Another example is the allocation of computing resources.
Given a computing resource, like a CPU or 1GB of RAM, some computational tasks may find it indivisible because they require the whole resource to execute, while other tasks may be more flexible and can have different levels of performances based on the fraction of the same resource allocated to them.

\item Our last example touches on the division of assets such as land, real estates and business ownership in various contexts such as divorce settlement and inheritance division.
Consider a piece of land as an example.
This is a typical divisible good for agents with a general purpose; but some others may view it as indivisible if they only intend to build a home on the entire land.
Similarly, when dealing with residential real estate, some agents may perceive a property as indivisible since they plan to live in it.
Conversely, others who already own a home may view the same property as a potential source of rental income and therefore are happy to own just a portion of it.
\end{itemize}

This subjective divisibility of agents over the resources brings interesting and challenging characteristics to the classic fair division problems.
For example, with only two agents, the simple \emph{cut-and-choose protocol}, where the first agent partitions the goods into two parts that are as equal as possible from her point of view and let the second agent choose first, is known to provide strong fairness guarantees in different settings.
This is, however, no longer the case when the two agents may disagree with each other on the divisibility of the goods.
Consider the simple example where agents have the following subjective divisibility and utilities:
\begin{center}
\begin{tabular}{@{}c|*{2}{l}@{}}
\toprule
& Good~$1$ & Good~$2$ \\
\midrule
Agent~$1$ & divisible, $1$ & divisible, $1$ \\
Agent~$2$ & indivisible, $1$ & indivisible, $1$ \\
\bottomrule
\end{tabular}
\end{center}
If agent~$1$ partitions the goods into two parts such that each part contains a half of each good, agent~$2$ values either part at zero.
Regardless of agent~$2$'s choice, the resulting allocation is unfair---a \emph{naive} cut-and-choose protocol fails to give any positive approximation to MMS!
In the meanwhile, there is clearly a better allocation: each agent gets a whole good.
The central questions of the paper are
\begin{itemize}
\item how to design fair allocation algorithms that could cope with subjective divisibility over the resources; and
\item how subjective divisibility affects well-known fairness properties such as the MMS guarantee and the (relaxations of) envy-freeness.
\end{itemize}

\subsection{Our Results}

In this paper, we initiate the study of fair division with \emph{subjective divisibility}.
Our model consists of~$n$ agents and~$m$ goods.
For each good, some agents may regard it as indivisible, meaning that they only derive utilities if they receive the whole good, while other agents may regard the good as divisible and derive utilities proportional to the fraction of the good they receive.\footnote{For simplicity, we present our model concerning \emph{homogeneous} divisible goods.
We discuss in \Cref{sec:extension-cake} that our results still hold with heterogeneous divisible goods, i.e., cakes.}

In \Cref{sec:MMS}, we focus on the maximin share (MMS) guarantee, which admits a natural adaptation in our model: when computing an agent's maximin share, the agent may partition her divisible goods into multiple parts across her MMS bundles.
With subjective divisibility, we show that the worst-case MMS approximation guarantee is at most~$2/3$ for $n \geq 2$ agents.\footnote{For two agents, this result is in contrast to the classic fair division settings (i.e., cake cutting and indivisible goods allocation) and the mixed-goods model where the cut-and-choose protocol always guarantees an MMS allocation~\citep[see, e.g.,][]{Procaccia16,BouveretLe16,BeiLiLu21}.
Note also that with indivisible goods, for any number of agents, a $\big( \frac{3}{4} + \frac{3}{3836} \big)$-approximation to MMS is guaranteed~\citep{AkramiGa24}.}
We match the upper bound by presenting algorithms which always produce $\frac{2}{3}$-MMS allocations for the two- and \emph{three-agent} cases.
It is worth noting that, even in the three-agent case, a \emph{tight} MMS approximation ratio has only been observed in settings with further restrictions~\citep[e.g.,][]{FeigeSaTa21,FarhadiGhHa19}.\footnote{\citet{FeigeSaTa21} showed that the ratio~$39/40$ is tight for instances with three agents and nine goods.
\citet{FarhadiGhHa19} adapted the MMS guarantee to a setting where indivisible goods are allocated to agents who have \emph{unequal entitlements}, and showed that the ratio~$1/n$ is tight for the problem.}
More specifically for three agents, when allocating indivisible goods, the current best MMS approximation ratio is~$11/12$ due to \citet{FeigeNo22} and the current best impossibility result is~$39/40$ due to \citet{FeigeSaTa21}.
With arbitrary number of agents, we show that a $\frac{1}{2}$-MMS allocation always exists.

Technique-wise, the most innovative and technically most involved algorithm is arguably the $\frac{2}{3}$-MMS algorithm for three agents.
The intricacy of handling allocations is twofold.
First, a common technique for computing an (approximate) MMS allocation is to reduce the given instance to an \emph{IDO instance} where agents have an identical ordering over the goods while retaining the same set of valuations they had in the original instance.
This technique has been successfully deployed in the literature since the work of \citet{BouveretLe16}.
At a high level, a desired allocation in the IDO instance induces a picking sequence of the agents for the original instance.
The agents pick their favourite good from the remaining goods according to the picking sequence in the original instance, which results in a bundle worth at least what they received in the IDO instance.
This technique seems to no longer work in our setting due to subjective divisibility.
As a result, we make more effort to examine how agents value (some subsets of) the goods in detail.
On top of that, reasoning about agents' subjective divisibility towards specific subsets of goods adds another intricate aspect, because the allocation space is significantly enlarged, which complicates the analysis.

In \Cref{sec:EFM}, we turn our attention to envy-freeness (\EF{}) and consider a hierarchy of its relaxations.
Satisfying envy-freeness alone is trivial: we can divide each good equally into $n$ pieces and give each agent one piece; since each agent values every bundle equally, the allocation is envy-free.
This allocation, however, is not economically efficient, as allocating an agent a fraction of her indivisible good is wasteful.
We thus introduce a very basic economic efficiency notion called \emph{non-wastefulness} which excludes the scenario where agents receive a fraction of their indivisible goods.
We show that non-wastefulness is compatible with \EFoneM---the weakest fairness notion in the hierarchy---based on the idea that any envy from one agent towards the other can be eliminated if we remove a former agent's positively valued indivisible good from the latter agent's bundle.
The stronger notions in the hierarchy are \EFM and \EFXM, based on the following idea: an agent may expect to be EF1 / EFX towards any agent who only receives the former agent's indivisible goods, and envy-free towards the rest.
Perhaps surprisingly, \EFM and non-wastefulness are incompatible, even for two agents.
On the positive side, we show for two agents, by discarding \emph{at most} one good, an \EFXM and non-wasteful allocation (of the remaining goods) always exists and can be computed in polynomial time.

\subsection{Further Related Work}

Our work has mostly been inspired by the papers of \citet{BeiLiLu21}, \citet{BeiLiLi21} and \citet{NishimuraSu23} on fair division of mixed divisible and indivisible goods.
We refer the interested readers to the survey of \citet{LiuLuSu24} for an overview of the most recent developments in this direction.
Note that the assumption of subjective divisibility further generalizes the mixed-goods model and significantly enlarges the space of possible allocations, as a result, their algorithmic results can hardly be directly applied to our model.

\citet{BeiLiLu21} devised an algorithm which produces an $\alpha$-MMS allocation, where~$\alpha \in [1/2, 1]$ is a monotonically increasing function of how agents value divisible goods relative to their maximin share, and adapted the algorithm to provide a better approximation guarantee that can almost match the best possible approximation ratio in indivisible-goods allocation.
In addition, they compared the worst-case and per-instance MMS approximation guarantee with mixed goods to that with only indivisible goods.

\citet{NishimuraSu23} provided a formal proof showing that a maximum Nash welfare allocation is \emph{EF1M} (a weakening of \EFM) for agents with additive utilities, and \emph{EFXM} (a strengthening of \EFM) when agents have binary valuations over the mixed goods.
In addition to EFM existence, \citet{BeiLiLi21} proposed efficient algorithms to compute EFM allocations in special cases and approximate-EFM allocations, followed by preliminary results on the (in)compatibility of EFM and economic efficiency notions based on \emph{Pareto optimality (PO)}.
The quantitative tradeoff between fairness (of being EFM or its variants) and economic efficiency (of using utilitarian welfare) was studied by \citet{LiLiLu24} through the lens of \emph{price of fairness}, which quantifies the efficiency loss due to fairness requirements.
\citet{LiLiLu23} studied the design of \emph{truthful} and EFM mechanisms in the presence of strategic agents.
While truthfulness and EFM are incompatible even for two agents with additive utilities, they designed truthful and EFM mechanisms in several special cases where the expressiveness of preferences are further restricted.
\citet{BhaskarSrVa21} generalized EFM to a \emph{mixed-resources} setting where both goods and chores are to be allocated, and showed an EFM allocation always exists when allocating indivisible items and bad cake and in certain scenarios of allocating indivisible chores and cake.

When allocating indivisible resources, in addition to the work on approximate fairness, there has been work on targeting at \emph{exact} fairness like proportionality and envy-freeness by \emph{sharing} certain indivisible resources among agents~\citep{MisraSe20,SandomirskiySe22}.
These works focused on obtaining fair and economically efficient allocations with the smallest number of resources to be shared between agents and provided algorithmic and/or hardness results.
We remark here that in their model, every agent agrees with each other that the goods are \emph{indivisible}, which is different from ours.

\section{Preliminaries}
\label{sec:preliminaries}

For~$s \in \mathbb{N}$, let $[s] \coloneqq \{1, 2, \dots, s\}$.
Our model includes a set of agents~$N = [n]$ and a set of goods~$M = \{g_1, g_2, \dots, g_m\}$.
We assume without loss of generality that each good~$g_\ell \in M$ is represented by an interval~$[\ell - 1, \ell]$, i.e., the amount of each good is normalized to one.
Given any~$g \in M$, a \emph{piece of good~$g$}, denoted by~$\widetilde{g}$, is a union of finitely many disjoint subintervals of~$g$, i.e., $\widetilde{g} \subseteq g$.
A \emph{bundle}~$B \coloneqq \bigcup_{g \in M} \widetilde{g}$ consists of a (possibly empty) piece of each good~$g \in M$.
We sometimes abuse the terminology by considering a bundle as the set of the (inclusion-maximal) intervals that it contains.
For notational convenience, given~$g \in M$ and a bundle~$B$, denote by~$g^B \coloneqq g \cap B$ the piece of good~$g$ that is contained in~$B$; $g^B = \emptyset$ if bundle~$B$ does not contain any part of good~$g$.
Denote by~$\length{I} \coloneqq b - a$ the length of interval~$I = [a, b]$ and $\length{B} \coloneqq \sum_{I \in B} \length{I}$ the length of bundle~$B$.
Note that it does not matter if the interval is open or closed for its length.

A $k$-partition~$(P_1, P_2, \dots, P_k)$ of goods~$M$ consists of~$k$ bundles such that $\length{P_i \cap P_j} = 0$ for all $i \neq j$ and $\bigcup_{i \in [k]} g^{P_i} = g$ for each~$g \in M$; note that bundle~$P_i$ might be empty.
Generally speaking, each~$P_i$ consists of a (possibly empty) piece of each good~$g \in M$, which then may consist of multiple disjoint subintervals of~$g$.
Let $\Pi_k(M)$ be the set of all $k$-partitions of goods~$M$.
An \emph{allocation}~$\alloc = (A_1, A_2, \dots, A_n) \in \Pi_n(M)$ is an $n$-partition of~$M$ among the agents, where $A_i$ is the bundle allocated to agent~$i \in N$.

\paragraph{Subjective Divisibility and Utilities}
Each agent~$i \in N$ has a non-negative utility~$\utility{i}{g}$ for each good~$g \in M$.
We assume that each good is positively valued by at least one agent.
In our paper, agents have \emph{subjective divisibility} towards the goods: each~$i \in N$ can partition goods~$M$ into two disjoint subsets~$\indGoodsOf{i}$ and~$\divGoodsOf{i}$,\footnote{For the sake of being succinct, we use ``ind'' as a shorthand for ``indivisible'' and ``div'' for ``divisible'' when we specify agents' subjective divisibility towards a good later in our examples.} where~$\indGoodsOf{i}$ contains her indivisible goods and~$\divGoodsOf{i}$ contains her \emph{homogeneous} divisible goods.\footnote{We assume that the agent regards her zero-valued goods as indivisible.
In other words, the agent positively values her divisible goods.
This assumption, first, provides semantic consistency and cognitive ease.
For instance, it is more natural to consistently regard the remaining part of a good in a reduced instance as indivisible for those agents who think the good is indivisible in the original instance.
It also looks weird that a part of an indivisible good could suddenly become divisible.
Second, as we will see in \Cref{def:EFM}, this assumption makes our EFM and EFXM definitions more sensible.
Without it, an agent can simply report a zero-valued good as divisible; then, the agent need be envy-free towards any agents who receive (a part of) the divisible good.}
More specifically, suppose that agent~$i$ gets a piece of good~$\widetilde{g} \subseteq g$ of length~$x$, then
\[
\utility{i}{\widetilde{g}} = x \cdot \utility{i}{g} \cdot \indicator{\left( g \in \indGoodsOf{i} \land x = 1 \right) \lor g \in \divGoodsOf{i}},
\]
where $\indicator{\cdot}$ is the indicator function that is~$1$ when agent~$i$ receives her whole indivisible good~$g$ (i.e., $g \in \indGoodsOf{i} \land x = 1$) or when agent~$i$ regards good~$g$ as divisible (i.e., $g \in \divGoodsOf{i}$).
We assume \emph{additive} utilities, meaning that agent~$i$'s utility for allocation~$\alloc$ is
\[
\utility{i}{\alloc} = \utility{i}{A_i} = \sum_{g \in M} \utility{i}{g^{A_i}} = \sum_{g \in M} \length{g^{A_i}} \cdot \utility{i}{g} \cdot \indicator{\left( g \in \indGoodsOf{i} \land \length{g^{A_i}} = 1 \right) \lor g \in \divGoodsOf{i}}.
\]
An \emph{instance} consists of the goods~$M$, the agents~$N$ and their subjective divisibility and utilities.

\begin{remark}
We will discuss in \Cref{sec:extension-cake} about the setting with \emph{heterogeneous} divisible goods (i.e., \emph{cakes}).
Each agent will be endowed with a \emph{density function} to capture how the agent values different parts of her divisible goods.
Despite this generalization, it is worth noting that agents still have \emph{additive} utilities over divisible resources.

Our subjective divisibility model is general enough to allow more flexible subjective divisibility, e.g., via more complex utility functions.
For example, given a fraction of an good, the utility an agent get can be a convex or concave function with respect to the fraction.
\end{remark}

\medskip

Our goal is to \emph{fairly} allocate the goods despite the agents having their own subjective divisibility.
To this end, we adapt two fairness notions that have been investigated when allocating mixed goods by \citet{BeiLiLi21,BeiLiLu21} to our model.
Specifically, we study the \emph{maximin share (MMS) guarantee} and \emph{envy-freeness for mixed goods (EFM)} in \Cref{sec:MMS,sec:EFM}, respectively.
Additionally, we examine the compatibility between the fairness notions and economic efficiency notions.
A fundamental efficiency notion in the context of fair division is \emph{Pareto optimality}.
Given an allocation~$\alloc = (A_1, A_2, \dots, A_n)$, another allocation $(A'_1, A'_2, \dots, A'_n)$ is a \emph{Pareto improvement} if $u_i(A'_i) \geq u_i(A_i)$ for all agents~$i \in N$ and $u_i(A'_i) > u_i(A_i)$ for some agent~$i \in N$.
An allocation is \emph{Pareto optimal (PO)} if it does not admit a Pareto improvement.

\section{Maximin Share Guarantee}
\label{sec:MMS}

Our focus in this section is on the \emph{MMS guarantee}, which exhibits as a favourable fair-share-based notion in a variety of fair division settings since it was proposed for indivisible-goods allocation~\citep{Budish11}.

\begin{definition}[$\alpha$-MMS]
Recall that $\Pi_n(M)$ is the set of all $n$-partitions of goods~$M$.
The \emph{maximin share (MMS)} of an agent~$i \in N$ is defined as
\[
\MMS_i(n, M) \coloneqq \max_{(P_1, P_2, \dots, P_n) \in \Pi_n(M)} \min_{j \in [n]} \utility{i}{P_j}.
\]
Any partition for which the maximum is attained is called an \emph{MMS partition} of agent~$i$.

An allocation~$(A_1, A_2, \dots, A_n)$ is said to satisfy \emph{$\alpha$-MMS}, for some $\alpha \in [0, 1]$, if for every agent~$i \in N$, $u_i(A_i) \geq \alpha \cdot \MMS_i(n, M)$.
\end{definition}

We will simply write $\MMS_i$ when parameters~$n$ and~$M$ are clear from the context.
We say a $1$-MMS allocation satisfies the MMS guarantee and write MMS as a shorthand for $1$-MMS.
In the following example, we illustrate the maximin share of an agent and an approximate-MMS allocation.

\begin{example}
\label{ex:MMS:medium-valued}
Consider an instance involving three agents~$\{1, 2, 3\}$ and five goods~$\{g_1, g_2, g_3, g_4, g_5\}$.
Each agent regards exactly two goods as divisible.
Specifically, agent~$1$ (resp.,~$2$ and~$3$) regards goods~$g_4, g_5$ (resp.,~$g_1, g_2$ and~$g_1, g_3$) as divisible.
Each agent values each good at~$0.6$.

It can be verified that the maximin share of each agent is~$1$.
We visualize an MMS partition of agent~$1$ below: her three indivisible goods~$g_1, g_2, g_3$ are included in three different bundles, respectively; her two divisible goods~$g_4, g_5$ are divided across the bundles to equalize their values, e.g., a piece of~$g_4$ of value exactly~$0.4$ is added to the first bundle.
The MMS partitions for agents~$2$ and~$3$ are similar, but they may divide the goods differently due to their own subjective divisibility towards the goods.
\begin{center}
\begin{tikzpicture}[scale=2]
\draw (0,0) rectangle (1,1);
\draw (1.5,0) rectangle (2.5,1);
\draw (3,0) rectangle (4,1);
\fill[ptgray] (0,.6) rectangle (1,1);
\fill[ptgray] (1.5,.8) rectangle (2.5,1);

\draw (.5,.3) node {$g_1$} circle [radius=0.3];
\draw (2,.3) node {$g_2$} circle [radius=0.3];
\draw (3.5,.3) node {$g_3$} circle [radius=0.3];
\node at (.5,.8) {$g_4$};
\node at (2,.9) {$g_4$};
\node at (2,.7) {$g_5$};
\node at (3.5,.8) {$g_5$};

\draw[dotted,color=ptgray] (0,.6) to (1,.6);
\draw[dotted,color=ptgray] (1.5,.8) to (2.5,.8);
\draw[dashed] (1.5,.6) to (2.5,.6);
\draw[dashed] (3,.6) to (4,.6);

\draw[decorate,decoration={brace,raise=1pt,amplitude=3pt}] (0,0) to (0,.6);
\node[label=left:{$0.6$}] at (0,.3) {};
\draw[decorate,decoration={brace,raise=1pt,amplitude=3pt}] (0,.6) to (0,1);
\node[label=left:{$0.4$}] at (0,.8) {};
\draw[decorate,decoration={brace,raise=1pt,amplitude=3pt}] (1.5,.8) to (1.5,1);
\node[label=left:{$0.2$}] at (1.5,.9) {};
\draw[decorate,decoration={brace,mirror,raise=1pt,amplitude=3pt}] (2.5,.6) to (2.5,.8);
\node[label=right:{$0.2$}] at (2.5,.7) {};
\draw[decorate,decoration={brace,mirror,raise=1pt,amplitude=3pt}] (4,.6) to (4,1);
\node[label=right:{$0.4$}] at (4,.8) {};
\end{tikzpicture}
\end{center}

This instance does not admit an MMS allocation.
A reasonable allocation may be the following:
\begin{itemize}
\item Agent~$1$ receives bundle~$\{g_4, g_5\}$ and gets utility~$0.6 + 0.6 = 1.2$.
\item Agents~$2$ and~$3$ divide the remaining goods evenly, e.g., agent~$2$ receives~$g_2$ and one half of~$g_1$ while agent~$3$ receives~$g_3$ and the other half of~$g_1$.
Each of the agents gets utility~$0.6 + 0.3 = 0.9$.
\end{itemize}
This is a 0.9-MMS allocation.
It can be verified that this allocation also satisfies PO.
\end{example}

It is well-known that an agent's maximin share is NP-hard to compute, even when all goods are objectively indivisible for the agents~\citep{KurokawaPrWa18}.
Nonetheless, computing an agent's MMS value does not need the information of subjective divisibility.
As a result, by utilizing the polynomial-time approximation scheme (PTAS) of \citet[Lemma~4]{BeiLiLu21}, for any~$i \in N$ and constant~$\varepsilon > 0$, we can compute in our setting a partition $(P_1, P_2, \dots, P_n)$ of goods~$M$ such that $\min_{j \in [n]} u_i(P_j) \geq (1 - \varepsilon) \cdot \MMS_i(n, M)$.
For the sake of being self-contained, more details can be found in \Cref{app:MMS-values}.

\bigskip

The remainder of this section is organized as follows.
We start in \Cref{sec:MMS:pre-processing} by providing the core subroutine which handles the allocation of high-valued goods.
Next, as a warm-up, we establish in \Cref{sec:MMS:2-agents} a tight $\frac{2}{3}$-approximation to MMS for two agents.
In \Cref{sec:MMS:3-agents}, we present our main result---a constructive algorithm that always produces a $\frac{2}{3}$-MMS allocation for three agents, matching our impossibility result.
Then, in \Cref{sec:MMS:n-agents}, we devise a $\frac{1}{2}$-MMS algorithm for any number of agents.
Finally, in \Cref{sec:MMS:discussion}, we discuss related considerations of computation and economic efficiency.

\subsection{Pre-processing High-Valued Goods}
\label{sec:MMS:pre-processing}

A core subroutine, which will be utilized in \Cref{alg:2-agent-2/3-MMS,alg:2/3-MMS,alg:n-agent-1/2-MMS} for the two-, three-, and $n$-agent cases presented in this section, allocates ``high-valued goods'' to some agents, and then reduces the given instance to a smaller instance where there is no more high-valued goods.
In other words, if a good is of high value for several agents and these agents regard the good as divisible, dividing this single good among the agents may already satisfy them.
A caveat of reducing an instance to a smaller size is that the remaining agents' MMS in the reduced instance may be lower, in contrast to the case of allocating (objectively) indivisible goods.\footnote{Since most MMS literature on allocating indivisible goods relies on the fact that agents' MMS in a reduced instance (after removing a \emph{single} good) is at least as much as that in the original instance, it is tempting to believe it also holds in our model.
This is, however, not the case if both divisible and indivisible goods are presented, even if the (in)divisibility of the goods is objective as in~\citep{BeiLiLu21}.
We therefore do \emph{not} re-compute the remaining agents' MMS in the reduced instance, and instead always use their MMS computed in the \emph{original} instance throughout the execution of our algorithms.}

\begin{algorithm}[t]
\caption{$\highValuedAlg(\beta, N, M, (\MMS_i)_{i \in N})$}
\label{alg:high-valued}
\DontPrintSemicolon

\KwIn{Parameter~$\beta \in [0, 1]$, agents~$N$, goods~$M$, and maximin share~$(\MMS_i)_{i \in N}$.}
\KwOut{A $\beta$-MMS allocation $(A_i)_{i \in N'}$ for agents~$N' \subseteq N$ and a (reduced) instance with agents~$N'' = N \setminus N'$ and goods~$M'' = M \setminus \bigcup_{i \in N'} A_i$.
}

$\forall i \in N, A_i \gets \emptyset$\;

$N' \gets \emptyset$, $N'' \gets N$, $M'' \gets M$\;

\While{$|N''| \geq 2$ and $\exists a \in N'', g \in M''$ such that $\utility{a}{g} \geq \beta \cdot \MMS_a$}{
	Each agent~$i \in N''$ claims the least fraction of good~$g$, denoted as~$\pieceOfGoodForAgent{g}{i}$, such that $\utility{i}{\pieceOfGoodForAgent{g}{i}} \geq \beta \cdot \MMS_i$, or reports infinity if $\utility{i}{g} < \beta \cdot \MMS_i$.\; \label{ALG-n:high-valued:cut}
	$i^* \gets \argmin_{i \in N''} \length{\pieceOfGoodForAgent{g}{i}}$ \tcp*{Arbitrary tie-breaking.}
	$A_{i^*} \gets \pieceOfGoodForAgent{g}{i^*}$, $g \gets g \setminus \pieceOfGoodForAgent{g}{i^*}$, $N'' \gets N'' \setminus \{i^*\}$, $M'' \gets M'' \setminus \pieceOfGoodForAgent{g}{i^*}$, $N' \gets N' \cup \{i^*\}$\;
}

\If{$|N''| = 1$}{
	Give all remaining goods~$M''$ to the last agent in~$N''$.\;
	$N'', M'' \gets \emptyset$, and $N' \gets N$\;
}

\Return{$(N', (A_i)_{i \in N'}, N'', M'')$}
\end{algorithm}

Let $\beta \in [0, 1]$.
A good is said to be \emph{$\beta$-high-valued} if there exists some agent who values it at least $\beta$ times her MMS (computed in the original instance).
We present \Cref{alg:high-valued}, which outputs a $\beta$-MMS allocation of $\beta$-high-valued goods for (a subset of) the agents and a (reduced) instance in which the remaining agents do not have any $\beta$-high-valued good.

\begin{lemma}
\label{lem:n-agent-high-valued}
Let $\beta \in [0, 1]$.
Given any instance with agents~$N = [n]$, goods~$M$, and the agents' maximin share values~$(\MMS_i)_{i \in N}$,
\Cref{alg:high-valued} computes an allocation~$(A_i)_{i \in N'}$ for agents~$N' \subseteq N$ and a (reduced) instance with agents~$N'' = N \setminus N'$ and goods~$M'' = M \setminus \bigcup_{i \in N'} A_i$ such that the following conditions hold:
\begin{enumerate}[label=(\roman*),ref=condition~(\roman*)]
\item\label{item:beta-MMS}
for each~$i \in N'$, $\utility{i}{A_i} \geq \beta \cdot \MMS_i$;

\item\label{item:k-MMS}
for each~$i \in N''$, $\utility{i}{M''} \geq |N''| \cdot \MMS_i$;

\item\label{item:good-value-UB-beta-MMS}
for each~$i \in N''$ and~$g \in M''$, $\utility{i}{g} < \beta \cdot \MMS_i$.
\end{enumerate}
\end{lemma}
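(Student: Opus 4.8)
The plan is to isolate one invariant that makes all three conditions fall out at once. As a preliminary I would record that $\utility{i}{M}\ge n\cdot\MMS_i$ for every $i\in N$: in an MMS partition $(P_1,\dots,P_n)$ of agent~$i$ every part is worth at least $\MMS_i$, and by additivity $\utility{i}{M}=\sum_{g\in M}\utility{i}{g}\ge\sum_{j\in[n]}\utility{i}{P_j}\ge n\cdot\MMS_i$ (splitting an indivisible good across parts only lowers the summed value, which is why the middle inequality goes the right way). I would also note that the \texttt{while} loop runs at most $n$ times, since each pass removes one agent from~$N''$.

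The core claim is the following \emph{invariant}: at every point during the execution of \Cref{alg:high-valued}, and for every agent $i$ that is currently in $N''$, one has $\utility{i}{M''}\ge|N''|\cdot\MMS_i$. Granting this, the lemma follows quickly. \ref{item:k-MMS} is just the invariant applied to the final $N''$ and $M''$. For \ref{item:good-value-UB-beta-MMS}: if the loop halts with $|N''|\ge2$ it does so because no agent of $N''$ values any remaining good at $\ge\beta\cdot\MMS$, which is precisely the statement; and if $|N''|=1$ the \texttt{if} branch empties $N''$, so the statement is vacuous. For \ref{item:beta-MMS}: an agent added to $N'$ inside the loop receives its own claimed piece $\pieceOfGoodForAgent{g}{i^*}$, and this claim is finite---hence worth at least $\beta\cdot\MMS_{i^*}$ by its very definition---because the loop condition guarantees that some agent of $N''$ has a finite claim on the processed good $g$ and $i^*$ takes the \emph{shortest} such claim; the last agent (if any) who receives all of $M''$ is handled by the invariant with $|N''|=1$ together with $\beta\le1$.

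To prove the invariant, fix the agent $i$ and the instant in question, and suppose $t:=n-|N''|$ removal events have occurred so far. Let $T\subseteq\indGoodsOf{i}$ be the set of agent~$i$'s indivisible goods from which some event removed a positive-length piece, put $p:=|T|$ and $V_i:=\sum_{g\in T}\utility{i}{g}$. The amortization is: each removal event processes a single good, so at most $p$ of the $t$ events are the \emph{first} to touch a good of $T$; every remaining event either re-touches an already-shortened indivisible good of $i$ (which $i$ now values at $0$, so it costs $i$ nothing more) or removes a piece of a divisible good of $i$, in which case the value lost is at most that of the minimal claim $i$ could make on that divisible good, namely at most $\beta\cdot\MMS_i\le\MMS_i$ (using again that $\pieceOfGoodForAgent{g}{i^*}$ is no longer than $i$'s own hypothetical claim). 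Hence $i$ loses in total at most $V_i+(t-p)\cdot\MMS_i$. On the other side, with $M^-:=M\setminus\bigcup_{g\in T}g$ additivity gives $\utility{i}{M}=\utility{i}{M^-}+V_i$, and restricting $i$'s MMS partition to $M^-$ gives $\utility{i}{M^-}\ge(n-p)\cdot\MMS_i$: at most $p$ of the $n$ parts contain a good of $T$ in full, while each of the other parts keeps value $\ge\MMS_i$ because deleting only proper sub-pieces of indivisible goods does not change a part's value. Combining the two estimates, $\utility{i}{M''}=\utility{i}{M}-(\text{loss})\ge\utility{i}{M^-}-(t-p)\cdot\MMS_i\ge(n-p)\cdot\MMS_i-(t-p)\cdot\MMS_i=(n-t)\cdot\MMS_i=|N''|\cdot\MMS_i$.

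The crux, and the only genuinely non-routine step, is this invariant, because the naive per-step argument breaks down: one would like a single removal event to cost agent $i$ at most $\MMS_i$, but that is false, since $i$ may lose an entire good that \emph{she} regards as indivisible and values strictly above her maximin share. The amortized count is exactly what absorbs this: any one removal event can be the first to touch at most one of $i$'s indivisible goods, and a single part of $i$'s MMS partition that bundles several such goods "pays for" them collectively. A subsidiary annoyance I would deal with inline is that goods get split as the algorithm runs, so one must keep a good's total value distinct from the value of its current remnant, and remember that, for agent $i$, a partially removed indivisible good contributes nothing thereafter.
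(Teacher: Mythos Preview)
Your proof is correct and follows essentially the same approach as the paper. Both arguments hinge on the same amortization: for a surviving agent, separate the removal events into those that touch one of her indivisible goods (which together ruin at most that many bundles in her MMS partition) versus those that remove a piece of a divisible good (each costing at most $\beta\cdot\MMS_i$), and combine. Your parameter $p$ (distinct indivisible goods touched) is slightly sharper than the paper's $s$ (iterations touching indivisible goods, so $s\ge p$), and you spell out the re-touch case explicitly, but the underlying computation is the same and yields the same final bound $(n-t)\cdot\MMS_i$.
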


\begin{proof}
\Cref{alg:high-valued} starts by initializing $N'$ (resp.,~$N''$ and~$M''$) to $\emptyset$ (resp.,~$N$ and~$M$).
Clearly, an allocation for the empty set $N' = \emptyset$ satisfies $\beta$-MMS vacuously.
Additionally, $\utility{i}{M''} \geq |N''| \cdot \MMS_i$ for all~$i \in N''$.
If \ref{item:good-value-UB-beta-MMS} holds, we immediately have the desired output satisfying all three conditions.
We therefore assume \ref{item:good-value-UB-beta-MMS} is violated.
In other words, some agent~$a \in N''$ values some good~$g \in M''$ at least~$\beta \cdot \MMS_a$.

By the design of the algorithm, the agent~$i^*$ identified in each iteration of the algorithm receives a piece of good worth at least $\beta \cdot \MMS_{i^*}$.\footnote{Note that \cref{ALG-n:high-valued:cut} uses the ``greater-than-or-equal-to'' sign in that if an agent regards good~$g$ as indivisible, the agent may not be able to claim a piece of good~$g$ worth \emph{exactly} $\beta$ times her MMS; the agent then claims the whole good~$g$ as an eligible piece when \cref{ALG-n:high-valued:cut} executes.
On the other hand, when an agent regards good~$g$ as divisible, she is able to claim a piece worth exactly $\beta$ times her MMS if good~$g$ is valuable enough.
Subjective divisibility plays a role here.}
Next, both the agent and her good are removed from the instance; $N'$, $N''$, and $M''$ are updated accordingly.
When the algorithm terminates and $|N''| \geq 2$, \ref{item:good-value-UB-beta-MMS} is satisfied.
Clearly, we have a $\beta$-MMS allocation for those agents who are removed from the instance in the \verb|while|-loop.

We now proceed to show \ref{item:k-MMS} holds after the last \verb|while|-loop being executed (so $|N''| \geq 1$), that is, $\utility{i}{M''} \geq |N''| \cdot \MMS_i$ for all~$i \in N''$.
Fix any~$j \in N''$ and her MMS partition of the \emph{original} goods~$M$.
There are $|N'|$ iterations in total.
In each of the iterations, \emph{at most} a single good is removed from the instance, and the good may be divisible or indivisible for~$j$.
Let $s$ be the number of iterations that $j$'s indivisible goods are considered and allocated.
Therefore, at most~$s$ many of $j$'s indivisible goods are removed from the instance.
Removing those goods only affects at most~$s$ MMS bundles in her MMS partition, so the agent values the remaining goods for now at least $(n - s) \cdot \MMS_j$.
Next, in the other $|N'| - s$ iterations, the allocated goods are divisible for~$j$.
Since each removed piece~$\pieceOfGoodForAgent{g}{i^*}$ is worth at most $\beta \cdot \MMS_j$, we conclude that
\[
\utility{j}{M''} \geq (n - s) \cdot \MMS_j - (|N'| - s) \cdot \beta \cdot \MMS_j \geq (n - |N'|) \cdot \MMS_j = |N''| \cdot \MMS_j.
\]

If $|N''| = 1$, the last agent receives all remaining goods and gets utility at least her MMS due to the above argument.
It implies that the final output allocation remains $\beta$-MMS.
\end{proof}

In the above proof when we show \ref{item:k-MMS}, in each iteration, (the remaining) agents have their own subjective divisibility over the good (in this iteration) to be divided.
We obtain a succinct argument due to our observation that it suffices to fix any agent in the reduced instance, and consider the number of the agent's indivisible goods being divided in \Cref{alg:high-valued}.

\paragraph{Integral Bundle}
When \Cref{alg:high-valued} (and subsequent algorithms introduced in this section) reduce an instance to a smaller size, a good may have been \emph{partially} allocated to some agent(s) but there is still some remaining part of the good in the reduced instance.
Given any reduced instance with the \emph{current} set of goods~$M'$, we refer to a bundle as an \emph{integral bundle} if it consists of each good~$g \in M'$ in its entirety.

\subsection{Warm-up: Two Agents}
\label{sec:MMS:2-agents}

We show a tight $\frac{2}{3}$-approximate MMS guarantee for two agents, and begin with the upper bound.

\begin{theorem}
\label{thm:2-agent-impossibility}
In the two-agent case, the worst-case MMS approximation guarantee is at most~$2/3$.
\end{theorem}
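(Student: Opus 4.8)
The plan is to exhibit a concrete two-agent instance in which every allocation gives some agent strictly less than $\frac{2}{3}$ of her maximin share, and then argue no allocation can do better. The example in the introduction (two goods, agent~$1$ divisible, agent~$2$ indivisible, both goods worth~$1$ to each agent) already shows that a \emph{naive} cut-and-choose fails, but there a good allocation exists; so I need a richer instance. The natural candidate is to use three goods. Let both agents value every good at~$1$. Let agent~$1$ regard all three goods as divisible, and let agent~$2$ regard all three goods as indivisible. Then $\MMS_1 = 3/2$ (agent~$1$ splits the three goods into two parts of value~$3/2$ each, using divisibility freely), and $\MMS_2 = 1$ (agent~$2$ puts one whole good in one part and two whole goods in the other; the worst part is worth~$1$). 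The point is that agent~$2$ only extracts value from \emph{whole} goods, so in any allocation agent~$2$ receives an integer number $k \in \{0,1,2,3\}$ of whole goods, worth~$k$, and agent~$1$ receives the remaining ``mass'' $3-k$ worth of (possibly fractional) goods, worth $3-k$.

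**Checking that no allocation beats $2/3$.**

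Now I would enumerate: if $k = 0$ then agent~$2$ gets value~$0 < \frac{2}{3}\cdot 1$; if $k = 3$ then agent~$1$ gets value~$0 < \frac{2}{3}\cdot\frac{3}{2} = 1$; if $k = 1$ then agent~$1$ gets value~$2 \ge 1$ (fine) but agent~$2$ gets value~$1 \ge \frac{2}{3}$ (also fine) — so this case is \emph{not} bad, and I need to rule it out or adjust the instance. This is the main obstacle: the symmetric three-identical-goods instance actually admits a $\frac{2}{3}$-MMS (even $1$-MMS for agent~$2$) allocation, namely $k=1$, so it does not prove the bound. I expect the real construction needs to make agent~$1$'s MMS larger relative to what agent~$1$ can salvage once agent~$2$ has been given a good, or to use a few goods of differing values so that handing agent~$2$ any good that satisfies her necessarily strips agent~$1$ below the threshold. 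A cleaner route: give agent~$2$ two goods worth (say) $1$ and $1$ that she finds indivisible, and give agent~$1$ those same two goods as divisible plus tune agent~$1$'s values so $\MMS_1$ is attained only by a split that agent~$2$ would block. One should look for the instance where the ``$2/3$'' genuinely appears as a ratio — e.g., agent~$2$ must get a good worth a $1/3$-fraction of agent~$1$'s pie, leaving agent~$1$ with $2/3$.

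**Writing up the argument.**

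Once the right instance is fixed, the write-up is short: compute $\MMS_1$ and $\MMS_2$ explicitly (each is a small optimization over $2$-partitions, easy because all goods are homogeneous and valuations additive); observe that in any allocation agent~$2$'s bundle contributes value only via goods she receives \emph{in full}, so her realized value is a sum of whole-good values, while agent~$1$'s realized value is her total value minus whatever whole goods went to agent~$2$ (plus she can of course also take fractions, but the fractional leftovers of goods agent~$2$ partly received are worthless to agent~$2$, so WLOG agent~$1$ gets everything agent~$2$ does not get entirely); then do a finite case split over which goods agent~$2$ receives in full, and check each case forces $u_1(A_1) < \frac{2}{3}\MMS_1$ or $u_2(A_2) < \frac{2}{3}\MMS_2$. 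I would present the instance in the same small table format used in the introduction for readability, state the two MMS values with a one-line justification each, and conclude with the case analysis. The one subtlety to be careful about in the proof is the handling of goods that agent~$2$ receives only partially: such a fraction is worthless to agent~$2$ and can be reassigned to agent~$1$ without hurting anyone, so the reduction to ``agent~$2$ gets a subset of whole goods'' is without loss of generality and should be stated explicitly.
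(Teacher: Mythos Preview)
Your proposal has a genuine gap: you never produce a working instance. You correctly observe that the three-identical-goods instance (agent~$1$ all-divisible, agent~$2$ all-indivisible) admits a $1$-MMS allocation, but then you retreat to vague suggestions (``tune agent~$1$'s values'', ``look for the instance where the $2/3$ genuinely appears'') without closing the deal.

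More importantly, the direction you are searching in cannot succeed. If agent~$1$ regards every good as divisible and agent~$2$ regards every good as indivisible, then a $1$-MMS allocation \emph{always} exists: take any MMS partition~$(T_1,T_2)$ of agent~$2$, and hand agent~$2$ whichever of~$T_1,T_2$ has smaller value for agent~$1$. Agent~$2$ receives at least~$\MMS_2$, and agent~$1$ receives the complementary part, worth at least half of her total and hence at least~$\MMS_1$. So no amount of value-tuning within this divisibility pattern will yield the impossibility.

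The missing idea is that the conflict must be \emph{two-sided}: each agent must have a divisible good that the other regards as indivisible. The paper's instance has three goods, each worth~$2/3$ to both agents; $g_1$ is indivisible for both, $g_2$ is indivisible for agent~$1$ but divisible for agent~$2$, and $g_3$ is the reverse. Both agents then have $\MMS=1$ (put the two indivisible goods in separate parts and split the divisible one). The common indivisible good~$g_1$ must go whole to someone, say agent~$1$; then no matter how~$g_2,g_3$ are split, either agent~$1$ is stuck at~$2/3$ (if she does not get~$g_2$ whole) or agent~$2$ is (if she does not get~$g_3$ whole), and they cannot both be satisfied because agent~$1$ getting~$g_2$ whole plus agent~$2$ getting~$g_3$ whole leaves agent~$2$ at exactly~$2/3$. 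Once you have this instance, the case analysis you outline is indeed short and correct.
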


\begin{proof}
Consider the following instance in which the maximin share of each agent is~$1$.
This can be seen from the fact that for each agent~$i \in N$, $\MMS_i(2, M) \leq \frac{u_i(M)}{3} = 1$, and $\MMS_i(2, M) = 1$ is achieved by partitioning $i$'s two indivisible goods into two bundles and partitioning $i$'s divisible good into the two bundles equally.
\begin{center}
\begin{tabular}{@{}c|*{3}{l}@{}}
\toprule
& $g_1$ & $g_2$ & $g_3$ \\
\midrule
Agent~$1$ & ind, $2/3$ & ind, $2/3$ & div, $2/3$ \\
Agent~$2$ & ind, $2/3$ & div, $2/3$ & ind, $2/3$ \\
\bottomrule
\end{tabular}
\end{center}
We assume without loss of generality that good~$g_1$ is given to agent~$1$ because both agents regard it as indivisible and it is meaningless to split the good between the agents in any allocation.
Next, as the agents have conflicting subjective divisibility towards goods~$g_2$ and~$g_3$, no matter how these goods are allocated, some agent gets a utility of at most~$2/3$.
\end{proof}

Later in \Cref{sec:MMS:n-agents}, we will generalize the example and show that the worst-case MMS approximation guarantee is at most~$2/3$ for $n \geq 2$ agents.

\begin{algorithm}[t]
\caption{$\twoAgentTwoThirdsAlloc(N, M, [~])$}
\label{alg:2-agent-2/3-MMS}
\DontPrintSemicolon

\KwIn{Two agents~$N$, goods~$M$, and the \emph{optional} third argument taking as input $(\MMS_i)_{i \in N}$ computed in the \emph{original} instance.}
\KwOut{A $\frac{2}{3}$-MMS allocation for the two agents~$N$.}

\ForEach{$i \in N$}{
	$A_i \gets \emptyset$\;
	Compute the agent's maximin share $\MMS_i(2, M)$ if the third argument is empty.\; \label{ALG-2:compute-MMS}
}

\eIf{there exists a $\frac{2}{3}$-high-valued good}{
	$(A_1, A_2) \gets \highValuedAlg(2/3, N, M, (\MMS_i)_{i \in N})$ \tcp*{\Cref{alg:high-valued}.}
}(\tcp*[f]{Every~$a \in N$ values every good less than $2/3 \cdot \MMS_a$.}){
	$i^* \gets \argmin_{a \in N} \frac{\utility{a}{M}}{\MMS_a}$ \tcp*{$\utility{i^*}{M} \geq 4/3 \cdot \MMS_{i^*}$. ``Chooser.''}
	$j^* \gets N \setminus \{i^*\}$ \tcp*{$\utility{j^*}{M} \geq 2 \cdot \MMS_{j^*}$. ``Cutter.''}
	Add one good integrally at a time to an empty bundle~$B$ until $\frac{2}{3} \cdot \MMS_{j^*} \leq \utility{j^*}{B} \leq \frac{4}{3} \cdot \MMS_{j^*}$.\; \label{ALG-2:form-integral-bundle}
	Give agent~$i^*$ her preferred bundle between~$B$ and~$M \setminus B$ and agent~$j^*$ the other.\;
}

\Return{$(A_1, A_2)$}
\end{algorithm}

We match the upper bound by showing that a $\frac{2}{3}$-MMS allocation always exists for two agents and can be found via \Cref{alg:2-agent-2/3-MMS}, which follows a \emph{cut-and-choose} framework but need a more refined cut step to ensure that the partition is integral and no value loses due to agents' subjective divisibility.
The algorithm will later be used as a subroutine in \Cref{alg:2/3-MMS} which computes a $\frac{2}{3}$-MMS allocation for three agents, and thus is presented more generally by including the \emph{optional} third argument, which, when specified, takes as input agents' MMS computed in the \emph{original} instance.
Specifically, when \Cref{alg:2-agent-2/3-MMS} takes as input an \emph{original} $2$-agent instance, it computes the two agents' MMS in \cref{ALG-2:compute-MMS} and uses the values accordingly.
When it is invoked as a subroutine in \Cref{alg:2/3-MMS} and takes as input a \emph{reduced} $2$-agent instance, as we noted earlier, it does not re-compute the remaining two agents' MMS given this reduced instance and uses those original MMS values passed to \Cref{alg:2-agent-2/3-MMS}.
Note also that the \verb|if|-statements will never be invoked when processing the reduced instance.

It turns out that \Cref{alg:2-agent-2/3-MMS} has a stronger fairness guarantee for instances without any $\frac{2}{3}$-high-valued good, which will prove useful when analyzing \Cref{alg:2/3-MMS}.

\begin{theorem}
\label{thm:2-agent-2/3-MMS}
For two agents, \Cref{alg:2-agent-2/3-MMS} computes a $\frac{2}{3}$-MMS allocation.

In addition, if an instance does not contain any $\frac{2}{3}$-high-valued good, \Cref{alg:2-agent-2/3-MMS} computes an allocation such that $\utility{i^*}{A_{i^*}} \geq \frac{\utility{i^*}{M}}{2}$, where $i^* = \argmin_{a \in N} \frac{\utility{a}{M}}{\MMS_a}$.
\end{theorem}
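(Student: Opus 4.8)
The plan is to split along the two branches of the \texttt{if}-statement in \Cref{alg:2-agent-2/3-MMS}, using throughout the inequality $\utility{a}{M} \ge 2\,\MMS_a$ for each agent $a$. When the algorithm is run on an original $2$-agent instance (so $\MMS_a = \MMS_a(2,M)$), I would derive this from the standard averaging bound: for any $2$-partition $(P_1,P_2)$, splitting an indivisible good across the two parts only loses value, so $\utility{a}{P_1} + \utility{a}{P_2} \le \utility{a}{M}$ and hence $\MMS_a(2,M) \le \frac12 \utility{a}{M}$. When the algorithm is instead invoked on a reduced instance inside \Cref{alg:2/3-MMS} (so $\MMS_a$ is the value passed in), the same inequality is \ref{item:k-MMS} of \Cref{lem:n-agent-high-valued} with $|N''| = 2$; moreover \ref{item:good-value-UB-beta-MMS} says the reduced instance has no $\frac23$-high-valued good, so in that mode the algorithm only ever enters the \texttt{else}-branch.

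For the branch where a $\frac23$-high-valued good exists, I would just unwind $\highValuedAlg(2/3, N, M, (\MMS_i)_{i \in N})$. With two agents the \texttt{while}-loop of \Cref{alg:high-valued} runs exactly once: it starts with $|N''| = 2$ and the loop condition holds by assumption, and one iteration drops $|N''|$ to $1$; the final \texttt{if} clause then hands all of $M''$ to the remaining agent and sets $N' = N$, so the output is a complete allocation. By \ref{item:beta-MMS} the agent removed in the loop receives value at least $\frac23\MMS$ of hers, and by \ref{item:k-MMS} (with $|N''| = 1$) the last agent receives at least her full $\MMS$; the allocation is $\frac23$-MMS.

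The substance lies in the \texttt{else}-branch, i.e.\ when no good is $\frac23$-high-valued, so every agent $a$ values every good strictly below $\frac23\MMS_a$ (in particular $\MMS_a > 0$ since $M \neq \emptyset$). Here I would take $i^* = \argmin_a \utility{a}{M}/\MMS_a$ as the chooser and $j^*$ as the cutter, and first record the structural point that $B$ and $M \setminus B$ are both \emph{integral} bundles (each good lies entirely in one of them), so subjective divisibility fragments no one's value and $\utility{a}{B} + \utility{a}{M \setminus B} = \utility{a}{M}$ for every $a$. Next I would verify that the bundle $B$ of \cref{ALG-2:form-integral-bundle} exists: adding whole goods to an initially empty $B$, the value $\utility{j^*}{B}$ climbs from $0$ and, because adding all of $M$ yields $\utility{j^*}{M} \ge 2\MMS_{j^*} > \frac23\MMS_{j^*}$, it must reach $\frac23\MMS_{j^*}$; since the good that first brings it to at least $\frac23\MMS_{j^*}$ is worth less than $\frac23\MMS_{j^*}$ to $j^*$, at that moment $\frac23\MMS_{j^*} \le \utility{j^*}{B} < \frac43\MMS_{j^*}$. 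Then $\utility{j^*}{M \setminus B} = \utility{j^*}{M} - \utility{j^*}{B} \ge 2\MMS_{j^*} - \frac43\MMS_{j^*} = \frac23\MMS_{j^*}$, so $j^*$ is guaranteed $\frac23\MMS_{j^*}$ whichever of the two bundles she ends up with; meanwhile $\utility{i^*}{A_{i^*}} = \max\{\utility{i^*}{B}, \utility{i^*}{M \setminus B}\} \ge \frac12\utility{i^*}{M} \ge \MMS_{i^*}$. This settles the $\frac23$-MMS guarantee in all cases, and since the hypothesis of the ``in addition'' statement forces exactly this branch with $i^* = \argmin_a \utility{a}{M}/\MMS_a$, the inequality $\utility{i^*}{A_{i^*}} \ge \frac12\utility{i^*}{M}$ just obtained is precisely that claim.

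I expect the main obstacle to be the cut step and its interaction with subjective divisibility: goods must be added to $B$ \emph{integrally} so that the two resulting bundles jointly retain the full value $\utility{a}{M}$ of each agent---without this a cut-and-choose split could strand value and the $\frac23$ guarantee could fail---and showing a legal stopping point exists requires pairing ``no good is $\frac23$-high-valued'' with $\utility{j^*}{M} \ge 2\MMS_{j^*}$. The only other thing to be careful about is the bookkeeping between the two invocation modes, which the first paragraph handles.
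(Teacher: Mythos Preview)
Your proof is correct and follows the same approach as the paper's: dispatch the high-valued branch via \Cref{lem:n-agent-high-valued}, and in the \texttt{else}-branch verify (using that each good is worth $<\frac23\MMS_{j^*}$ and $\utility{j^*}{M}\ge 2\MMS_{j^*}$) that an integral bundle $B$ with $\utility{j^*}{B}\in[\frac23\MMS_{j^*},\frac43\MMS_{j^*}]$ exists, so that both $B$ and $M\setminus B$ give $j^*$ at least $\frac23\MMS_{j^*}$ while $i^*$, choosing first, gets at least $\frac12\utility{i^*}{M}$.

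One small remark on your opening digression about subroutine invocations (which is not needed for the theorem itself): the inequality $\utility{a}{M}\ge 2\MMS_a$ for \emph{both} remaining agents, via \ref{item:k-MMS} of \Cref{lem:n-agent-high-valued}, covers only the invocation at \cref{ALG-3:high:call-2-agent-alg} of \Cref{alg:2/3-MMS}; at \cref{ALG-3:good-bundle-call-2-agent-alg} (after removing a \goodBundle) the chooser is only guaranteed ratio $\ge\frac43$ by \Cref{def:reducible-bundle}, so your final ``$\ge\MMS_{i^*}$'' would become ``$\ge\frac23\MMS_{i^*}$'' there---exactly what the paper writes, and exactly why the ``in addition'' clause (which you prove correctly) is what \Cref{lem:3-agent-good-bundle} actually invokes.
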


\begin{proof}
Given any $2$-agent instance, for any~$a \in N$, we have $\utility{a}{M} \geq 2 \cdot \MMS_a$ by the definition of maximin share.
If an instance has at least one $\frac{2}{3}$-high-valued good, our result immediately holds according to \Cref{lem:n-agent-high-valued}.
Now, consider instances that do not contain any $\frac{2}{3}$-high-valued good, i.e., both agents value each good less than $2/3$ times their own MMS.
Because $\utility{j^*}{M} \geq 2 \cdot \MMS_{j^*}$, \cref{ALG-2:form-integral-bundle} of \Cref{alg:2-agent-2/3-MMS} always succeeds in finding an integral bundle~$B$ such that $\utility{j^*}{B} \in [2/3 \cdot \MMS_{j^*}, 4/3 \cdot \MMS_{j^*}]$, which also means that $\utility{j^*}{M \setminus B} = \utility{j^*}{M} - \utility{j^*}{B} \geq 2/3 \cdot \MMS_{j^*}$.
In other words, both integral bundles~$B$ and~$M \setminus B$ are worth at least $2/3 \cdot \MMS_{j^*}$ for agent~$j^*$.
Since agent~$i^*$ gets her preferred bundle between~$B$ and~$M \setminus B$, she receives a utility of at least $\utility{i^*}{M} / 2 \geq 2/3 \cdot \MMS_{i^*}$.
\end{proof}

At first glance, the fact that the \verb|else|-statements of \Cref{alg:2-agent-2/3-MMS} try hard to find an integral partition between two agents may render the agents' subjective divisibility less important and possibly playing no role.
We would like to note that this is not the case, as the \verb|else|-statements pay particular attention to distinguishing between the ``cutter'' and the ``chooser''.
This asymmetry is, in fact, a result of subjective divisibility.
More specifically, as a subroutine later to be used in the $3$-agent algorithm, the agent who values the remaining goods in a reduced $2$-agent instance less than $2$-MMS has to be the ``chooser''.
This can happen because when we remove the first agent and their corresponding bundle in the original $3$-agent instance, subjective divisibility may prevent us from shrink the bundle further.

\subsection{Three Agents: An Algorithmic Result}
\label{sec:MMS:3-agents}

\begin{algorithm}[t]
\caption{$\twoThirdsAlloc(N, M)$}
\label{alg:2/3-MMS}
\DontPrintSemicolon

\KwIn{Agents~$N = [3]$, goods~$M = \{g_1, g_2, \dots, g_m\}$, subjective divisibility, and utilities.}
\KwOut{A $\frac{2}{3}$-MMS allocation for agents~$N$.}

\ForEach{$i \in N$}{
	$A_i \gets \emptyset$\;
	Compute the agent's maximin share~$\MMS_i(3, M)$.\;
}

\BlankLine

\eIf(\tcp*[f]{Process $\frac{2}{3}$-high-valued goods.}){$\exists a \in N, g \in M, \utility{a}{g} \geq 2/3 \cdot \MMS_a$}{ \label{ALG-3:if-high}
	$(N', (A_i)_{i \in N'}, N'', M'') \gets \highValuedAlg(2/3, N, M, (\MMS_i)_{i \in N})$ \tcp*{\Cref{alg:high-valued}.}
	\lIf(\tcp*[f]{\Cref{alg:2-agent-2/3-MMS}.}){$|N''| = 2$}{$(A_i)_{i \in N''} \gets \twoAgentTwoThirdsAlloc(N'', M'', (\MMS_i)_{i \in N''})$} \label{ALG-3:high:call-2-agent-alg}
}(\tcp*[f]{$\forall a \in N, g \in M, \utility{a}{g} < 2/3 \cdot \MMS_a$.}){
	\eIf{there exists a \goodBundle~$B$ with respect to some agent~$i \in N$}{ \label{ALG-3:if-good-bundle}
		$A_i \gets B$, $N \gets N \setminus \{i\}$, $M \gets M \setminus B$\;
		Invoke $\twoAgentTwoThirdsAlloc(N, M, (\MMS_a)_{a \in N})$ \tcp*{\Cref{alg:2-agent-2/3-MMS}.} \label{ALG-3:good-bundle-call-2-agent-alg}
	}(\tcp*[f]{$\exists o, o' \in M, \forall a \in N, \utility{a}{\{o, o'\}} > \MMS_a$.}){ \label{ALG-3:not-good-bundle}
		$\widehat{M} \gets \{g \in M \mid \forall a \in N, \utility{a}{g} > \frac{\MMS_a}{3}\}$. We have $o, o' \in \widehat{M}$ and assume w.l.o.g.\ that agents~$i, j \in N$ have a \emph{common} divisible good~$d \in \widehat{M}$. \tcp*{$|\widehat{M}| \geq 5$ (\Cref{lem:3-agent-handle-no-good-bundle}).}
		Pick~$o, o'$. If $d \in \{o, o'\}$, pick arbitrarily another good from~$\widehat{M} \setminus \{o, o'\}$. Denote by~$g_1, g_2$ the two picked goods other than~$d$.\;
		Arrange goods~$g_1, d, g_2$ on a line in this order.\;
		Agent~$i$ cuts the line into two parts~$B, B'$ of equal value, i.e., $\utility{i}{B} = \utility{i}{B'}$.\; \label{ALG-3:common-div:cut}
		Give agent~$j$ her preferred part between~$B$ and~$B'$ and agent~$i$ the other.\;
		The third agent~$N \setminus \{i, j\}$ gets all remaining goods~$M \setminus \{g_1, d, g_2\}$.\; \label{ALG-3:common-div:last-agent}
	}
}

\Return{$(A_1, A_2, A_3)$}
\end{algorithm}

We are now ready to present an algorithm which always produces a $\frac{2}{3}$-MMS allocation for three agents.
The pseudocode can be found in \Cref{alg:2/3-MMS}.
Together with our impossibility result (see \Cref{coro:n-agent-impossibility}), this will lead to a \emph{tight} $\frac{2}{3}$-approximation to MMS for three agents.

\begin{theorem}
\label{thm:3-agent-2/3-MMS}
For three agents, \Cref{alg:2/3-MMS} computes a $\frac{2}{3}$-MMS allocation.
\end{theorem}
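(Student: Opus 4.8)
The plan is to prove the theorem by following the branching structure of \Cref{alg:2/3-MMS} and checking, in each reachable branch, that all three agents end with a bundle worth at least $\tfrac23$ of the maximin share computed for them in the original instance; throughout I use that $\utility{a}{M}\ge 3\cdot\MMS_a$ for every agent $a$. \textbf{Case 1 (some good is $\tfrac23$-high-valued).} The algorithm runs \Cref{alg:high-valued} with $\beta=2/3$ and, if two agents remain, \Cref{alg:2-agent-2/3-MMS} on the reduced instance. Here I would quote \Cref{lem:n-agent-high-valued}: by \ref{item:beta-MMS} every agent in $N'$ already has a $\tfrac23$-MMS bundle, and the algorithm returns with $|N''|\in\{0,2\}$. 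If $|N''|=0$ then $N'=N$ and we are done. If $|N''|=2$, then \ref{item:k-MMS} gives $\utility{a}{M''}\ge 2\MMS_a$ for both remaining agents and \ref{item:good-value-UB-beta-MMS} says the reduced instance has no $\tfrac23$-high-valued good; these are exactly the hypotheses under which the proof of \Cref{thm:2-agent-2/3-MMS} shows that \Cref{alg:2-agent-2/3-MMS}, fed the original MMS values, outputs a $\tfrac23$-MMS allocation.

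\textbf{Case 2a (no $\tfrac23$-high-valued good, but a \goodBundle~$B$ with respect to some agent $i$).} The algorithm hands $B$ to $i$ and runs \Cref{alg:2-agent-2/3-MMS} on the other two agents. I would invoke the definition of a \goodBundle, which guarantees $\utility{i}{B}\ge\tfrac23\MMS_i$ and that the reduced two-agent instance, still free of $\tfrac23$-high-valued goods, meets the value lower bounds required in the else-branch of \Cref{alg:2-agent-2/3-MMS} (the designated ``cutter'' valuing the remaining goods at least $2\MMS$ and the ``chooser'' at least $\tfrac43\MMS$); the argument in the proof of \Cref{thm:2-agent-2/3-MMS} then gives both agents at least $\tfrac23$-MMS.

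\textbf{Case 2b (neither a $\tfrac23$-high-valued good nor a \goodBundle).} This is where the real work lies. I would start from \Cref{lem:3-agent-handle-no-good-bundle}, which supplies two goods $o,o'$ with $\utility{a}{\{o,o'\}}>\MMS_a$ for all $a$, the bound $|\widehat M|\ge 5$ for $\widehat M\coloneqq\{g:\utility{a}{g}>\MMS_a/3 \text{ for every } a\}$, and two agents $i,j$ sharing a divisible good $d\in\widehat M$. The algorithm then selects $g_1,g_2\in\widehat M\setminus\{d\}$ (taking $g_1,g_2=o,o'$ whenever possible), so $\{g_1,d,g_2\}$ contains $\{o,o'\}$ together with a third element of $\widehat M$ and hence $\utility{a}{\{g_1,d,g_2\}}>\MMS_a+\MMS_a/3=\tfrac43\MMS_a$ for $a\in\{i,j\}$. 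I would then treat the three agents in turn. For $i$, who cuts the line $g_1\,d\,g_2$ into equal-value halves: since $\utility{i}{g_1}$ and $\utility{i}{g_2}$ are each below $\tfrac23\MMS_i$ and thus below half of $\utility{i}{\{g_1,d,g_2\}}$, the cut point lies strictly inside $d$, and because $d$ is divisible for $i$ the halving loses no value, so $i$ gets $\utility{i}{\{g_1,d,g_2\}}/2>\tfrac23\MMS_i$. For $j$, who takes her preferred half: since $g_1$ and $g_2$ each lie wholly in one of the two halves and $d$ is divisible for $j$, the halves' values for $j$ sum to $\utility{j}{\{g_1,d,g_2\}}$, so her half is worth at least $\utility{j}{\{g_1,d,g_2\}}/2>\tfrac23\MMS_j$. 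For the third agent $k$, who gets $M\setminus\{g_1,d,g_2\}$: each removed good is worth less than $\tfrac23\MMS_k$ to her and $\utility{k}{M}\ge 3\MMS_k$, so she keeps more than $3\MMS_k-2\MMS_k=\MMS_k\ge\tfrac23\MMS_k$.

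The step I expect to be the main obstacle is Case 2b, specifically its interface with \Cref{lem:3-agent-handle-no-good-bundle}: showing that the absence of both a $\tfrac23$-high-valued good and a \goodBundle really does force the existence of a pair of agents with a \emph{shared divisible} good inside $\widehat M$ --- this is exactly what lets the divider realize an exact halving cut with no loss from subjective divisibility --- and verifying the ``cut falls strictly inside $d$'' claim from the inequalities $\utility{i}{g_1},\utility{i}{g_2}<\tfrac12\utility{i}{\{g_1,d,g_2\}}$ while still leaving the third agent more than her maximin share. The bookkeeping of carrying the original (rather than recomputed) maximin shares through the recursive calls to \Cref{alg:2-agent-2/3-MMS} in Cases 1 and 2a is a secondary concern, already handled by \Cref{lem:n-agent-high-valued} and the definition of a \goodBundle.
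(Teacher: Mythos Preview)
Your proposal is correct and follows the same three-case decomposition as the paper (\Cref{lem:3-agent-2/3-MMS:high,lem:3-agent-good-bundle,lem:alg-cover-instances,lem:3-agent-handle-no-good-bundle}); the only slip is that the structural facts you invoke in Case~2b---the pair $o,o'$, the bound $|\widehat M|\ge5$, and the shared divisible good---come from \Cref{lem:alg-cover-instances} (its conditions (i)--(iii)) together with the derivations in the \emph{proof} of \Cref{lem:3-agent-handle-no-good-bundle}, not from the statement of the latter. Your closing diagnosis of the ``main obstacle'' is exactly where the paper's work lies.
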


The high-level idea of \Cref{alg:2/3-MMS} is, given any three-agent instance, to find one by one suitable agents and bundles of goods for them, and iteratively reduce the instance to smaller ones.
On the one hand, if an instance has some $\frac{2}{3}$-high-valued good, we can assign (a subset of) the good to the corresponding agent, and reduce the problem to the two-agent case, in which we are able to find a desired allocation due to \Cref{thm:2-agent-2/3-MMS}.
On the other hand, when the instance does not contain any $\frac{2}{3}$-high-valued good, we follow the same idea, though find a suitable bundle of goods for some agent in a different way.
To this end, we introduce the concept of a \emph{\goodBundle}.

\begin{definition}[\GoodBundle]
\label{def:reducible-bundle}
An integral bundle~$B$ is a \emph{\goodBundle with respect to agent~$i \in N$} if
\begin{itemize}
\item $\utility{i}{B} \geq 2/3 \cdot \MMS_i$,
\item for some~$j \in N \setminus \{i\}$, $\utility{j}{M \setminus B} \geq 2 \cdot \MMS_j$, and
\item for~$k \in N \setminus \{i, j\}$, $\utility{k}{M \setminus B} \geq 4/3 \cdot \MMS_k$.
\end{itemize}
\end{definition}

In words, a reducible bundle is valuable enough to some agent and when removing the bundle of goods from the instance, the remaining goods are valuable enough to other agents.
As we will see shortly in \Cref{lem:3-agent-good-bundle}, if a $3$-agent instance has a \goodBundle, a $\frac{2}{3}$-MMS allocation is guaranteed to exist for the instance.
Given a $3$-agent instance, a \goodBundle, however, need not exist.\footnote{\label{ft:reducible-bundle-not-exist}
For instance, one may verify that the instance presented in \Cref{ex:MMS:medium-valued} does not have any \goodBundle.}
Such a situation constitutes the most difficult part of the algorithm and the proof of \Cref{thm:3-agent-2/3-MMS}.
We now need to take a closer look at agents' subjective divisibility towards valuable goods.
Specifically, we show that in this case there must exist a ``valuable'' \emph{common} divisible good, which allows us to \emph{simultaneously} remove two agents and three goods from the instance.

\subsubsection{Proof of Theorem~\ref{thm:3-agent-2/3-MMS}}

We prove \Cref{thm:3-agent-2/3-MMS} in three sequential steps.
To be more specific, we show \Cref{alg:2/3-MMS} outputs a $\frac{2}{3}$-MMS allocation when
\begin{itemize}
\item an instance contains some $\frac{2}{3}$-high-valued good (\Cref{lem:3-agent-2/3-MMS:high});

\item an instance does not have any $\frac{2}{3}$-high-valued good but has a \goodBundle (\Cref{lem:3-agent-good-bundle});

\item an instance contains neither a $\frac{2}{3}$-high-valued good nor a \goodBundle (\Cref{lem:alg-cover-instances,lem:3-agent-handle-no-good-bundle}).
\end{itemize}

First, we look at the case where the instance has some $\frac{2}{3}$-high-valued good.

\begin{lemma}
\label{lem:3-agent-2/3-MMS:high}
Given a $3$-agent instance that has at least one $\frac23$-high-valued good, \Cref{alg:2/3-MMS} computes a $\frac{2}{3}$-MMS allocation.
\end{lemma}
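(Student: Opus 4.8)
The plan is to trace through the branch of \Cref{alg:2/3-MMS} that is taken when the guard on \cref{ALG-3:if-high} is satisfied, and show that each agent ends up with value at least $\frac{2}{3}\MMS_i$. In this branch the algorithm first calls $\highValuedAlg(2/3, N, M, (\MMS_i)_{i\in N})$, which by \Cref{lem:n-agent-high-valued} returns a partition of the agents into $N'$ and $N''$ and a reduced instance. By \ref{item:beta-MMS}, every agent in $N'$ already has $\utility{i}{A_i} \geq \frac{2}{3}\MMS_i$, so those agents are done and I only need to worry about $N''$. Note $|N''| \in \{0, 1, 2\}$ since the algorithm (via the \texttt{while}-loop termination condition together with the $|N''|=1$ post-processing step) guarantees at most two agents remain.

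The case analysis on $|N''|$ is then straightforward. If $|N''| = 0$, every agent is in $N'$ and we are done. If $|N''| = 1$, then by the \texttt{if} block at the end of \Cref{alg:high-valued}, the last remaining agent has already been given all of $M''$, and by \ref{item:k-MMS} with $|N''| = 1$ this is worth at least $\MMS_i \geq \frac{2}{3}\MMS_i$; moreover in this case $N' = N$ so in fact \ref{item:beta-MMS} also covers this agent. If $|N''| = 2$, the algorithm invokes $\twoAgentTwoThirdsAlloc(N'', M'', (\MMS_i)_{i\in N''})$ on \cref{ALG-3:high:call-2-agent-alg}, passing the \emph{original} MMS values as the optional third argument. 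By \Cref{thm:2-agent-2/3-MMS}, this produces an allocation with $\utility{i}{A_i} \geq \frac{2}{3}\MMS_i$ for both agents in $N''$ --- the MMS values here are the ones computed in the original instance, exactly what the $\frac{2}{3}$-MMS guarantee in the statement refers to.

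The one point needing a little care --- and the main (mild) obstacle --- is checking that \Cref{alg:2-agent-2/3-MMS} behaves correctly on the reduced instance $(N'', M'')$. The issue is that \Cref{thm:2-agent-2/3-MMS} as stated is about an arbitrary two-agent instance, but here $M''$ may contain partially-allocated goods and the passed-in MMS values are \emph{not} the MMS values of the reduced instance. I would invoke \ref{item:k-MMS}, which gives $\utility{i}{M''} \geq 2\cdot\MMS_i$ for each $i \in N''$ --- this is precisely the inequality $\utility{a}{M} \geq 2\MMS_a$ that the proof of \Cref{thm:2-agent-2/3-MMS} relies on (with $M$ there playing the role of $M''$ and $\MMS_a$ the original values). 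Also, by \ref{item:good-value-UB-beta-MMS}, no good in $M''$ is $\frac{2}{3}$-high-valued for the two remaining agents, so the \texttt{if}-branch of \Cref{alg:2-agent-2/3-MMS} is never triggered and the \texttt{else}-branch (the refined cut-and-choose on integral bundles) executes, which is exactly the regime \Cref{thm:2-agent-2/3-MMS} handles. Hence the $\frac{2}{3}$-MMS guarantee carries over verbatim, and combining the three cases completes the proof.
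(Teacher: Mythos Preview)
Your proof is correct and follows essentially the same approach as the paper's. The paper case-splits on $|N'|\in\{1,2,3\}$ and argues $|N'|\neq 2$ by contradiction (since the \texttt{if} block of \Cref{alg:high-valued} absorbs the lone remaining agent into $N'$), whereas you case-split on $|N''|$ and handle the would-be $|N''|=1$ case directly; this is a purely cosmetic difference, and your explicit verification that \ref{item:k-MMS} and \ref{item:good-value-UB-beta-MMS} supply the hypotheses needed for \Cref{thm:2-agent-2/3-MMS} on the reduced instance is, if anything, a bit more thorough than the paper's version.
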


\begin{proof}
Given the instance, the \verb|if|-condition in \cref{ALG-3:if-high} is evaluated true.
\Cref{alg:2/3-MMS} first invokes the \highValuedAlg (\Cref{alg:high-valued}).
According to \Cref{lem:n-agent-high-valued}, \Cref{alg:high-valued} outputs a $\frac{2}{3}$-MMS allocation for a subset of agents~$N' \subseteq N$ (implying $|N'| \in \{1, 2, 3\}$) and a reduced instance.
It now suffices to show the allocation for the remaining agents~$N \setminus N'$ is also $\frac{2}{3}$-MMS.
If $|N'| = 3$, our result immediately follows according to \Cref{lem:n-agent-high-valued}.
We next prove $|N'| \neq 2$ by contradiction.
Assume $|N'| = 2$ instead.
Since $N'' = N \setminus N'$, we thus have $|N''| = 3 - 2 = 1$.
Put differently, \Cref{alg:high-valued} terminates with one remaining agent who receives nothing during the execution of the algorithm.
This is, however, not possible because \Cref{alg:high-valued} hands the last agent the remaining goods, a contradiction.
Last, in the case that $|N'| = 1$, the current instance involves two agents who (i) value the current set of goods at least twice as much as their own MMS, and (ii) value each good less than $2/3$ times their own MMS.
Therefore, the $\twoAgentTwoThirdsAlloc$ (\Cref{alg:2-agent-2/3-MMS}) takes as input a valid instance and according to \Cref{thm:2-agent-2/3-MMS}, returns a $\frac{2}{3}$-MMS allocation for the remaining two agents, as desired.
\end{proof}

In the following, we assume there is no $\frac{2}{3}$-high-valued good in the instance.
We start with \crefrange{ALG-3:if-good-bundle}{ALG-3:good-bundle-call-2-agent-alg} of \Cref{alg:2/3-MMS}, where we deal with the \goodBundle if there exists one.

\begin{lemma}
\label{lem:3-agent-good-bundle}
Given a $3$-agent instance that does not contain any $\frac23$-high-valued good, if there exists a \goodBundle~$B$ with respect to some agent~$i \in N$, \Cref{alg:2/3-MMS} computes a $\frac{2}{3}$-MMS allocation.
\end{lemma}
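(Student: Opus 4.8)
The plan is to trace the behaviour of \Cref{alg:2/3-MMS} on such an instance, namely \crefrange{ALG-3:if-good-bundle}{ALG-3:good-bundle-call-2-agent-alg}: the algorithm sets $A_i \gets B$, deletes agent~$i$ and the goods of~$B$, and then calls $\twoAgentTwoThirdsAlloc$ on the reduced $2$-agent instance with agents~$\{j,k\} = N \setminus \{i\}$, goods~$M \setminus B$, and the \emph{original} maximin share values. By the first bullet of \Cref{def:reducible-bundle} we have $\utility{i}{A_i} = \utility{i}{B} \geq \frac{2}{3}\MMS_i$, so agent~$i$ is already $\frac{2}{3}$-MMS-satisfied, and it remains to show that the call to \Cref{alg:2-agent-2/3-MMS} allocates $M \setminus B$ so that each of~$j$ and~$k$ receives at least $\frac{2}{3}$ of its \emph{original} maximin share.

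First I would observe that the reduced instance contains no $\frac{2}{3}$-high-valued good: since $\utility{a}{g} < \frac{2}{3}\MMS_a$ for every agent~$a$ and every original good~$g$ by assumption, and every good surviving into $M \setminus B$ is a sub-piece of some original good while the MMS values are kept from the original instance, additivity (monotonicity under sub-pieces) preserves the strict inequality. Hence \Cref{alg:2-agent-2/3-MMS} skips its \texttt{if}-branch and runs the \texttt{else}-branch: it designates $i^* = \argmin_{a \in \{j,k\}} \frac{\utility{a}{M \setminus B}}{\MMS_a}$ as the chooser and $j^*$ as the cutter, forms an integral bundle~$B'$ with $\utility{j^*}{B'} \in [\tfrac{2}{3}\MMS_{j^*}, \tfrac{4}{3}\MMS_{j^*}]$, and hands $i^*$ the better of~$B'$ and $(M \setminus B) \setminus B'$.

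The step I expect to be the main obstacle is checking that the two hypotheses used in the proof of \Cref{thm:2-agent-2/3-MMS} survive the reduction: the cutter must value $M \setminus B$ at least $2\MMS_{j^*}$ (so that \cref{ALG-2:form-integral-bundle} succeeds), and the chooser must value $M \setminus B$ at least $\frac{4}{3}\MMS_{i^*}$ (so that the better half is worth at least $\frac{2}{3}\MMS_{i^*}$). The delicacy is that \Cref{def:reducible-bundle} only guarantees $\utility{j}{M \setminus B} \geq 2\MMS_j$ for the \emph{distinguished} agent~$j$ and $\utility{k}{M \setminus B} \geq \frac{4}{3}\MMS_k$ for the other agent~$k$, while the chooser/cutter roles are fixed by the $\argmin$ rule, which need not make~$k$ the chooser. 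I would split into two cases. If the rule yields $i^* = k$, then $j^* = j$ and both inequalities hold verbatim. If instead $i^* = j$, then minimality of the ratio gives $\frac{\utility{j}{M \setminus B}}{\MMS_j} \leq \frac{\utility{k}{M \setminus B}}{\MMS_k}$, and since $\utility{j}{M \setminus B} \geq 2\MMS_j$ this forces $\utility{k}{M \setminus B} \geq 2\MMS_k$, so the cutter~$j^* = k$ is fine, while the chooser~$j$ satisfies $\utility{j}{M \setminus B} \geq 2\MMS_j \geq \frac{4}{3}\MMS_j$. (The degenerate case where some $\MMS_a = 0$ is trivial, as any allocation is $\frac{2}{3}$-MMS for that agent.)

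In both cases the hypotheses of the argument in the proof of \Cref{thm:2-agent-2/3-MMS} are met, so \Cref{alg:2-agent-2/3-MMS} returns an allocation with $\utility{j}{A_j} \geq \frac{2}{3}\MMS_j$ and $\utility{k}{A_k} \geq \frac{2}{3}\MMS_k$. Combined with $\utility{i}{A_i} \geq \frac{2}{3}\MMS_i$, this gives a $\frac{2}{3}$-MMS allocation for all three agents, completing the proof.
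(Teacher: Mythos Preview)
Your proof is correct and follows the same approach as the paper: assign~$B$ to agent~$i$ and invoke \Cref{alg:2-agent-2/3-MMS} on the reduced instance, appealing to the analysis behind \Cref{thm:2-agent-2/3-MMS}. The paper's proof is terser---it simply cites \Cref{thm:2-agent-2/3-MMS} without spelling out the chooser/cutter case split---whereas you make explicit (and correctly) why the $\argmin$ rule always hands the cutter role to an agent with value at least $2\,\MMS$ over $M\setminus B$.
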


\begin{proof}
By the definition of a \goodBundle in \Cref{def:reducible-bundle}, agent~$i$ gets utility at least~$2/3 \cdot \MMS_i$ by receiving bundle~$B$.
For ease of exposition, let $\{j, k\} = N \setminus \{i\} $ and assume without loss of generality that $\utility{j}{M \setminus B} \geq 2 \cdot \MMS_j$ and $\utility{k}{M \setminus B} \geq 4/3 \cdot \MMS_k$.
Then, due to \Cref{thm:2-agent-2/3-MMS}, \Cref{alg:2-agent-2/3-MMS} outputs a $\frac{2}{3}$-MMS allocation for agents~$j$ and~$k$, as desired.
\end{proof}

We next analyze the structure of instances that do not have \goodBundle{s}, and show that our algorithm (\crefrange{ALG-3:not-good-bundle}{ALG-3:common-div:last-agent}) still outputs a $\frac{2}{3}$-MMS allocation.

\begin{lemma}
\label{lem:alg-cover-instances}
Given a $3$-agent instance that does not contain any $\frac{2}{3}$-high-valued good, if there does not exist a \goodBundle, then the following conditions hold simultaneously:
\begin{enumerate}[label=(\roman*)]
\item\label{item:pair-of-goods}
for any pair of goods~$g, g' \in M$, either $\utility{a}{\{g, g'\}} > \MMS_a$ for all agents~$a \in N$ or $\utility{a}{\{g, g'\}} < 2/3 \cdot \MMS_a$ for all agents~$a \in N$;

\item\label{item:three-supporters}
there exists a pair of goods~$o, o' \in M$ such that for all agents~$a \in N$, $\utility{a}{\{o, o'\}} > \MMS_a$;

\item\label{item:valuable-small-goods}
for all agents~$a \in N$, $\sum_{g \in M \colon \utility{a}{g} \leq \frac{\MMS_a}{3}} \utility{a}{g} < \frac{\MMS_a}{6}$.
\end{enumerate}
\end{lemma}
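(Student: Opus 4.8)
The plan is to prove all three parts by contradiction, in each case producing a \goodBundle and contradicting the hypothesis. The workhorse, which I would isolate first, is the following easy sufficient criterion: since $\utility{a}{M} \geq 3\MMS_a$ for every agent $a$ (definition of maximin share), if $B$ is an integral bundle with $\utility{i}{B} \geq \frac23\MMS_i$ for some $i \in N$ and, writing $\{j,k\} = N \setminus \{i\}$, one has $\utility{j}{B} \leq \MMS_j$ and $\utility{k}{B} \leq \frac53\MMS_k$, then $\utility{j}{M\setminus B} \geq 2\MMS_j$ and $\utility{k}{M\setminus B} \geq \frac43\MMS_k$, so $B$ is a \goodBundle (\Cref{def:reducible-bundle}) --- impossible. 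I would also record that, because no good is $\frac23$-high-valued, $\utility{a}{\{g,g'\}} < \frac43\MMS_a$ for every agent $a$ and every pair $g,g'$, so for two-good bundles the "$\frac43/\frac53$" side of the criterion is automatic and only the "$\geq \frac23$" and "$\leq \MMS$" requirements need attention.

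For part~(i): suppose the dichotomy fails for a pair $\{g,g'\}$, so some agent $p$ has $\utility{p}{\{g,g'\}} \leq \MMS_p$ and some agent $q$ has $\utility{q}{\{g,g'\}} \geq \frac23\MMS_q$. If $p \neq q$, then $\{g,g'\}$ is a \goodBundle with respect to $q$ (take $j = p$; the third agent gets the automatic $\frac43$-bound). If $p = q$, let $\{b,c\} = N \setminus \{p\}$: if $\utility{b}{\{g,g'\}} \leq \MMS_b$ (or $\utility{c}{\{g,g'\}} \leq \MMS_c$) then $\{g,g'\}$ is a \goodBundle with respect to $p$; otherwise $\utility{b}{\{g,g'\}} > \MMS_b > \frac23\MMS_b$ and $\{g,g'\}$ is a \goodBundle with respect to $b$ (take $j = p$, using $\utility{p}{\{g,g'\}} \leq \MMS_p$). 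Either way, a contradiction.

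Part~(iii) is the crux. Write $\widehat{M}_a = \{g \in M : \utility{a}{g} > \MMS_a/3\}$, $S_a = M\setminus\widehat{M}_a$ (the goods summed over in part~(iii)), $\widehat{M} = \bigcap_{a\in N}\widehat{M}_a$, and $T = M\setminus\widehat{M} = \bigcup_{a} S_a$. From part~(i) I would first deduce: a pair $\{g,g'\}$ is "$> \MMS_a$ for all $a$" if and only if $g,g' \in \widehat{M}$ (if it is, then $\utility{a}{g} > \MMS_a - \utility{a}{g'} > \frac13\MMS_a$ for all $a$ using no-high-valued; conversely two goods of $\widehat M$ have combined value $> \frac23\MMS_a$ for all $a$, hence by part~(i) value $> \MMS_a$). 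So every pair meeting $T$ is "$< \frac23\MMS_a$ for all $a$", and I would bootstrap by induction on size that \emph{every} subset $B \subseteq T$ satisfies $\utility{a}{B} < \frac23\MMS_a$ for all $a$: a minimal counterexample $B$ (necessarily $|B| \geq 3$, with $\utility{b}{B} \geq \frac23\MMS_b$ for some $b$) has, for each agent $i$, its top $|B|-1$ values summing to $< \frac23\MMS_i$ (a proper subset of $T$), hence its smallest value $< \frac{1}{|B|-1}\cdot\frac23\MMS_i \leq \frac13\MMS_i$, so $\utility{i}{B} < \MMS_i$ for all three agents --- making $B$ a \goodBundle with respect to $b$, a contradiction. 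In particular $\utility{a}{S_a} \leq \utility{a}{T} < \frac23\MMS_a$. The remaining, hardest step is to sharpen $\frac23$ to $\frac16$: assuming $\utility{1}{S_1} \geq \MMS_1/6$, since $\utility{1}{T} < \frac23\MMS_1$ any bundle with $\utility{1}{\cdot} \geq \frac23\MMS_1$ must include at least one globally-big good $h \in \widehat{M}$; I would take $B = \{h\}\cup S_1$ with $h$ chosen so that $\utility{1}{h} + \utility{1}{S_1} \geq \frac23\MMS_1$ and, crucially, so that $\utility{2}{B}$ and $\utility{3}{B}$ stay below $\MMS_2$ and $\frac53\MMS_3$, and argue $B$ is a \goodBundle with respect to agent~$1$. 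I expect this value-accounting to be the main obstacle: $\utility{i}{h} + \utility{i}{S_1}$ is only bounded by $\frac23\MMS_i + \frac23\MMS_i = \frac43\MMS_i$ for the other agents $i$, so one needs a finer choice of $h$ (or a sharper bound on $\utility{i}{S_1}$ and $\utility{i}{h}$, possibly via agents' MMS partitions) to push below the $\MMS_i$ threshold.

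Part~(ii) then follows from~(i) and~(iii): if no pair is "$> \MMS_a$ for all $a$", then by~(i) every pair is "$< \frac23\MMS_a$ for all $a$", so for each agent $a$ the two most valuable goods have combined value $< \frac23\MMS_a$, forcing every good except the single most valuable one to have value $< \frac13\MMS_a$ and hence to lie in $S_a$; thus $\utility{a}{S_a} \geq \utility{a}{M} - \max_{g}\utility{a}{g} > 3\MMS_a - \frac23\MMS_a = \frac73\MMS_a$, contradicting~(iii). Any two goods $o, o' \in \widehat{M}$ then witness~(ii).
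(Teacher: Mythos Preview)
Your argument for part~(i) is correct (indeed slightly more careful than the paper's, which silently uses that the witnesses $p,q$ can be taken distinct). Your inductive bound $\utility{a}{B} < \frac23\MMS_a$ for all $B\subseteq T$ is also correct and is a nice piece of machinery the paper does not isolate.

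The genuine gap is exactly where you flag it: the sharpening of $\utility{a}{S_a}$ from $\frac23\MMS_a$ down to $\frac16\MMS_a$. Taking $B=\{h\}\cup S_1$ with a single $h\in\widehat M$ cannot work in general: you only know $\utility{i}{h}<\frac23\MMS_i$ and $\utility{i}{S_1}<\frac23\MMS_i$ for $i\in\{2,3\}$, so $\utility{i}{B}$ is only bounded by $\frac43\MMS_i$, and there is no mechanism to force either below $\MMS_i$. The paper closes this gap by a different route: it proves (ii) \emph{before} (iii), obtaining a good $o$ with $\utility{i}{o}>\frac12\MMS_i$; then it starts $B=\{o\}$ and adds agent~$i$'s small goods one at a time (each of value $\le\frac13\MMS_i$) until $\utility{i}{B}\in[\frac23\MMS_i,\MMS_i]$ --- the assumption $\utility{i}{S_i}\ge\frac16\MMS_i$ and $\utility{i}{o}>\frac12\MMS_i$ guarantee this interval is reachable. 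A two-case analysis then finishes: either some other agent $j$ already has $\utility{j}{B}\le\frac53\MMS_j$ (and one checks $B$ is reducible with respect to the third agent or to $i$), or both other agents value $B$ above $\frac53$ of their MMS, in which case one of them prunes goods from $B$ (each worth $<\frac23$ of her MMS) to land in $(\MMS_j,\frac53\MMS_j]$, and again $B$ is reducible. The paper proves (ii) directly from (i) by a minimal-cardinality-bundle argument rather than via (iii).

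It is worth noting that your own machinery already delivers (ii): from $\utility{a}{T}<\frac23\MMS_a$ you get $\utility{a}{\widehat M}>\frac73\MMS_a$, hence $|\widehat M|\ge 4$, and by your characterisation any pair in $\widehat M$ witnesses (ii). So your proof can be repaired simply by reordering --- establish (ii) from the $T$-bound, extract an $o$ with $\utility{i}{o}>\frac12\MMS_i$, and then run the paper's bundle-building and case analysis for (iii). Your derivation of (ii) from (iii) at the end would then become redundant.
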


\begin{proof}
We prove the statement by contrapositive.
Specifically, we show that if any of \cref{item:pair-of-goods,item:three-supporters,item:valuable-small-goods} is violated in the given instance, then there exits a \goodBundle.

If \cref{item:pair-of-goods} is violated, then there exists a pair of goods~$g, g' \in M$ such that for some agent~$i \in N$, $\utility{i}{\{g, g'\}} \geq 2/3 \cdot \MMS_i$ and for some other agent~$j \in N \setminus \{i\}$, $\utility{j}{\{g, g'\}} \leq \MMS_j$.
For the last agent~$k \in N \setminus \{i, j\}$, clearly, $\utility{k}{\{g, g'\}} < 2 \times 2/3 \cdot \MMS_k = 4/3 \cdot \MMS_k$.
As a result, $\{g, g'\}$ is a \goodBundle with respect to agent~$i$.

Given \cref{item:pair-of-goods}, if \cref{item:three-supporters} is violated, then for each agent~$a \in N$ and each pair of goods~$g, g' \in M$, we have $\utility{a}{\{g, g'\}} < 2/3 \cdot \MMS_a$.
Clearly, a \goodBundle, if exists, has size at least three.
Moreover, the third most valuable good of any agent is worth less than $1/3$ of her MMS.
For each~$a \in N$, let integral bundle~$B_a \subseteq M$ be of the minimal cardinal size such that $\utility{a}{B_a} \geq \MMS_a$.
Let~$i$ be the agent whose~$B_i$ is of the minimal cardinal size among the three agents (arbitrary tie-breaking).
Removing agent~$i$'s least valuable good from~$B_i$ gives us a \goodBundle~$B_i$ because $\utility{i}{B_i} \in [2/3 \cdot \MMS_i, \MMS_i]$ and $\utility{j}{M \setminus B_i} > 3 \cdot \MMS_j - \MMS_j = 2 \cdot \MMS_j$ for all~$j \in N \setminus \{i\}$.

Given \cref{item:pair-of-goods,item:three-supporters}, if \cref{item:valuable-small-goods} is violated, there exists an agent~$i \in N$ such that
\[
\textstyle
\sum_{g \in M \colon \utility{i}{g} \leq \frac{\MMS_i}{3}} \utility{i}{g} \geq \frac{\MMS_i}{6}.
\]
According to \cref{item:three-supporters}, we assume without loss of generality $\utility{i}{o} > \frac{\MMS_i}{2}$.
Let agent~$i$ add good~$o$ and next those goods of value at most~$\frac{\MMS_i}{3}$ to an empty bundle~$B$ until $\utility{i}{B} \in [2/3 \cdot \MMS_i, \MMS_i]$.
If there exists~$j \in N \setminus \{i\}$ such that $\utility{j}{B} \leq 5/3 \cdot \MMS_j$, $B$ is a \goodBundle with respect to the agent~$N \setminus \{i, j\}$ if she values~$B$ at least $2/3$ of her MMS or otherwise, agent~$i$.
Else, $\utility{a}{B} > 5/3 \cdot \MMS_a$ for all~$a \in N \setminus \{i\}$.
Fix any~$j \in N \setminus \{i\}$.
Let agent~$j$ remove goods from bundle~$B$ until $\utility{j}{B} \in [2/3 \cdot \MMS_j, 5/3 \cdot \MMS_j]$.
Now, $B$ is a \goodBundle with respect to the agent~$N \setminus \{i, j\}$ if she values~$B$ at least $2/3$ of her MMS and agent~$j$ otherwise.
\end{proof}

In addition to the structure of how agents value the goods, \Cref{alg:2/3-MMS} further examines agents' subjective divisibility.
Our next \namecref{lem:3-agent-handle-no-good-bundle} establishes the correctness of this part of our algorithm.

\begin{lemma}
\label{lem:3-agent-handle-no-good-bundle}
When the \cref{item:pair-of-goods,item:three-supporters,item:valuable-small-goods} in \Cref{lem:alg-cover-instances} hold, \Cref{alg:2/3-MMS} computes a $\frac{2}{3}$-MMS allocation.
\end{lemma}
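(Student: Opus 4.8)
The plan is to establish three things in turn: the structural consequences of the hypothesis (in particular $|\widehat{M}|\ge 5$), the existence of the two agents $i,j$ and the common divisible good $d\in\widehat{M}$ that \Cref{alg:2/3-MMS} uses, and the correctness of the cut-and-choose step on the three-good line $g_1,d,g_2$.

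First I would analyze $\widehat{M}$. Since the instance has no $\frac23$-high-valued good, $\utility{a}{g}<\frac23\MMS_a$ for every agent $a$ and good $g$; combined with \cref{item:pair-of-goods} this gives that a pair $\{g,g'\}$ has $\utility{a}{\{g,g'\}}>\MMS_a$ for all $a$ exactly when $g,g'\in\widehat{M}$ (if $g\in\widehat{M}$, pairing it with any element of $\widehat{M}$ makes the pair's value exceed $\frac23\MMS_b$ for every $b$, so by \cref{item:pair-of-goods} the pair is of the ``large'' type; and in a ``large'' pair neither good can be worth at most $\frac13\MMS_b$ to any $b$, else the partner would be $\frac23$-high-valued, so both lie in $\widehat{M}$). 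Consequently every $g\in\widehat{M}$ satisfies $\utility{a}{g}\in\left(\tfrac13\MMS_a,\tfrac23\MMS_a\right)$ for all $a$, and pairing any $g\notin\widehat{M}$ with a good $o\in\widehat{M}$ guaranteed by \cref{item:three-supporters} shows $\utility{a}{g}<\frac23\MMS_a-\utility{a}{o}<\frac13\MMS_a$ for \emph{every} $a$. Hence $M\setminus\widehat{M}=\{g:\utility{a}{g}\le\MMS_a/3\}$ for every $a$, so \cref{item:valuable-small-goods} gives $\utility{a}{M\setminus\widehat{M}}<\MMS_a/6$; with $\utility{a}{M}\ge 3\MMS_a$ this yields $\utility{a}{\widehat{M}}>\tfrac{17}{6}\MMS_a$, and as each good of $\widehat{M}$ is worth $<\tfrac23\MMS_a$ to $a$, we conclude $\tfrac23|\widehat{M}|>\tfrac{17}{6}$, i.e.\ $|\widehat{M}|\ge 5$.

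The part I expect to be the main obstacle is proving that some good in $\widehat{M}$ is divisible for two of the three agents (which then lets us name them $i,j$ and choose $d$). I would argue by contradiction: if no such good exists, then $\widehat{M}\cap\divGoodsOf{1}$, $\widehat{M}\cap\divGoodsOf{2}$, $\widehat{M}\cap\divGoodsOf{3}$ are pairwise disjoint, so some agent $a$ has $|\widehat{M}\cap\divGoodsOf{a}|\le|\widehat{M}|/3$ and hence owns a set $H\subseteq\widehat{M}$ of at least $\lceil 2|\widehat{M}|/3\rceil\ge 4$ goods she regards as indivisible, each worth strictly between $\tfrac13\MMS_a$ and $\tfrac23\MMS_a$ with every pairwise sum exceeding $\MMS_a$. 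I would then examine agent $a$'s MMS partition: the goods of $H$ enter the three bundles intact; any bundle getting at least two of them already exceeds $\MMS_a$, so that surplus is ``stuck'' in indivisible goods, while a bundle getting at most one of them must be brought up to $\MMS_a$ from agent $a$'s divisible goods --- but these are scarce, since the divisible goods of $\widehat{M}$ she owns number at most $|\widehat{M}|-4$ and are each worth $<\tfrac23\MMS_a$, and her divisible goods outside $\widehat{M}$ total $<\tfrac16\MMS_a$ in value. A case analysis on how $H$ splits among the bundles (the tightest case being $(2,1,1)$) should show that agent $a$ can neither pack $H$ with all three bundles reaching $\MMS_a$, nor do so without enough spare divisible value left over to push the minimum strictly above $\MMS_a$; either conclusion contradicts $\MMS_a$ being the exact maximin value. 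This is where the bound $|\widehat{M}|\ge 5$ and \cref{item:valuable-small-goods} are used, and it is delicate precisely because it turns on the self-consistency of the maximin value.

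Finally I would verify the cut-and-choose step. The line $g_1,d,g_2$ has all three goods in $\widehat{M}$ and $d\in\divGoodsOf{i}\cap\divGoodsOf{j}$. Since the three are worth strictly between $\tfrac13\MMS_i$ and $\tfrac23\MMS_i$ to agent $i$, we have $|\utility{i}{g_1}-\utility{i}{g_2}|<\tfrac13\MMS_i<\utility{i}{d}$, so agent $i$ can keep $g_1,g_2$ intact on opposite sides and split only $d$ to get $\utility{i}{B}=\utility{i}{B'}$; by the pairwise bound, $\utility{i}{\{g_1,d,g_2\}}>\tfrac32\MMS_i$, so whichever part agent $i$ is left with is worth $>\tfrac34\MMS_i\ge\tfrac23\MMS_i$. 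Because the only split good is $d$, which is divisible for $j$ too, $\utility{j}{B}+\utility{j}{B'}=\utility{j}{\{g_1,d,g_2\}}>\tfrac32\MMS_j$, so agent $j$'s preferred part is worth $>\tfrac34\MMS_j\ge\tfrac23\MMS_j$. The third agent $k$ receives the integral bundle $M\setminus\{g_1,d,g_2\}$; since no good is removed earlier on this branch, $\utility{k}{M}\ge 3\MMS_k$, and $\utility{k}{\{g_1,d,g_2\}}<2\MMS_k$ (each of the three goods is worth $<\tfrac23\MMS_k$ to $k$), so $\utility{k}{M\setminus\{g_1,d,g_2\}}>\MMS_k\ge\tfrac23\MMS_k$. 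Hence all three agents receive at least $\tfrac23$ of their maximin share, completing the proof.
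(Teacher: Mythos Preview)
Your first step (showing $|\widehat{M}|\ge 5$) and your third step (verifying the cut-and-choose on the line $g_1,d,g_2$) are both correct and in fact slightly sharper than the paper's versions. The gap is in your second step, where you try to force a contradiction from the self-consistency of $\MMS_a$ once some agent~$a$ has $|H|\ge 4$ indivisible goods in~$\widehat{M}$.

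That case analysis does not close. In the $(2,1,1)$ split, bundle~$1$ (the pair $\{h_1,h_2\}$) is worth strictly more than $\MMS_a$, while bundles~$2,3$ need total top-up $2\MMS_a-\utility{a}{h_3}-\utility{a}{h_4}$, which by condition~(i) lies anywhere in $(2/3,1)\cdot\MMS_a$. The available ``extra'' (agent~$a$'s divisible goods in $\widehat{M}$ plus the goods outside $\widehat{M}$) is only bounded above by $5/6\cdot\MMS_a$, and the goods outside $\widehat{M}$ can be indivisible for~$a$, so you cannot in general argue that the extra either falls short or leaves slack that could be redistributed to push the minimum strictly above $\MMS_a$. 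Concretely, nothing in conditions~(i)--(iii) alone rules out an instance where the needed top-up and the available extra match exactly; your ``either/or'' dichotomy is not forced.

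The missing ingredient is the hypothesis you are implicitly in: when lines~11--17 execute, there is \emph{no} reducible bundle. Use it directly. If $|H|\ge 4$, pigeonhole puts two of $a$'s indivisible $\widehat{M}$-goods, say $g,g'$, in the same bundle of $a$'s MMS partition; the other two bundles already witness $\utility{a}{M\setminus\{g,g'\}}\ge 2\,\MMS_a$. Since $g,g'\in\widehat{M}$, every agent~$b$ has $\utility{b}{\{g,g'\}}\in(2/3,4/3)\cdot\MMS_b$, so for either $b\ne a$ the pair $\{g,g'\}$ is a reducible bundle with respect to~$b$ (with $a$ playing the role of~$j$ in \Cref{def:reducible-bundle}). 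This contradicts being in the \texttt{else}-branch, so every agent has at most three indivisible goods in~$\widehat{M}$; combined with $|\widehat{M}|\ge 5$, the pigeonhole you already set up yields the common divisible good~$d$. This is exactly the paper's route, and it replaces your delicate (and unfinished) self-consistency argument with a one-line structural observation.
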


\begin{proof}
We prove the statement by showing that \crefrange{ALG-3:not-good-bundle}{ALG-3:common-div:last-agent} output a $\frac{2}{3}$-MMS allocation.

According to \cref{item:valuable-small-goods}, for all~$a \in N$, we have $\sum_{g \in M \colon \utility{a}{g} \leq \frac{\MMS_a}{3}} \utility{a}{g} < \frac{\MMS_a}{6}$, meaning that for each~$g \in M$, either $\utility{a}{g} > \frac{\MMS_a}{3}$ or $\utility{a}{g} < \frac{\MMS_a}{6}$.
Consequently,
\[
\textstyle
\sum_{g \in M \colon \utility{a}{g} < \frac{\MMS_a}{6}} \utility{a}{g} < \frac{\MMS_a}{6}.
\]
Furthermore, $\utility{a}{M} \geq 3 \cdot \MMS_a$ implies that
\[
\textstyle
\sum_{g \in M \colon \utility{a}{g} > \frac{\MMS_a}{3}} \utility{a}{g}
= \utility{a}{M} - \sum_{g \in M \colon \utility{a}{g} < \frac{\MMS_a}{6}} \utility{a}{g}
> 3 \cdot \MMS_a - \frac{\MMS_a}{6}
> \frac{8}{3} \cdot \MMS_a.
\]
Recall that $\utility{a}{g} < 2/3 \cdot \MMS_a$ for all~$a \in N$ and~$g \in M$.
Therefore, there must be at least \emph{five} goods, each of which is worth greater than~$\frac{\MMS_a}{3}$.
We now show that each of those goods is worth greater than $1/3$ of any other agent's MMS as well.
Put differently, let
\[
\textstyle
\widehat{M} \coloneqq \{g \in M \mid \forall a \in N, \utility{a}{g} > \frac{\MMS_a}{3}\};
\]
we will show $|\widehat{M}| \geq 5$.
Fix any~$a \in N$.
Denote by~$\widehat{M}' \coloneqq \{g \in M \mid \utility{a}{g} > \frac{\MMS_a}{3}\}$.
From our previous argument, $|\widehat{M}'| \geq 5$.
For any pair of~$g, g' \in \widehat{M}'$, we have
\[
4/3 \cdot \MMS_a > \utility{a}{\{g, g'\}} > 2/3 \cdot \MMS_a.
\]
Since $\{g, g'\}$ is not a \goodBundle, it means that $\utility{a'}{\{g, g'\}} > \MMS_{a'}$ for all~$a' \in N$.
We thus conclude $\widehat{M}' \subseteq \widehat{M}$.
As a result, $|\widehat{M}| \geq |\widehat{M}'| \geq 5$.

Now, recall from \cref{item:three-supporters} that there exists a pair of~$o, o' \in M$ such that for all~$a \in N$, $\utility{a}{\{o, o'\}} > \MMS_a$.
We have $o, o' \in \widehat{M}$ by the definition of~$\widehat{M}$ stated in the last paragraph.
If an agent has at least four indivisible goods in~$\widehat{M}$, due to the pigeonhole principle, at least two of them (say $g, g' \in \widehat{M}$) must be included in the same bundle in her MMS partition.
Then, $\{g, g'\}$ is a \goodBundle so the case would have been already solved in \crefrange{ALG-3:if-good-bundle}{ALG-3:good-bundle-call-2-agent-alg}.
Thus, each agent has at most three indivisible goods in~$\widehat{M}$.
Alternatively, it means that there exists some good that is considered as divisible for \emph{at least two agents}.
For simplicity, we refer to the good as a \emph{common} divisible good.
Assume without loss of generality that both agents~$i, j \in N$ regard good~$d \in \widehat{M}$ as divisible.
Our main idea here is to partition \emph{three} goods between the two agents~$i$ and~$j$ so that they are both satisfied.
To be more specific, in addition to good~$d$, we pick the goods~$o, o'$.
In case that $d \in \{o, o'\}$, we pick arbitrarily another good from~$\widehat{M} \setminus \{o, o'\}$.
Let us refer to the two picked goods other than~$d$ as~$g_1, g_2$, respectively.
Clearly,
\begin{equation}
\label{eq:common-valuable-div}
\textstyle
\utility{i}{\{g_1, d, g_2\}} > \MMS_i + \frac{\MMS_i}{3} = 4/3 \cdot \MMS_i;
\end{equation}
a similar inequality also holds for~$j$.
With this relationship in hand, intuitively, by dividing good~$d$, we are able to find an allocation of goods~$g_1, d, g_2$ to~$i, j$ such that both agents get a utility of at least~$2/3$ of their own MMS.
Arrange the three goods~$g_1, d, g_2$ on a line in this order.
Next, let agent~$i$ cut the line into two bundles~$B, B'$ of equal value, i.e., $\utility{i}{B} = \utility{i}{B'}$.
Since $\utility{i}{g_1}, \utility{i}{g_2} < 2/3 \cdot \MMS_i$, the cut must be on the common divisible good~$d$.
It means that agent~$i$ indeed has the $2$-partition~$(B, B')$ of goods~$g_1, d, g_2$ such that
\[
\utility{i}{B} = \utility{i}{B'} > 2/3 \cdot \MMS_i,
\]
where the last inequality follows from \Cref{eq:common-valuable-div}.
Agent~$j$'s preferred bundle between~$B$ and~$B'$ is clearly worth at least~$2/3 \cdot \MMS_j$.
In short, both agents~$i$ and~$j$ are satisfied.
As the third agent~$N \setminus \{i, j\}$ values~$\{g_1, d, g_2\}$ at most twice as much as her MMS, the remaining goods~$M \setminus \{g_1, d, g_2\}$ is worth at least her MMS, as desired.
\end{proof}

\bigskip

The correctness of \Cref{thm:3-agent-2/3-MMS} follows from \Cref{lem:3-agent-2/3-MMS:high,lem:3-agent-good-bundle,lem:alg-cover-instances,lem:3-agent-handle-no-good-bundle} stated above.

\subsection{Any Number of Agents}
\label{sec:MMS:n-agents}

We start by showing that the MMS approximation guarantee is non-increasing with respect to~$n$.
Given an instance, let its \emph{MMS approximation guarantee} be the maximum value of~$\alpha$ such that the instance admits an $\alpha$-MMS allocation.

\begin{theorem}
\label{thm:impossibility-non-increasing}
Given an instance~$I$ with~$n$ agents and~$m$ goods, whose MMS approximation guarantee is~$\gamma$, there exists an instance~$I'$ with~$n+1$ agents and~$m+1$ goods such that the MMS approximation guarantee of instance~$I'$ is at most~$\gamma$.
\end{theorem}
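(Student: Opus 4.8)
The plan is to build the witness instance~$I'$ by adjoining to~$I$ one fresh agent, agent~$n+1$, and one fresh good, $g_{m+1}$, tuned so that (i) the maximin share of every original agent is unchanged, and (ii) in any $\alpha$-MMS allocation of~$I'$ with $\alpha>0$ a positive-length piece of~$g_{m+1}$ is forced onto agent~$n+1$, so that $g_{m+1}$ contributes nothing to the original agents and can be deleted to recover an $\alpha$-MMS allocation of~$I$. Concretely, agents $1,\dots,n$ keep their valuations and subjective divisibility on $g_1,\dots,g_m$ exactly as in~$I$; in addition each~$i\in[n]$ values the new good at $\utility{i}{g_{m+1}}=\MMS_i(n,M)$ and regards it as \emph{indivisible}. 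Agent~$n+1$ values every original good at~$0$ (hence treats them as indivisible) and values~$g_{m+1}$ at~$1$, regarding~$g_{m+1}$ as \emph{divisible}. Since every original good is positively valued by someone (as in~$I$) and $g_{m+1}$ is positively valued by agent~$n+1$, this~$I'$ is a valid instance with $n+1$ agents and $m+1$ goods.

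The first step is to verify $\MMS_i(n+1, M\cup\{g_{m+1}\}) = \MMS_i(n,M)$ for every~$i\in[n]$. For ``$\geq$'', append the singleton bundle $\{g_{m+1}\}$, which is worth exactly $\MMS_i(n,M)$ to~$i$, to an MMS partition of~$M$ for agent~$i$. For ``$\leq$'', take an $(n+1)$-partition of $M\cup\{g_{m+1}\}$ attaining $\MMS_i(n+1, M\cup\{g_{m+1}\})$; because $g_{m+1}$ is indivisible for~$i$ we may assume it lies entirely in one bundle, and merging the rest of that bundle into another one yields an $n$-partition of~$M$ whose minimum value is no smaller, giving the bound. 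I also record that agent~$n+1$ can cut~$g_{m+1}$ into $n+1$ equal pieces, so $\MMS_{n+1}(n+1, M\cup\{g_{m+1}\}) \geq \tfrac{1}{n+1} > 0$.

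The second step is the reduction. Let $\alloc' = (A'_1,\dots,A'_{n+1})$ be an $\alpha$-MMS allocation of~$I'$ with $\alpha>0$. Then $\utility{n+1}{A'_{n+1}} \geq \tfrac{\alpha}{n+1} > 0$, and since $g_{m+1}$ is the only good agent~$n+1$ values (and it is divisible for her), $A'_{n+1}$ contains a piece of~$g_{m+1}$ of positive length; hence no original agent holds all of~$g_{m+1}$, and as $g_{m+1}$ is indivisible for each~$i\in[n]$, every original agent derives zero utility from it. Now discard~$g_{m+1}$ and define an allocation $\alloc = (A_1,\dots,A_n)$ of~$M$ by $A_i = A'_i\cap M$ for $i\geq 2$ and $A_1 = (A'_1\cap M)\cup(A'_{n+1}\cap M)$. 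For every~$i\in[n]$ we then have $\utility{i}{A_i} \geq \utility{i}{A'_i\cap M} = \utility{i}{A'_i} \geq \alpha\cdot\MMS_i(n+1, M\cup\{g_{m+1}\}) = \alpha\cdot\MMS_i(n,M)$, so $\alloc$ is an $\alpha$-MMS allocation of~$I$. Since $\alpha=0$ is trivially achievable in both instances, every $\alpha$ for which~$I'$ admits an $\alpha$-MMS allocation is also achievable in~$I$, and therefore the MMS approximation guarantee of~$I'$ is at most that of~$I$, namely~$\gamma$.

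The main obstacle is getting the value $\utility{i}{g_{m+1}} = \MMS_i(n,M)$ to do double duty: it must be large enough to stop the original agents' maximin shares from dropping (adjoining an agent together with a good normally lowers everyone's MMS), while $g_{m+1}$ must nevertheless never settle in full on any original agent --- which is exactly why it is made indivisible for~$[n]$ but divisible for agent~$n+1$, so that her strictly positive maximin share compels her to seize a chunk of it. The one calculation that needs care is the ``$\leq$'' half of the identity $\MMS_i(n+1, M\cup\{g_{m+1}\}) = \MMS_i(n,M)$, i.e.\ the bundle-splitting/merging argument sketched above.
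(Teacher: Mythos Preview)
Your construction and argument are correct and coincide with the paper's: the same new agent and new good with the same valuations and divisibilities, the same verification that $\MMS_i(n+1,M\cup\{g_{m+1}\})=\MMS_i(n,M)$ for $i\in[n]$, and the same reduction showing any $\alpha$-MMS allocation of~$I'$ yields one of~$I$. If anything, your treatment of the ``$\leq$'' direction (merging the residual of the $g_{m+1}$-bundle rather than simply deleting it) and your explicit reassignment of $A'_{n+1}\cap M$ make the argument slightly cleaner than the paper's informal ``no point to give'' phrasing.
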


\begin{proof}
\begin{table}[t]
\centering
\begin{tabular}{|*{3}{c|}|c|}
\hline
Agents & $M$ (the set of $m$ goods) & $g_{m+1}$ & Agents' Maximin Share \\
\hline
$a_1$ & & ind, $\MMS_1(n, M)$ & $\MMS_1(n+1, M') = \MMS_1(n, M)$ \\
\cline{1-1}
\cline{3-4}
\vdots & subjective divisibility & \vdots & \vdots \\
\cline{1-1}
\cline{3-4}
$a_i$ & and utilities & ind, $\MMS_i(n, M)$ & $\MMS_i(n+1, M') = \MMS_i(n, M)$ \\
\cline{1-1}
\cline{3-4}
\vdots & & \vdots & \vdots \\
\cline{1-1}
\cline{3-4}
$a_n$ & & ind, $\MMS_n(n, M)$ & $\MMS_n(n+1, M') = \MMS_n(n, M)$ \\
\hline
$a_{n+1}$ & indivisible and $0$, $\forall g \in M$ & div, $1$ & $1 / (n+1)$ \\
\hline
\end{tabular}
\caption{Instance~$I'$ constructed in the proof of \Cref{thm:impossibility-non-increasing}.
Let $M' \coloneqq M \cup \{g_{m+1}\}$.}
\label{table:n-agent-construction-non-increasing}
\end{table}

Following the theorem statement, we have an instance~$I$ with~$n$ agents and~$m$ goods, whose MMS approximation guarantee is~$\gamma$.
In what follows, we will construct an instance~$I'$ as shown in \Cref{table:n-agent-construction-non-increasing}
with $n+1$ agents and $m+1$ goods such that its MMS approximation guarantee is at most~$\gamma$.
Instance~$I'$ consists of instance~$I$, agent~$n+1$ and good~$g_{m+1}$.
Let $M' \coloneqq M \cup \{g_{m+1}\}$ denote the goods in Instance~$I'$.
For each agent~$i \in [n]$, agent~$i$ regards good~$g_{m+1}$ as indivisible and values the good exactly at $\MMS_i(n, M)$.
Agent~$n+1$ regards every good in~$M$ as indivisible and values all goods at zero; the agent regards good~$g_{m+1}$ as divisible and has value~$1$ for the good.

As for agents' maximin share in instance~$I'$, clearly, we have $\MMS_{n+1}(n+1, M') = 1 / (n+1)$.
Next, we show that for each agent~$i \in [n]$, we have $\MMS_i(n+1, M') = \MMS_i(n, M)$, i.e., agent~$i$'s maximin share in instance~$I'$ is exactly the same as that in instance~$I$.
First, we have $\MMS_i(n+1, M') \geq \MMS_i(n, M)$ in that the MMS partition of agent~$i$ in instance~$I$ plus a singleton bundle containing good~$g_{m+1}$ is a valid partition in instance~$I'$.
Second, we show that $\MMS_i(n+1, M') \leq \MMS_i(n, M)$.
Suppose for the sake of contradiction that $\MMS_i(n+1, M') > \MMS_i(n, M)$.
We note that since good~$g_{m+1}$ is indivisible for agent~$i$, this good is included entirely in a single bundle of a MMS partition in instance~$I'$; removing this bundle gives a better MMS partition in instance~$I$, a contradiction.

Now, we focus on the MMS approximation guarantee for instance~$I'$.
First, to have a positive MMS approximation guarantee, agent~$n+1$ should get a positive amount of good~$g_{m+1}$.
Once agent~$n+1$ gets some part of good~$g_{m+1}$, other agents value the remaining part of good~$g_{m+1}$ at zero, so there is no point to give the remainder of good~$g_{m+1}$ to any of them.
Then, we discuss how goods~$M$ are allocated.
There is no point to give these goods to agent~$n+1$ because this agent does not value these goods at all.
Since now goods~$M$ are allocated between agents~$[n]$, according to the theorem statement that the MMS approximation guarantee for instance~$I$ is~$\gamma$, we conclude that the MMS approximation guarantee for instance~$I'$ is at most~$\gamma$.
\end{proof}

It is of interest to know if the worst-case MMS approximation guarantee would \emph{strictly} decrease at some point.
Any improvement of this direction strengthens the impossibility result below.

\begin{corollary}
\label{coro:n-agent-impossibility}
The worst-case MMS approximation guarantee is at most~$2/3$ for $n \geq 2$ agents.
\end{corollary}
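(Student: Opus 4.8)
The plan is to deduce this immediately from the two ingredients already in hand: the two-agent impossibility (Theorem~\ref{thm:2-agent-impossibility}) and the monotonicity of the worst-case guarantee in the number of agents (Theorem~\ref{thm:impossibility-non-increasing}). Concretely, I would argue by induction on~$n$, starting from $n=2$ and adding one agent and one good at a time.

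For the base case $n=2$, Theorem~\ref{thm:2-agent-impossibility} already supplies a two-agent, three-good instance $I_2$ (each good worth $2/3$ to both agents, with the two agents having conflicting subjective divisibility on two of the goods) whose MMS approximation guarantee $\gamma_2$ satisfies $\gamma_2 \le 2/3$. For the inductive step, assume that for some $n \ge 2$ there is an instance $I_n$ with $n$ agents whose MMS approximation guarantee $\gamma_n$ is at most $2/3$. Applying Theorem~\ref{thm:impossibility-non-increasing} to $I_n$ yields an instance $I_{n+1}$ with $n+1$ agents (and one extra good) whose MMS approximation guarantee $\gamma_{n+1}$ satisfies $\gamma_{n+1} \le \gamma_n \le 2/3$. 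Thus for every $n \ge 2$ there exists an instance with $n$ agents admitting no $\alpha$-MMS allocation for any $\alpha > 2/3$, which is exactly the statement of the corollary.

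I do not anticipate any genuine obstacle here, since both building blocks are already established and Theorem~\ref{thm:impossibility-non-increasing} is phrased in precisely the ``propagate an upper bound downward'' form the induction needs; the only thing worth spelling out is that no re-computation of maximin shares or tracking of the exact worst-case value is required beyond noting that the goods set grows by one at each step. (An alternative, slightly more hands-on route would be to directly generalize the instance of Theorem~\ref{thm:2-agent-impossibility} to $n$ agents, but invoking Theorem~\ref{thm:impossibility-non-increasing} is cleaner and reuses work already done.)
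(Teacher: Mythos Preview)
Your proposal is correct and matches the paper's intended argument: the corollary is stated immediately after Theorem~\ref{thm:impossibility-non-increasing} without a separate proof, precisely because it follows by combining the two-agent base case (Theorem~\ref{thm:2-agent-impossibility}) with the monotonicity result via induction on~$n$, exactly as you spell out.
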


On the positive side, we show below a $\frac{1}{2}$-MMS allocation always exists and can be found by \Cref{alg:n-agent-1/2-MMS}.
To achieve this, we first overload our notion of \emph{reducible-bundle} for the $\frac{1}{2}$-MMS target.

\begin{definition}[$\frac{1}{2}$-reducible bundle]
\label{def:1/2-reducible-bundle}
An integral bundle~$B$ is a \emph{$\frac{1}{2}$-reducible bundle with respect to agent~$i \in N$} if $u_i(B) \geq \MMS_i / 2$ and $u_j(M \setminus B) \geq (|N| - 1) \cdot \MMS_j$ for all other agents~$j \in N \setminus \{i\}$.
\end{definition}

\begin{algorithm}[t]
\caption{$n$-agent-$\frac{1}{2}$-MMS Algorithm}
\label{alg:n-agent-1/2-MMS}
\DontPrintSemicolon

\KwIn{Agents~$N = [n]$, goods~$M = \{g_1, g_2, \dots, g_m\}$, subjective divisibility, and utilities.}
\KwOut{A $\frac{1}{2}$-MMS allocation for agents~$N$.}

\lForEach{$i \in N$}{
	Initialize $A_i \gets \emptyset$ and compute the agent's maximin share~$\MMS_i(n, M)$.
}

$(N', (A_i)_{i \in N'}, N'', M'') \gets \highValuedAlg(1/2, N, M, (\MMS_i)_{i \in N})$ \tcp*{\Cref{alg:high-valued}.}

\While {$|N''| > 0$} {
	Find a $\frac{1}{2}$-reducible bundle $B \subseteq M''$ with respect to some agent $i \in N''$.\; \label{ALG-n:find-1/2-reducible-bundle}
	$A_i \gets B$, $M'' \gets M'' \setminus B$, $N'' \gets N'' \setminus \{i\}$\;
}

\Return{$(A_1, A_2, \dots, A_n)$}
\end{algorithm}

\Cref{alg:n-agent-1/2-MMS} for finding a $\frac{1}{2}$-MMS allocation for $n$ agents is much simpler: we first invoke \highValuedAlg (\Cref{alg:high-valued}) with $\beta = 1/2$ to assign all $\frac{1}{2}$-high-valued goods to a subset of agents.
Then, in the smaller instance without any $\frac{1}{2}$-high-valued goods, we can show that a $\frac{1}{2}$-reducible bundle always exists, allowing us to iteratively assigning bundles to all remaining agents.

\begin{theorem}
\label{thm:n-agent-1/2-MMS}
\Cref{alg:n-agent-1/2-MMS} computes a $\frac{1}{2}$-MMS allocation for $n$ agents.
\end{theorem}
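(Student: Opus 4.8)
The plan is to verify that \Cref{alg:n-agent-1/2-MMS} terminates with every agent receiving at least half her maximin share. The algorithm has two phases: the call to \highValuedAlg with $\beta = 1/2$, and the subsequent \verb|while|-loop that repeatedly extracts $\frac{1}{2}$-reducible bundles. For the first phase, I would simply invoke \Cref{lem:n-agent-high-valued}: it guarantees that each agent $i \in N'$ gets $u_i(A_i) \geq \MMS_i / 2$ (\ref{item:beta-MMS}), that each remaining agent $i \in N''$ satisfies $u_i(M'') \geq |N''| \cdot \MMS_i$ (\ref{item:k-MMS}), and that no good in the reduced instance is worth $\frac{1}{2}\MMS_i$ or more to any $i \in N''$ (\ref{item:good-value-UB-beta-MMS}). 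So it remains to handle the \verb|while|-loop, where the key claim is:

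\begin{quote}
\emph{Whenever $|N''| \geq 1$ and the invariant $u_i(M'') \geq |N''| \cdot \MMS_i$ holds for all $i \in N''$, together with $u_i(g) < \frac{1}{2}\MMS_i$ for all $i \in N''$ and $g \in M''$, there exists a $\frac{1}{2}$-reducible bundle $B \subseteq M''$ with respect to some agent.}
\end{quote}

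To prove the claim I would fix any agent $i \in N''$ and greedily build an integral bundle $B$ by adding goods one at a time (in any order) until $u_i(B) \geq \frac{1}{2}\MMS_i$ for the first time; this is possible because $u_i(M'') \geq |N''| \cdot \MMS_i \geq \MMS_i$. Since every single good is worth less than $\frac{1}{2}\MMS_i$ to $i$, the step that crosses the threshold adds less than $\frac{1}{2}\MMS_i$, so $u_i(B) < \MMS_i$, hence $u_i(M'' \setminus B) > u_i(M'') - \MMS_i \geq (|N''| - 1)\cdot \MMS_i$. This already shows $B$ is $\frac{1}{2}$-reducible \emph{with respect to $i$} provided the invariant for the \emph{other} agents $j \in N'' \setminus \{i\}$, namely $u_j(M'' \setminus B) \geq (|N''|-1)\cdot\MMS_j$, also holds — but that is exactly what is needed to maintain the loop invariant after $B$ is removed and $i$ is deleted from $N''$. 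The subtlety is that $B$ need not be small for the other agents $j$, so a single fixed $i$ may not work. I would resolve this the same way \Cref{lem:alg-cover-instances} does: let $i \in N''$ be the agent for whom a minimal-cardinality integral bundle reaching value $\geq \frac{1}{2}\MMS_i$ is \emph{smallest} among all agents in $N''$ (breaking ties arbitrarily), and argue that for this $i$ the bundle $B$ constructed above has, for every other $j \in N''$, cardinality no larger than $j$'s own threshold bundle, so $u_j(B) < \MMS_j$ and therefore $u_j(M'' \setminus B) > (|N''|-1)\cdot\MMS_j$. This restores the invariant for $N'' \setminus \{i\}$ with the count decremented by one.

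The main obstacle is precisely this selection-and-cardinality argument: ensuring that the greedily chosen bundle is simultaneously "enough" for $i$ and "not too much" for every other remaining agent, so that the loop invariant $u_j(M'') \geq |N''|\cdot\MMS_j$ is preserved after each deletion. Once the claim is established, the loop invariant holds initially (by \ref{item:k-MMS} of \Cref{lem:n-agent-high-valued}) and is maintained at each iteration; since $|N''|$ strictly decreases, the loop terminates after at most $|N''|$ iterations with $N'' = \emptyset$, and each agent removed in the loop got a bundle worth $\geq \frac{1}{2}\MMS_i$ by construction, while each agent removed in the first phase got $\geq \frac{1}{2}\MMS_i$ by \ref{item:beta-MMS}. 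Hence the final allocation is $\frac{1}{2}$-MMS, as claimed.
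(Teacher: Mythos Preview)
Your overall architecture is right and matches the paper: invoke \Cref{lem:n-agent-high-valued} with $\beta=1/2$ for the first phase, then maintain the invariant $u_j(M'')\ge |N''|\cdot\MMS_j$ through the \texttt{while}-loop by showing a $\tfrac{1}{2}$-reducible bundle always exists.

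There is, however, a genuine gap in your construction of the reducible bundle. You build $B$ for the chosen agent~$i$ greedily ``in any order'' and then claim $|B|\le k_j$ for every other~$j$, where $k_j$ is $j$'s minimal threshold cardinality. But a greedy bundle in arbitrary order need not have size $k_i$: if $i$ has one good worth $0.49\,\MMS_i$ and a hundred goods each worth $\MMS_i/200$, then $k_i=3$, yet adding the small goods first yields $|B|=100$. With $|B|\gg k_j$ you lose the bound $u_j(B)<\MMS_j$, and the invariant can fail. The fix is easy: either take $B$ to be $i$'s actual minimal-cardinality threshold bundle (then $|B|=k_i\le k_j$, and since any $(k_j-1)$-subset has $u_j<\MMS_j/2$, removing one good from $B$ shows $u_j(B)<\MMS_j$), or add goods in $i$'s decreasing-value order.

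The paper sidesteps the cardinality argument entirely with a cleaner device: add goods one at a time and stop the \emph{first time any agent} $i\in N''$ crosses $\MMS_i/2$. That agent receives $B$; since no one had crossed before the last good, every other $j$ had $u_j(B\setminus\{\text{last}\})<\MMS_j/2$, hence $u_j(B)<\MMS_j$ and $u_j(M''\setminus B)>(|N''|-1)\MMS_j$. This avoids selecting $i$ in advance and makes the ``not too much for the others'' part automatic.
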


\begin{proof}
Given any instance, \Cref{alg:n-agent-1/2-MMS} first invokes \Cref{alg:high-valued} to handle allocations involving $\frac{1}{2}$-high-valued goods.
According to \Cref{lem:n-agent-high-valued}, \Cref{alg:high-valued} outputs a $\frac{1}{2}$-MMS allocation for agent~$N' \subseteq N$ and a (reduced) instance with agents~$N'' = N \setminus N'$ and goods~$M'' = M \setminus \bigcup_{i \in N'} A_i$ in the sense that for each~$i \in N''$, $\utility{i}{M''} \geq |N''| \cdot \MMS_i$ and for each~$g \in M''$, $\utility{i}{g} < \MMS_i / 2$.
Since for all~$j \in N''$ and~$g \in M''$, $\utility{j}{g} < \MMS_j / 2$, in each iteration of the \verb|while|-loop, a $\frac{1}{2}$-\goodBundle can be found in \cref{ALG-n:find-1/2-reducible-bundle} by adding one good at a time from~$M''$ to an empty bundle~$B$ until $\utility{i}{B} \in [\MMS_i / 2, \MMS_i]$ for some~$i \in N''$.
Clearly, $\utility{j}{B} \leq \MMS_j$.
It means that in each iteration, agent~$i$ is satisfied by receiving bundle~$B$ and the remaining agents have enough value for the remaining goods.
\end{proof}

\medskip

Closing the gap between \Cref{coro:n-agent-impossibility} and \Cref{thm:n-agent-1/2-MMS} for~$n \geq 4$ agents remains an intriguing open question.
We conjecture that $\frac{2}{3}$-MMS is the ultimate answer.

\subsection{Computation and Economic Efficiency}
\label{sec:MMS:discussion}

We first discuss the computational issues revolved around our algorithms and next move on to the additional consideration of achieving economic efficiency.

\subsubsection*{Computation}

\Cref{alg:high-valued,alg:2-agent-2/3-MMS,alg:2/3-MMS,alg:n-agent-1/2-MMS} do not run in polynomial time as they need agents' maximin share.
Nevertheless, by using the PTAS of \citet[Lemma~4]{BeiLiLu21} to compute agents' \emph{approximate} maximin share, we can slightly modify \Cref{alg:2-agent-2/3-MMS,alg:2/3-MMS} and compute a $\left( (1 - \varepsilon) \cdot \frac{2}{3} \right)$-MMS allocation in the $2$- and $3$-agent cases, respectively, in polynomial time:
\begin{itemize}
\item When we need to compute an agent's MMS, we instead compute its $(1 - \varepsilon)$-approximation via the PTAS of \citet{BeiLiLu21}.

\item We then replace all exact MMS values used in our algorithms with their $(1 - \varepsilon)$-approximations.
Throughout the process of our algorithms, only elementary operations are applied to the MMS values, and the multiplicative $(1 - \varepsilon)$ factor can be readily pulled out to the front, which eventually lead to a $(1 - \varepsilon)$-approximate $\frac{2}{3}$-MMS allocation.
\end{itemize}

Other steps in \Cref{alg:2-agent-2/3-MMS} not involving MMS values can be easily implemented efficiently.

For \Cref{alg:2/3-MMS}, as it is not trivial to decide the existence of a \goodBundle (and compute one when it exists), we describe a computationally efficient method to implement the algorithm.
To be more specific, we use conditions~\ref{item:pair-of-goods}, \ref{item:three-supporters} and~\ref{item:valuable-small-goods} of \Cref{lem:alg-cover-instances} to compute a \goodBundle.
Checking these conditions can be done in polynomial time by simple enumeration.
If any condition fails to hold, we will be able to find a reducible bundle efficiently by \Cref{lem:alg-cover-instances}.
When all three conditions hold simultaneously, although this does not necessarily rule out the existence of a \goodBundle, \Cref{alg:2/3-MMS} will proceed to \crefrange{ALG-3:not-good-bundle}{ALG-3:common-div:last-agent}, and the correctness follows from \Cref{lem:3-agent-handle-no-good-bundle}.

Analogously, by using $(1 - \varepsilon)$-approximate MMS values in \Cref{alg:n-agent-1/2-MMS}, we can obtain a $\left( (1 - \varepsilon) \cdot \frac{1}{2} \right)$-MMS for $n$ agents in polynomial time, following similar arguments as in the two- and three-agent cases.

\subsubsection*{Economic Efficiency}

Existence-wise, whenever an $\alpha$-MMS allocation is guaranteed to exist, an $\alpha$-MMS and PO allocation always exists.
This is simply because any Pareto improvement of the $\alpha$-MMS allocation remains to be $\alpha$-MMS and the $\alpha$-MMS allocation which does not admit any Pareto improvement is Pareto optimal by definition.
We thus have the following corollaries:

\begin{corollary}
In the $2$- and $3$-agent cases, a $\frac{2}{3}$-MMS and PO allocation always exists.

In the $n$-agent case, a $\frac{1}{2}$-MMS and PO allocation always exists.
\end{corollary}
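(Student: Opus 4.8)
The plan is to upgrade an arbitrary approximate‑MMS allocation into a Pareto optimal one without weakening the share guarantee, using only the existence results already proved: \Cref{thm:2-agent-2/3-MMS} and \Cref{thm:3-agent-2/3-MMS} give a $\frac23$‑MMS allocation for two and three agents, while \Cref{thm:n-agent-1/2-MMS} gives a $\frac12$‑MMS allocation for $n$ agents. Fix the relevant constant $\alpha \in \{\frac12,\frac23\}$ and let $\alloc^0 = (A_1^0,\dots,A_n^0)$ be any $\alpha$‑MMS allocation. Consider the set $S$ of all allocations $\alloc$ with $\utility{i}{A_i} \geq \utility{i}{A_i^0}$ for every $i \in N$. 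Then $S \neq \emptyset$ since $\alloc^0 \in S$, and every member of $S$ is $\alpha$‑MMS because $\utility{i}{A_i} \geq \utility{i}{A_i^0} \geq \alpha \cdot \MMS_i$ for all $i$. Hence it suffices to produce a PO allocation inside $S$.

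The key step is to show that $S$ contains an allocation $\alloc^*$ maximizing the utilitarian welfare $\sum_{i \in N}\utility{i}{A_i}$ over $S$; such an allocation is automatically PO, since any Pareto improvement of it would still weakly dominate $\alloc^0$ (hence lie in $S$) yet have strictly larger total utility, contradicting maximality. To obtain the maximizer, I would exploit that divisible goods are homogeneous, so an agent's utility for a piece depends only on its length (and, for an indivisible good, only on whether that length equals $1$). Thus the entire utility profile of an allocation is determined by the length matrix $x = (x_{ig})_{i\in N,\,g\in M}$ with $x_{ig} = \length{g^{A_i}}$ and $\sum_{i\in N} x_{ig} = 1$ for each $g$, which ranges over a product of simplices, a compact set; moreover every such matrix is realized by an allocation (split each interval into consecutive pieces of the prescribed lengths, keeping whole indivisible goods intact). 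On this compact domain the welfare equals $\sum_{i,g} x_{ig}\,\utility{i}{g}\,\indicator{(g\in\indGoodsOf{i}\land x_{ig}=1)\lor g\in\divGoodsOf{i}}$, which is linear (hence continuous) in the divisible coordinates and, in each indivisible coordinate, of the form $t \mapsto c\,\indicator{t=1}$ on $[0,1]$ — upper semicontinuous, the only jump being an upward one at the vertex $t=1$. A finite sum of upper semicontinuous functions is upper semicontinuous, and an upper semicontinuous function on a compact set attains its supremum, so the maximizing length matrix — and with it the desired $\alloc^* \in S$ — exists.

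Combining the two parts, $\alloc^*$ is $\alpha$‑MMS because it belongs to $S$, and PO because it is welfare‑maximal in $S$; taking $\alpha = \frac23$ for $n\in\{2,3\}$ and $\alpha = \frac12$ for general $n$ proves the corollary. I expect the only genuinely delicate point to be the existence of the welfare maximizer: the indicator terms in the utilities are discontinuous, so one cannot invoke a naive compactness argument directly, but observing that the discontinuities are upward jumps at vertices of the simplices makes the welfare upper semicontinuous and rescues the extreme‑value argument. The remaining ingredients — $\alloc^0 \in S$, inheritance of the $\alpha$‑MMS bound across $S$, and the implication "welfare‑maximal in $S$ $\Rightarrow$ PO'' — are immediate.
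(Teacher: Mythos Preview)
Your argument is correct and rests on the same observation the paper uses: any Pareto improvement of an $\alpha$-MMS allocation is again $\alpha$-MMS, so an $\alpha$-MMS allocation admitting no Pareto improvement is simultaneously $\alpha$-MMS and PO. The paper's proof stops at this observation and simply asserts that such an allocation exists; you go further and actually supply the existence argument, parameterizing allocations by length matrices on a compact product of simplices and exploiting that the indicator-type discontinuities in the utilities are upward jumps at simplex vertices, whence upper semicontinuity and the extreme-value theorem. So the route is the same, but your version is the more rigorous of the two.

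One small point to tighten: you need to maximize the welfare over $S$, not over the full product of simplices, so you must check that $S$ is compact. This follows immediately from the upper-semicontinuity you already established, applied to each individual utility $u_i$ rather than just to their sum: the super-level set $\{x : u_i(x) \geq u_i(A_i^0)\}$ is closed for every $i$, and $S$ is the finite intersection of these closed sets inside the compact simplex product. With that sentence added the argument is complete.
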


Despite the existence argument, it is, however, NP-hard to find Pareto improvements even for indivisible-goods allocation~\citep[see, e.g.,][]{AzizBiLa16,deKeijzerBoKl09}.
It is an interesting direction to design computationally efficient algorithms which outputs allocations that are both PO and have good approximation to MMS.

\section{Envy-freeness Relaxations}
\label{sec:EFM}

This section is concerned with envy-freeness relaxations, which are comparison-based fairness notions.
We start by introducing relevant fairness concepts.
An allocation~$\alloc = (A_1, A_2, \dots, A_n)$ is said to satisfy \emph{envy-freeness (\EF{})} if for any pair of agents~$i, j \in N$, $u_i(A_i) \geq u_i(A_j)$~\citep{Foley67}.
Relaxations of envy-freeness have been proposed for indivisible-goods allocation~\citep{Budish11,CaragiannisKuMo19,LiptonMaMo04}.
An allocation~$\alloc = (A_1, A_2, \dots, A_n)$ of indivisible goods is said to satisfy
\begin{itemize}
\item \emph{envy-freeness up to \emph{any} good (\EFX)} if for any pair of agents~$i, j \in N$ with $A_j \neq \emptyset$ and any good~$g \in A_j$ such that $u_i(g) > 0$, we have $u_i(A_i) \geq u_i(A_j \setminus \{g\})$;

\item \emph{envy-freeness up to \emph{one} good (\EFOne)} if for any pair of agents~$i, j \in N$ and~$A_j \neq \emptyset$, there exists some good~$g \in A_j$ such that $u_i(A_i) \geq u_i(A_j \setminus \{g\})$.
\end{itemize}

Clearly, \EF{} $\implies$ \EFX $\implies$ \EFOne.
In the mixed-goods model where all agents agree on whether a good is divisible or indivisible, \citet{BeiLiLi21} introduced \emph{envy-freeness for mixed goods (\EFM)}, a natural generalization of both \EF{} and \EFOne.
The notion can be further strengthened by replacing \EFOne with \EFX~\citep{BeiLiLi21,NishimuraSu23}.
We adapt the notions to our setting as follows.

\begin{definition}[\EFM and \EFXM]
\label{def:EFM}
An allocation~$(A_1, A_2, \dots, A_n)$ is said to satisfy
\begin{itemize}
\item \emph{\EFM} if for any pair of agents~$i, j \in N$,
\begin{itemize}
\item if $g \in \indGoodsOf{i}$ for all~$g \in A_j$, there exists some~$g \in A_j$ such that $u_i(A_i) \geq u_i(A_j \setminus \{g\})$;
\item otherwise, $u_i(A_i) \geq u_i(A_j)$.
\end{itemize}

\item \emph{\EFXM} if for any pair of agents~$i, j \in N$,
\begin{itemize}
\item if $g \in \indGoodsOf{i}$ for all~$g \in A_j$, then for any~$g \in A_j$ with $u_i(g) > 0$, $u_i(A_i) \geq u_i(A_j \setminus \{g\})$;
\item otherwise, $u_i(A_i) \geq u_i(A_j)$.
\end{itemize}
\end{itemize}
\end{definition}

Put differently, \EFM requires that either agent~$i$ is envy-free towards agent~$j$, or agent~$i$ regards all goods in~$A_j$ as indivisible and the envy from agent~$i$ to agent~$j$ can be eliminated by (hypothetically) removing some good from~$A_j$.
In a similar vein, the stronger \EFXM requires that if agent~$i$ envies agent~$j$, then agent~$i$ regards all goods in~$A_j$ as indivisible and the envy can be eliminated by (hypothetically) removing any positively valued good from~$A_j$.

Another relaxation that has been investigated in the mixed-goods model is \emph{envy-freeness up to one good for mixed goods (\EFoneM)}, a weaker notion than \EFM~\citep{CaragiannisKuMo19,NishimuraSu23}.
The concept can be adapted to our setting in the following natural way.

\begin{definition}[\EFoneM]
An allocation $(A_1, A_2, \dots, A_n)$ is said to satisfy \emph{\EFoneM} if for any pair of agents~$i, j \in N$,
\begin{itemize}
\item if $A_j$ contains some good $g$ that agent $i$ considers indivisible and values positively (i.e., $g \in M^{IND}_i$ and $u_i(g) > 0$), then $u_i(A_i) \geq u_i(A_j \setminus \{g\})$;

\item otherwise, $u_i(A_i) \geq u_i(A_j)$.
\end{itemize}
\end{definition}

In words, as long as there exists agent~$i$'s indivisible good being included integrally in~$A_j$, \EFoneM requires that agent~$i$ should be envy-free towards agent~$j$ once agent~$i$'s most valued indivisible good is removed from~$A_j$; otherwise, agent~$i$ does not envy agent~$j$.
Clearly, \EF{} $\implies$ \EFXM $\implies$ \EFM $\implies$ \EFoneM.

As mentioned earlier in \Cref{sec:introduction}, envy-freeness (and hence \EFXM, \EFM and \EFoneM) alone can be satisfied vacuously by partitioning each good into $n$ equal parts and giving each agent one part.
The allocation is apparently undesirable, especially when all goods are indivisible for all agents.
We therefore associate (almost) envy-free allocations with economic efficiency considerations, and additionally define the following minimal property of efficiency, a much weaker notion than PO.

\begin{definition}[Non-wastefulness]
\label{def:non-wastefulness}
An allocation $(A_1, A_2, \dots, A_n)$ is said to be \emph{non-wasteful} if for any good~$g \in M$ and~$i \in N$ such that $g^{A_i} \neq \emptyset$, we have $\utility{i}{g^{A_i}} > 0$.
\end{definition}

In a non-wasteful allocation, any good in an agent's bundle yields positive utility.
Non-wastefulness excludes the scenario where agents receive a fraction of their indivisible goods.

\subsection{EFM and EFXM}

We first study to what extent \EFM and \EFXM is compatible with economic efficiency notions.
The vacuous \EFM\ / \EFXM allocation mentioned above violates non-wastefulness.
As a matter of fact, we show below that \EFM is incompatible with non-wastefulness, even for two agents.

\begin{theorem}
\label{thm:EFM-non-wasteful-incompatible}
For any set of agents $N = [n]$ with $n \geq 2$, there exists a set of goods $M$ of size $m = n+1$ that do not admit an \EFM (hence, \EFXM) and non-wasteful allocation.
\end{theorem}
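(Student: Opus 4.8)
The plan is to construct, for every $n \ge 2$, an instance on $m = n+1$ goods such that (a)~every non-wasteful allocation is forced to be \emph{integral}, i.e.\ every good is handed out in its entirety to a single agent, and (b)~no such integral allocation is \EFM. Point~(a) is the key reduction: once attention is restricted to integral allocations there are only finitely many of them, and subjective divisibility enters solely through the \EFM test---an agent who holds no fraction of a good $g$ still cares about who holds $g$, via the ``otherwise'' branch of \Cref{def:EFM}.

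The heart of the construction is the case $n=2$: agents $\{1,2\}$ and goods $\{g_1,g_2,g_3\}$ with
\begin{center}
\begin{tabular}{@{}c|*{3}{l}@{}}
\toprule
& $g_1$ & $g_2$ & $g_3$ \\
\midrule
Agent~$1$ & div, $1$ & ind, $2$ & ind, $2$ \\
Agent~$2$ & ind, $2$ & div, $1$ & ind, $2$ \\
\bottomrule
\end{tabular}
\end{center}
which is symmetric under swapping the two agents together with $g_1 \leftrightarrow g_2$. Since each good is indivisible for at least one agent, in a non-wasteful allocation no agent can receive a strict fraction of any good (that fraction would be valueless yet nonempty), so a non-wasteful allocation simply partitions $\{g_1,g_2,g_3\}$ into two bundles---eight allocations, four up to symmetry. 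For each I would inspect the \EFM condition for the agent $i$ holding the smaller bundle towards the agent $j$ holding the larger one. Two phenomena cover all cases. If the larger bundle $A_j$ contains agent $i$'s divisible good, then $A_j$ is not entirely indivisible for $i$, so \EFM demands full envy-freeness; but $i$'s bundle is worth strictly less to her than $A_j$, so this fails. Otherwise---which, one checks, forces $i$ to hold exactly her single divisible good (value~$1$) while $A_j$ consists of her two indivisible goods (worth~$4$ to her, each good worth~$2$)---deleting any one good from $A_j$ still leaves value~$2 > 1$, so the \EFOne-type relief also fails. (The extreme split, all three goods to one agent, is an instance of the first phenomenon.)

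To obtain an instance for general $n \ge 2$, I would append $n-2$ dummy agents $3,\dots,n$ together with $n-2$ fresh goods $g_4,\dots,g_{n+1}$, where dummy~$j$ regards every good as indivisible, values $g_{j+1}$ at~$1$, and values all other goods at~$0$; this preserves $m=n+1$ and keeps every good positively valued by someone. In any non-wasteful allocation, no agent may hold a nonempty piece of a good she values at~$0$; hence agents $1,2$ receive nothing outside $\{g_1,g_2,g_3\}$, each $g_{j+1}$ must be given entirely to the unique agent (dummy~$j$) valuing it, and dummy~$j$ receives nothing else. Consequently $\{g_1,g_2,g_3\}$ is partitioned between agents $1$ and $2$ exactly as in the two-agent core, and since agents $1$ and $2$ place value $0$ on the remaining goods (which moreover lie outside their bundles), the pair $(1,2)$ exhibits the very same \EFM violation. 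Because \EFXM implies \EFM, the same instances rule out \EFXM.

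The main obstacle is designing the two-agent instance so that \emph{every} integral partition fails \EFM. The symmetric choices one first tries do not work: with equal values, whenever an agent holds two goods her opponent deems indivisible the \EFOne-type relief succeeds with equality; and endowing each agent with a high-valued divisible ``own'' good lets her be content with it alone. The point of the asymmetric choice above is that making the divisible good worth strictly less than the two indivisible ones ensures that in every split the disadvantaged agent either is forced into the full-envy-freeness branch (because she sees her own divisible good in the other bundle) or is left too far behind for the \EFOne-type relief; confirming this is a short finite check, and the reduction to it for larger $n$ is routine.
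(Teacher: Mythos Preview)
Your proposal is correct. The two-agent core is essentially the same as the paper's: one common indivisible good plus two goods with ``crossed'' divisibility, with each agent's own divisible good worth strictly less than her two indivisible ones; the paper uses values $(1-\varepsilon/2,\ \varepsilon,\ 1-\varepsilon)$ in place of your $(2,1,2)$, but the case analysis is identical in spirit.

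Where you genuinely diverge is in the extension to $n\ge 3$. You append dummy agents $3,\dots,n$ and fresh goods $g_4,\dots,g_{n+1}$ with each dummy~$j$ valuing only $g_{j+1}$ and agents~$1,2$ valuing the new goods at~$0$. Non-wastefulness then forces $g_{j+1}$ to go entirely to dummy~$j$ and forbids dummies from touching $\{g_1,g_2,g_3\}$, so the pair $(1,2)$ is isolated exactly as in the two-agent instance and the \EFM violation carries over verbatim. The paper instead lets agents~$1$ and~$2$ value the new goods at~$1$ (and makes each new good divisible for its dedicated agent), which means in a non-wasteful allocation these goods could in principle be redirected to agents~$1$ or~$2$; the paper must then rule out all such redistributions via a separate case table (their \Cref{table:redistribution-not-EFM}). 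Your reduction is cleaner and shorter; the paper's construction is slightly more ``robust'' in that it does not rely on zero-valued goods to pin down the allocation, but this buys nothing for the theorem as stated.
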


\begin{proof}[Proof (for two agents)]
It suffices to show \EFM is incompatible with non-wastefulness.
We provide below an example for two agents, and defer the $n$-agent case in \Cref{app:omitted-proofs}.
Consider the following instance with $0 < \varepsilon < 1/2$:
\begin{center}
\begin{tabular}{@{}c|*{3}{l}@{}}
\toprule
& $g_0$ & $g_1$ & $g_2$ \\
\midrule
Agent~$1$ & ind, $1 - \varepsilon / 2$ & div, $\varepsilon$ & ind, $1 - \varepsilon$ \\
Agent~$2$ & ind, $1 - \varepsilon / 2$ & ind, $1 - \varepsilon$ & div, $\varepsilon$ \\
\bottomrule
\end{tabular}
\end{center}

First of all, in any complete allocation, non-wastefulness requires all goods to be allocated in their entirety.
Second, in any EFM allocation, no agent gets an empty bundle because they all regard some good as divisible and would not be envy-free towards the other agent by being empty-handed.
Hence, one agent (denoting as agent~$a$) receives one good and the other agent (denoting as agent~$b$) receives two goods.
\begin{itemize}
\item If agent~$a$ gets her unique divisible good, she still envies agent~$b$ even if some good is removed from agent~$b$'s bundle.
\item If agent~$a$ gets one of her indivisible good, she envies agent~$b$ even though agent~$b$'s bundle contains agent~$a$'s divisible good.
\end{itemize}
In either case, \EFM is violated.
The conclusion follows.
\end{proof}

\Cref{thm:EFM-non-wasteful-incompatible} renders a strong conflict between economic efficiency and fairness when allocations are required to be complete.
If \emph{partial} allocations are allowed, that is, only a subset of the goods~$M$ is allocated, our next result shows for two agents, an \EFXM (and therefore \EFM) and non-wasteful allocation is always attainable when \emph{at most} one good is discarded.
Due to our subjective divisibility assumption, we allow some good to be partially discarded.

This result is inspired by the line of work which, given that a complete \EFX allocation is only known to exist for at most three agents~\citep{ChaudhuryGaMe24,PlautRo20}, proves the existence of a \emph{partial} \EFX allocation of indivisible goods, subject to few unallocated goods (i.e., charity)~\citep{AkramiAlCh23,BergerCoFe22,ChaudhuryKaMe21}.
It is worth noting that in our context, this approach helps us circumvent an \emph{impossibility}, and ensure the existence of \EFXM and non-wasteful allocations for two agents by discarding at most one good.

\begin{algorithm}[t]
\caption{$2$-Agent \EFXM and Non-wasteful Allocation with Charity}
\label{alg:2-agent-charity}
\DontPrintSemicolon

\KwIn{Two agents~$N = [2]$, goods~$M$, subjective divisibility, and utilities.}
\KwOut{An \EFXM and non-wasteful allocation with at most one good being unallocated.}

$A_1, A_2 \gets \emptyset$\;
Temporarily regard all goods as indivisible. Feed two copies of agent~$1$ into Algorithm~6.1 of \citet{PlautRo20} to have an integral \EFX partition~$(P_1, P_2)$ for agent~$1$. Assume w.l.o.g.\ that agent~$1$'s zero-valued goods are in~$P_2$, and $\utility{1}{P_1} \geq \utility{1}{P_2}$.\; \label{ALG-2:charity:a1-two-copies}

\If{$u_1(P_1) = u_1(P_2)$ or $u_2(P_1) \leq u_2(P_2)$}{
	Give agent~$2$ her preferred bundle between~$P_1$ and~$P_2$ and agent~$1$ the other.\;
	Both agents move their zero-valued goods from their own bundle to the other agent's.\;
	\Return{$(A_1, A_2)$}
}

\tcp{$u_1(P_1) > u_2(P_2)$ and $u_2(P_1) > u_2(P_2)$.}
\lIf{$\divGoodsOf{1} \cap P_1 = \emptyset$}{ \label{ALG-2:charity:all-indivisible}
	$(A_1, A_2) \gets (P_2, P_1)$ \label{ALG-2:charity:all-indivisible-alloc}
}\Else(\tcp*[h]{Agent~$1$ regards some good in~$P_1$ as divisible.}){
	\eIf(\tcp*[h]{Agents~$1, 2$ have common divisible goods in~$P_1$.}){$\divGoodsOf{1} \cap \divGoodsOf{2} \cap P_1 \neq \emptyset$}{
		Pick any~$o \in \divGoodsOf{1} \cap \divGoodsOf{2} \cap P_1$, and divide~$o$ into two pieces~$o'$ and $o''$ such that $u_1(P'_1 \coloneqq P_1 \setminus \{o\} \cup \{o'\}) = u_1(P'_2 \coloneqq P_2 \cup \{o''\})$.\;
		Give agent~$2$ her preferred bundle between~$P'_1$ and~$P'_2$ and agent~$1$ the other.\;
	}(\tcp*[h]{Agents~$1$ and~$2$ do not have common divisible goods in~$P_1$.}){
		Pick any~$o \in \divGoodsOf{1} \cap P_1$.\;
		\lIf{$u_2(P_1 \setminus \{o\}) \geq u_2(P_2)$}{
			$(A_1, A_2) \gets (P_2, P_1 \setminus \{o\})$
		}\Else{
			Let~$\widetilde{o} \subsetneq o$ such that $u_1(P_1 \setminus \{o\} \cup \{\widetilde{o}\}) \geq u_1(P_2)$.\;
			$(A_1, A_2) \gets (P_1 \setminus \{o\} \cup \{\widetilde{o}\}, P_2)$\;
		}
	}
}

Let both agents move their zero-valued goods from their own bundle to the other agent's.\;

\Return{$(A_1, A_2)$}
\end{algorithm}

\begin{theorem}
\label{thm:2-agent-EFXM-non-wasteful-w.-charity}
Given any $2$-agent instance with goods~$M$, \Cref{alg:2-agent-charity} computes an \EFXM and non-wasteful allocation in polynomial time if at most one good from~$M$ is unallocated.
\end{theorem}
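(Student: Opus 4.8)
The plan is to verify the three requirements—being \EFXM, non-wastefulness, and discarding at most one (possibly only partially discarded) good—branch by branch through \Cref{alg:2-agent-charity}. First I would record two standing facts. Since the bundles $P_1,P_2$ of \cref{ALG-2:charity:a1-two-copies} are integral, agent~$1$'s utility for either of them equals the ``all-indivisible'' utility fed to \citet{PlautRo20}'s routine, so we have $u_1(P_1)\ge u_1(P_2)$ together with the \EFX guarantee $u_1(P_2)\ge u_1(P_1\setminus\{g\})$ for every $g\in P_1$ with $u_1(g)>0$, and we may assume all of agent~$1$'s zero-valued goods lie in $P_2$. Second, I would isolate a \emph{zero-move lemma}: for any partial allocation $(B_1,B_2)$ into whole goods (possibly plus pieces of one divided good that are positively valued by their holders), after each agent transfers to the other the goods she values at $0$, the resulting allocation $(A_1,A_2)$ is non-wasteful—this is where the standing assumption that every good is positively valued by someone is used—and $u_i(A_i)=u_i(B_i)+u_i(Z_{3-i})$, $u_i(A_{3-i})=u_i(B_{3-i})-u_i(Z_{3-i})$, where $Z_{3-i}$ collects the goods in $B_{3-i}$ that agent~$3-i$ values at $0$; in particular $u_i(A_i)-u_i(A_{3-i})\ge u_i(B_i)-u_i(B_{3-i})$, so $u_i(B_i)\ge u_i(B_{3-i})$ for both $i$ already yields full envy-freeness of $(A_1,A_2)$.

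With these in hand, in each branch I exhibit the pre-transfer allocation $(B_1,B_2)$ and bound envy. In the branch $u_1(P_1)=u_1(P_2)$ or $u_2(P_1)\le u_2(P_2)$ (breaking ties so that a weakly preferring agent~$2$ takes $P_2$, hence agent~$1$ keeps the bundle she weakly prefers whenever $u_1(P_1)>u_1(P_2)$), both agents weakly prefer their assigned bundle, so the lemma gives envy-freeness. In the common-divisible-good sub-case, the cut of $o$ is chosen—via the intermediate value theorem / linearity of $u_1$ in the length of $o'$, using $u_1(o)>0$ and the \EFX bound to get a sign change—so that $u_1(P_1')=u_1(P_2')$ and agent~$2$ picks her weakly preferred piece, again reducing to the lemma. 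In the ``no common divisible good in $P_1$'' sub-case: in the first branch $(B_1,B_2)=(P_2,\,P_1\setminus\{o\})$ is envy-free for agent~$1$ because $u_1(o)>0$ gives $u_1(P_2)\ge u_1(P_1\setminus\{o\})$ by the \EFX bound, and for agent~$2$ by the branch test; in the second branch $\widetilde o\subsetneq o$ with $u_1(P_1\setminus\{o\}\cup\{\widetilde o\})\ge u_1(P_2)$ exists since $u_1(P_1)>u_1(P_2)$, agent~$1$ is then envy-free, and agent~$2$ is envy-free because $o\in\indGoodsOf{2}$ forces $u_2(\widetilde o)=0$ while the branch test gives $u_2(P_1\setminus\{o\})<u_2(P_2)$.

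The one branch that genuinely needs the \EFXM relaxation—and is therefore the crux—is $\divGoodsOf{1}\cap P_1=\emptyset$ inside the case $u_1(P_1)>u_1(P_2)$, $u_2(P_1)>u_2(P_2)$: we must give agent~$2$ all of $P_1$ (giving her $P_2$ would leave her envying $A_1$ with no \EFXM escape, as $P_1$ need not be indivisible for \emph{her}) and agent~$1$ only $P_2$. Agent~$2$ is envy-free since $u_2(P_1)>u_2(P_2)$. Agent~$1$ does envy, but every good she sees in $A_2$ is indivisible for her (the goods of $P_1$ by the branch hypothesis, transferred zero-valued goods by the convention that zero-valued goods are indivisible), so \EFXM only asks that removing any positively valued $g$ from $A_2$ kill the envy, and $u_1(A_1)\ge u_1(P_2)\ge u_1(P_1\setminus\{g\})\ge u_1(A_2\setminus\{g\})$ by the \EFX property of the Plaut--Rothe partition. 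Finally I would observe that at most one good is ever removed from the allocation—namely $o$, entirely in one sub-branch and partially ($o\setminus\widetilde o$) in the other—and that every step runs in polynomial time, since \citet{PlautRo20}'s routine on two identical valuations is polynomial and every cut solves a linear equation in an interval length. The main obstacle I anticipate is not any single estimate but keeping the bookkeeping of the zero-valued-good transfer consistent across all branches; the zero-move lemma is designed precisely to make that bookkeeping uniform.
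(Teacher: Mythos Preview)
Your proposal is correct and follows essentially the same branch-by-branch structure as the paper's own proof. Your explicit ``zero-move lemma'' is a clean abstraction of what the paper handles inline in each case, and your tie-breaking remark in the first branch (forcing agent~$2$ to take~$P_2$ when indifferent, so that agent~$1$ retains~$P_1$ whenever $u_1(P_1)>u_1(P_2)$) patches a detail the paper leaves implicit.
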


\begin{proof}
We start by showing the allocation returned by \Cref{alg:2-agent-charity} is \EFXM and non-wasteful.
First of all, we temporarily regard all goods as indivisible, and obtain an integral partition~$(P_1, P_2)$ of goods~$M$ that is \EFX for agent~$1$ by executing Algorithm~6.1 of \citet{PlautRo20} on two copies of agent~$1$.
Assume without loss of generality that agent~$1$'s zero-valued goods are in~$P_2$, and $u_1(P_1) \geq u_1(P_2)$.

If $u_1(P_1) = u_1(P_2)$ or $u_2(P_1) \leq u_2(P_2)$, we give agent~$2$ her preferred bundle between~$P_1$ and~$P_2$ and agent~$1$ the other.
This allocation is \EF{}.
In order to fulfil non-wastefulness, let both agents move their zero-valued goods from their own bundle to the other agent's bundle.
The updated allocation is still \EF{}.
We thus have a complete allocation that is envy-free and non-wasteful.

In the following, we focus on the case where $u_1(P_1) > u_1(P_2)$ and $u_2(P_1) > u_2(P_2)$.
Recall that due to how we get partition~$(P_1, P_2)$, we have $u_1(P_1 \setminus \{g\}) \leq u_1(P_2)$ for all~$g \in P_1$.
If agent~$1$ regards all goods in~$P_1$ as indivisible, then following the definition of \EFXM (\Cref{def:EFM}), allocation~$(P_2, P_1)$ satisfies \EFXM.
This allocation fails non-wastefulness because~$P_2$ contains agent~$1$'s zero-valued goods.
Let both agents exchange their own zero-valued good, and the resulting allocation is \EFXM and non-wasteful.
We then move on to the scenario where agent~$1$ regards some good in~$P_1$ as divisible, and distinguish the following two cases based on whether agents~$1$ and~$2$ have a common divisible good in~$P_1$ or not.
\begin{itemize}
\item $\divGoodsOf{1} \cap \divGoodsOf{2} \cap P_1 \neq \emptyset$.
We pick any common divisible good~$o \in \divGoodsOf{1} \cap \divGoodsOf{2} \cap P_1$, and divide~$o$ into two pieces~$o'$ and~$o''$ such that agent~$1$ values $P_1' \coloneqq P_1 \setminus \{o\} \cup \{o'\}$ and $P'_2 \coloneqq P_2 \cup \{o''\}$ equally.
Next, we give agent~$2$ her preferred bundle between~$P_1'$ and~$P'_2$ and agent~$1$ the other.
The resulting allocation is \EF{}.

\item $\divGoodsOf{1} \cap \divGoodsOf{2} \cap P_1 = \emptyset$.
Consider any~$o \in \divGoodsOf{1} \cap P_1$.
If $u_2(P_1 \setminus \{o\}) \geq u_2(P_2)$, allocation $(P_2, P_1 \setminus \{o\})$ is \EF{}.
Otherwise, we have $u_2(P_1 \setminus \{o\}) < u_2(P_2)$.
Let $\widetilde{o} \subsetneq o$ be such that $u_1(P_1 \setminus \{o\} \cup \{\widetilde{o}\}) \geq u_1(P_2)$.
Such an operation is valid because $u_1(P_1) > u_1(P_2)$, $u_1(P_1 \setminus \{o\}) \leq u_1(P_2)$, and good~$o$ is divisible from agent~$1$'s perspective.
Note also that since~$o$ is indivisible from agent~$2$'s perspective, $u_2(P_1 \setminus \{o\} \cup \{\widetilde{o}\}) = u_2(P_1 \setminus \{o\}) < u_2(P_2)$.
Therefore, allocation~$(P_1 \setminus \{o\} \cup \{\widetilde{o}\}, P_2)$ is \EF{}.
\end{itemize}
Again, by exchanging the agents' zero-valued goods, the updated allocations remain \EF{} and, moreover, satisfy non-wastefulness.
It is clear that at most one good is unallocated.

Finally, we analyze the running time of \Cref{alg:2-agent-charity}.
Due to \citet[Theorem~6.2]{PlautRo20}, \cref{ALG-2:charity:a1-two-copies} terminates in polynomial time.
All remaining steps are simple and can be implemented in polynomial time as well.
\end{proof}

If considering the stronger \EF{} notion, the following example demonstrates that even if we allow to dispose of at most one good, an \EF{} allocation is not guaranteed to exist.

\begin{example}
Consider an instance with two agents, two (objective) indivisible goods and the agents have an identical valuation~$u$ such that $u(g_1) = 2$ and $u(g_2) = 1$.
It can be verified that
\begin{itemize}
\item if we allocate both goods to agents~$1$ and~$2$, there is no envy-free allocation;
\item if we dispose of one good, again, there is no envy-free allocation by allocating the remaining single indivisible good between two agents.
\end{itemize}
\end{example}

It is an interesting open question of whether \Cref{thm:2-agent-EFXM-non-wasteful-w.-charity} can be further extended by showing that an almost-complete \EFXM (or even \EFM) and non-wasteful allocation always exists for any number of agents.
We briefly discuss the reason why it becomes more complicated when considering three or more agents.
On the one hand, the existing algorithmic idea to achieve \EFM heavily relies on the following steps~\citep[see, e.g.,][]{BeiLiLi21,BhaskarSrVa21}:
\begin{enumerate}[label=(\roman*)]
\item starting with an \EFOne allocation of indivisible items,
\item iteratively allocating some divisible resources to a set of identified agents in a ``perfect'' way, that is, all agents values all pieces equally, and
\item rotating bundles if necessary.
\end{enumerate}
The technique to achieve \EFM naturally violates non-wastefulness whenever we allocate divisible goods or rotate bundles, as a result, even if we discard some goods, it is still challenging to simultaneously achieve \EFM and non-wastefulness.
On the other hand, the proof of \Cref{thm:2-agent-EFXM-non-wasteful-w.-charity} hinges on the $2$-agent \EFX algorithm.
As mentioned earlier, the existence of \EFX allocations is largely open.
Moreover, although we have $3$-agent \EFX algorithms~\citep{ChaudhuryGaMe24,AkramiAlCh23}, the resulting \EFX allocation is based on careful case analysis.
It is unclear if there exists a systematic way to achieve \EFM and non-wastefulness with the possibility of discarding some goods, even for the $3$-agent case.

\subsection{EF1M: A Compatible Relaxation}

Having established the incompatibility of \EFM with non-wastefulness, we now turn to a weaker notion -- \EFoneM.
Our main result in this section shows \EFoneM is compatible with non-wastefulness.
Moreover, such an allocation can be computed efficiently.

\begin{theorem}
\label{thm:EF1M_nonwasteful}
For any number of agents, an \EFoneM and non-wasteful allocation always exists and can be found in polynomial time.
\end{theorem}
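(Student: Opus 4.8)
The plan is to recast \EFoneM as a single scalar inequality and then construct the allocation in two phases: one that carves up the goods several agents want to treat as divisible, and one that distributes the remaining goods by an appropriately ordered round-robin.

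First I would verify the following restatement: an allocation $(A_1,\dots,A_n)$ is \EFoneM if and only if, for every ordered pair $i,j\in N$,
\[
u_i(A_i)\;\ge\;u_i(A_j)-\max\bigl\{\,u_i(g)\ :\ g\in M,\ g^{A_j}=g,\ g\in\indGoodsOf{i},\ u_i(g)>0\,\bigr\},
\]
with the convention that the maximum over the empty set is $0$; this matches the ``remove agent~$i$'s most valued indivisible good'' reading of \Cref{def:EFM}. Writing $u_i(A_j)=u_i^{\mathrm{div}}(A_j)+u_i^{\mathrm{ind}}(A_j)$, where $u_i^{\mathrm{div}}(A_j):=\sum_{g\in\divGoodsOf{i}}\length{g^{A_j}}\cdot u_i(g)$ is the contribution of the goods agent~$i$ regards as divisible, it then suffices to guarantee (A)~envy-freeness on every agent's divisible part, i.e.\ $u_i^{\mathrm{div}}(A_i)\ge u_i^{\mathrm{div}}(A_j)$ for all $i,j$, together with (B)~\EFOne on every agent's indivisible part; summing the two inequalities recovers the display.

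For Phase~1, set $D_g:=\{i\in N:u_i(g)>0\text{ and }g\in\divGoodsOf{i}\}$ for each good $g$. Whenever $|D_g|\ge 2$, I would cut $g$ into $|D_g|$ subintervals of \emph{equal length} and give one to each agent of $D_g$: equal length is exactly what makes no member of $D_g$ envy another over $g$, so part~(A) is already taken care of for these goods; any agent outside $D_g$ who values $g$ positively regards it as indivisible, extracts nothing from a fragment, and never sees $g$ in her indivisible part, so this step is non-wasteful. For Phase~2, the surviving goods all satisfy $|D_g|\le 1$, and I would hand them out \emph{integrally} by a round-robin over $N$, with one twist: on agent~$i$'s turn, if some still-unallocated good $g$ has $D_g=\{i\}$, agent~$i$ is required to take (a best such) $g$ as a whole. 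The reason for the twist is the asymmetry of such a $g$: from agent~$i$'s side it pays full value only while it sits in $A_i$ and cannot be removed, so $i$ must not be left behind on it, whereas every other agent regards it as indivisible, so it can be removed harmlessly on their side. Giving every remaining good out whole to an agent who values it positively keeps the allocation non-wasteful, the standard round-robin exchange argument yields part~(B), and Phase~1 together with the priority rule is what is meant to yield part~(A).

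The main obstacle is exactly making this Phase~2 procedure satisfy (A) and (B) simultaneously once there are three or more agents. The priority rule that prevents an agent from being stranded behind on her own subjectively-divisible goods also makes her effectively waste turns as far as her indivisible part is concerned, while a plain round-robin lets whoever holds such a good over-accumulate items that the agents who regard it as indivisible would have wanted removed. Resolving this---presumably by a more careful picking order, or by a short sequence of sub-phases (first the truly private goods, then the goods with a single divisible valuer, then the purely indivisible ones)---and then checking the resulting inequalities is where the real work lies. Polynomial-time computability comes for free once the allocation rule is fixed: \EFoneM is defined by exact comparisons of additively computable bundle values, no maximin-share computation is involved, and every step above is elementary.
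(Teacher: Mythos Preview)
Your two–phase split (goods that at least two agents deem divisible versus the rest) and the treatment of the first phase coincide with the paper's $S_1/S_2$ decomposition; cutting each multiply-divisible good into equal pieces among the agents who regard it as divisible is exactly what the paper does for~$S_2$.

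The genuine gap, which you candidly flag, is Phase~2. Your ``twist'' of forcing agent~$i$ to pick a good with $D_g=\{i\}$ whenever one remains already fails for \emph{two} agents, not just three: take $g_1,g_2$ with $D_{g_1}=D_{g_2}=\{1\}$ and $u_1(g_1)=u_1(g_2)=0.01$, and $g_3,g_4$ with $D_{g_3}=D_{g_4}=\emptyset$ and $u_1(g_3)=u_1(g_4)=100$, while agent~$2$ regards every good as indivisible with value~$100$. Your rule forces agent~$1$ to burn both of her turns on $g_1,g_2$, so she ends with $u_1(A_1)=0.02$ versus $u_1(A_2)=200$, and even removing one indivisible good leaves $100$; \EFoneM is violated. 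So the priority rule not only ``wastes turns'' in the informal sense---it breaks the guarantee outright.

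The paper's Phase~2 is a different mechanism. It does \emph{not} force anyone to pick her own divisible goods; instead, in each round it builds a directed graph on the remaining agents with an edge $i\to d(g^i)$ (where $g^i$ is $i$'s current favourite), extracts a cycle or a maximal path, and lets exactly that batch~$S$ pick simultaneously. The structural property this buys (the paper's \Cref{claim:EF1M_nonwasteful_S1}) is that whenever $j\in S$ picks a good some agent~$i$ regards as divisible, $i$ is in the same batch, hence picks her own favourite at the same time. This yields, in every round and for every pair $i,j$: either $u_i(g^i)\ge u_i(g^j)$ or $g^j\in\indGoodsOf{i}$. From this single round-level invariant the paper proves \EFoneM directly by induction on rounds, \emph{without} passing through your (A)+(B) split. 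Indeed, the paper's allocation in the example above is $A_1=\{g_3,g_4\}$, $A_2=\{g_1,g_2\}$, which is \EFoneM (even envy-free) yet violates your condition~(A) since $u_1^{\mathrm{div}}(A_1)=0<0.02=u_1^{\mathrm{div}}(A_2)$. So aiming for (A)+(B) is a stronger target than necessary, and the batching idea---not a picking-priority rule---is the missing ingredient.
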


To prove Theorem 4.7, we begin by classifying the goods $M$ into two disjoint sets based on how agents view their divisibility:
\begin{itemize}
\item Let $S_1$ be the set of goods where at most one agent considers each good to be divisible.
\item Let $S_2$ be the set of goods where at least two agents consider each good to be divisible.
\end{itemize}

Observe that if we can find an EF1M and non-wasteful allocation of goods~\( S_1 \), then it is straightforward to extend this allocation by adding good~\( S_2 \) while still preserving these properties, because, since goods in \( S_2 \) are considered divisible by at least two agents, any fractional allocation of these goods can be adjusted in a way that maintains EF1M without violating non-wastefulness.
Thus, our primary challenge is to ensure an EF1M and non-wasteful allocation of good~\( S_1 \).

\begin{lemma}
\label{lemma:EF1M_nonwasteful_S1}
There exists an allocation of the goods in \( S_1 \) that satisfies both EF1M and non-wastefulness. Moreover, such an allocation can be found in polynomial time.
\end{lemma}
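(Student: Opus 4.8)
The plan is to first observe that non-wastefulness forces the allocation of $S_1$ to be \emph{integral}: a good $g\in S_1$ is regarded as divisible by at most one agent $i_g$, so splitting $g$ hands a proper piece, of zero utility, to some recipient who regards $g$ as indivisible, while letting only $i_g$ take a proper piece of $g$ leaves a remainder of $g$ unallocated, which is impossible for a partition of all of $M$. Hence the task reduces to producing an integral allocation of the indivisible-goods instance on $S_1$ that is non-wasteful (every good goes to an agent valuing it positively, which is exactly non-wastefulness in the integral case) and \EFoneM{}, where we record for each good its unique divisible owner if it has one. I write $T_0$ for the goods no agent considers divisible and $T_i$ for the goods only agent $i$ considers divisible, so $S_1=T_0\cup\bigcup_i T_i$.

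Second, I would build the allocation by a round-robin draft over the agents in a fixed order: on agent $i$'s turn she takes her most valued unallocated good, \emph{preferring} goods of her ``safe class'' $T_0\cup T_i$ as long as one of positive value to her remains, dipping into $\bigcup_{k\ne i}T_k$ only once the safe class is exhausted, and passing permanently once nothing of positive value to her remains. A mild strengthening is also needed so that each owner retains enough of her own divisible goods — concretely, so that no other agent ends up holding a larger $u_i$-share of $T_i$ than agent $i$ herself — which I would enforce by ``protecting'' the goods of $T_k$ from being grabbed by any $i\ne k$ until agent $k$'s retained $T_k$-value passes the required threshold. This procedure terminates and allocates every good (a good with no positive valuer does not exist, and a good with a positive valuer is eventually claimed by that agent once her higher-ranked options are gone), and since an agent only ever takes goods she values positively, the output is non-wasteful.

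Third, I would verify \EFoneM{} for each ordered pair $i,j$. If $A_j$ contains a good that $i$ regards as indivisible and values positively, the first branch of \EFoneM{} applies and we only need \EFOne{} towards $j$; this the draft delivers by the usual per-round pairing argument, the one new point being that the goods $i$ skips because of the safe-class rule are exactly goods $i$ regards as indivisible, so when such a skipped good lands in a bundle it keeps that pair inside this easy branch. The substantive case is the second branch: $A_j$'s entire value to $i$ comes from goods in $T_i$, and \EFoneM{} then demands full envy-freeness $u_i(A_i)\ge u_i(A_j)$. Here the priority/protection of the previous step is used through the chain $u_i(A_j)=u_i(A_j\cap T_i)\le u_i\bigl(T_i\setminus A_i\bigr)\le u_i(A_i)$, where the last inequality is precisely the retention guarantee.

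The main obstacle I anticipate is that the two demands placed on the draft pull against each other: protecting owners' divisible goods so as to secure the retention property risks over-endowing those owners and destroying \EFOne{} for everyone else, whereas a plain round-robin with no protection already violates the second \EFoneM{} branch (an agent can be left envying a bundle consisting entirely of the goods she alone regards as divisible). Designing a single draft/protection rule that simultaneously certifies \EFOne{} under the all-indivisible view \emph{and} the ``owner holds the largest $u_i$-share of $T_i$'' property, together with the round-by-round bookkeeping needed to prove both, is where the real work lies; the non-wastefulness handling (shuffling any worthless leftovers to a positive valuer) and the reduction of the full theorem from $S_1$ to all of $M$ are comparatively routine.
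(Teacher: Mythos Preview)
There is a genuine gap. You correctly note that non-wastefulness forces an integral allocation of $S_1$, and you correctly split the \EFoneM verification into its two branches, but the mechanism that is supposed to handle the second branch is left unspecified (you say so yourself: ``designing a single draft/protection rule \dots\ is where the real work lies''), and the one concrete invariant you do write down is in fact unachievable. The retention inequality $u_i(T_i\setminus A_i)\le u_i(A_i)$ in your displayed chain is strictly stronger than ``agent~$i$ holds the largest $u_i$-share of $T_i$'', and it fails in general: take three agents, $T_2=T_3=T_0=\emptyset$, $T_1=\{g_1,\dots,g_{10}\}$, every agent valuing every good at~$1$. Any \EFoneM allocation must make agent~$1$ envy-free towards agents~$2$ and~$3$ (second branch) while they need only be \EFOne towards her, which forces $|A_1|=4$; then $u_1(T_1\setminus A_1)=6>4=u_1(A_1)$ and your last inequality cannot hold. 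Even under the weaker ``largest share'' reading, the safe-class draft as described breaks: with two agents, $T_1=\{a,b\}$, $T_0=\{c\}$, $T_2=\emptyset$, both agents valuing $a,b$ at~$1$ and $c$ at~$\varepsilon$, agent~$1$'s safe class is $\{a,b,c\}$ and she greedily takes both $a$ and $b$, leaving agent~$2$ with $\{c\}$ and violating plain \EFOne. Your ``protection'' rule is phrased to stop \emph{other} agents from raiding $T_1$, not to stop agent~$1$ from over-collecting, so it does not address this.

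The paper does not pursue any global retention property. Within each round it builds a directed graph on the still-active agents with an edge $i\to d(g^i)$ whenever agent~$i$'s current favourite $g^i$ lies in some $T_{d(g^i)}$, extracts a cycle or a path whose terminal vertex has no outgoing edge, and hands all agents in that cycle/path their favourites simultaneously. The payoff is a purely per-round invariant: for any two agents $i,j$ receiving $g^i,g^j$ in round~$k$, either $u_i(g^i)\ge u_i(g^j)$ or $g^j$ is indivisible for~$i$. Chained across rounds in the usual round-robin fashion this yields \EFoneM directly, because whenever $j$'s round-$k$ pick beats $i$'s in $u_i$-value, that pick is an $i$-indivisible good available for removal in the first branch. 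You should replace the global retention target with this local structural guarantee.
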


The proof of Lemma~\ref{lemma:EF1M_nonwasteful_S1} relies on a generalized version of the round-robin algorithm, which is traditionally used to produce EF1 allocations of indivisible goods.
The algorithm proceeds in rounds, and in each round~$k$, we select an arbitrary permutation of the agents, denoted as~$\pi_k$.
Following the order specified by~$\pi_k$, each agent selects their most preferred good from the set of remaining available goods in that round.
It is known that this round-robin algorithm always produces an EF1 allocation for indivisible goods~\citep[see, e.g.,][Section~3.1]{AmanatidisAzBi23}.
In this proof, we extend the use of this algorithm to handle the goods in~$S_1$ by leveraging the fact that each good in~$S_1$ is considered divisible by at most one agent.

First, we introduce the following notations for the generalized round-robin algorithm applied to the goods in~$S_1$.
In round~$k$, with a slight abuse of notation, let~$R$ denote the remaining set of goods in~$S_1$ that have not yet been allocated to any agent.
For each good~$g \in R$, we define $d(g)$ to be the agent who considers~$g$ to be divisible, and let $d(g) = \emptyset$ if no agent considers~$g$ to be divisible.
Note that $d(g)$ is well-defined due to the property of the set~$S_1$.

Next, we describe the construction of the permutation~$\pi_k$ of agents for round~$k$.
We define a directed graph~$G_k$ as follows:
\begin{itemize}
\item Each vertex represents an agent.
\item For each agent~$i$, let $g^i \in R$ be their most valuable remaining good, breaking ties arbitrarily.
\item If $d(g^i) \neq \emptyset$, add a directed edge from~$i$ to~$d(g^i)$
\end{itemize}
Note that each vertex has outdegree of at most~$1$ in~$G_k$.

\begin{algorithm}[t]
\caption{Generalized Round-Robin Allocation Algorithm}
\label{alg:rr}
\DontPrintSemicolon

\KwIn{Agents~$N$ and goods~$S_1$ where each good in~$S_1$ is considered as divisible by at most one agent.}
\KwOut{An EF1M and non-wasteful allocation of goods~$S_1$.}

$R \gets S_1$; $k \gets 0$\;
\While{$R \neq \emptyset$}{
    \tcp{Round~$k$}
    Increment the round counter to round~$k$.\;
    Initialize $N_k \gets N$ and $\pi_k \gets \emptyset$.\;
    \While{$N_k \neq \emptyset$}{
        Construct the graph~$G_k$ accordingly.\;
        Find in~$G_k$ either a cycle or a path in which the last vertex has no outgoing edge, and let~$S$ be the set of agents in this cycle or path.\; \label{alg:rr:S-property}
        Append all agents in~$S$ to the permutation~$\pi_k$ in an arbitrary order.\;
        For each~$i \in S$, assign~$g^i$ to agent~$i$.\;
        $N_k \gets N_k \setminus S$\;
        $R \gets R \setminus \{g^i \mid i \in S\}$\;
        Update~$N_k$ by removing agents who value every good in~$R$ at~$0$.
    }
    Update~$N$ by removing agents who value every good in~$R$ at~$0$.
}
\Return{the final allocation of goods~$S_1$}
\end{algorithm}

We are now ready to present the overall allocation algorithm, which is a generalized round-robin algorithm that proceeds in rounds.
In each round~$k$, it constructs the permutation~$\pi_k$ based on the repeatedly updated graph~$G_k$, and then lets the agents select their most preferred goods according to~$\pi_k$.
This continues until all goods are allocated.
The detailed steps are outlined in \Cref{alg:rr}.
To see why this algorithm satisfies EF1M and non-wastefulness, we show the following basic properties of the algorithm.

\begin{claim}
\label{claim:EF1M_nonwasteful_S1}
In each round~$k$, for each set~$S$ of agents selected in \cref{alg:rr:S-property} of Algorithm~\ref{alg:rr}, the following properties hold:
\begin{itemize}
\item \emph{Divisibility preservation:} For every agent $i \in S$, either $d(g^i) = \emptyset$ or $d(g^i) \in S$;
\item \emph{Good uniqueness:} For every distinct agents $i, j \in S$, $g^i \neq g^j$.
\end{itemize}
\end{claim}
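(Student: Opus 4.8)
The plan is to read off both properties from the \emph{functional} structure of the graph~$G_k$. By construction, every vertex of~$G_k$ has outdegree at most one, and a vertex~$i$ has \emph{no} outgoing edge if and only if $d(g^i) = \emptyset$ (this is exactly the rule ``add an edge $i \to d(g^i)$ iff $d(g^i) \neq \emptyset$''). Consequently, the object found in \cref{alg:rr:S-property} is the vertex set of either a simple directed cycle, or a simple directed path $v_0 \to v_1 \to \cdots \to v_t$ in which $v_t$ is a sink, i.e.\ $d(g^{v_t}) = \emptyset$; the existence of such a structure in a finite functional graph (follow out-edges from an arbitrary vertex and either reach a sink along a simple path or revisit a vertex, producing a simple cycle) is precisely what makes \cref{alg:rr:S-property} well-defined. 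All of the work reduces to this structural description.

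Given it, \textbf{divisibility preservation} is immediate. If $S$ is a cycle, the unique out-edge of each $i \in S$ stays inside the cycle, so $d(g^i) \in S$. If $S = \{v_0,\dots,v_t\}$ is a path, then for each $\ell < t$ the out-edge of $v_\ell$ is $v_{\ell+1} \in S$, while $v_t$ is a sink, so $d(g^{v_t}) = \emptyset$. In every case each $i \in S$ has $d(g^i) = \emptyset$ or $d(g^i) \in S$.

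For \textbf{good uniqueness} I would argue by contradiction: assume $g^i = g^j$ for some distinct $i, j \in S$. Since $d(\cdot)$ is a function on goods, $d(g^i) = d(g^j)$. If this common value is $\emptyset$, then both $i$ and $j$ are sinks of~$G_k$; but a simple cycle has no sink and a simple path ending at a sink has exactly one, contradicting $i \neq j$. If the common value is an agent~$a \neq \emptyset$, then $i$ and $j$ both have an out-edge pointing to~$a$; but along a simple cycle (resp.\ a simple path) distinct vertices have distinct successors, again contradicting $i \neq j$. Hence $g^i \neq g^j$ for all distinct $i, j \in S$.

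There is no genuine obstacle here: the claim is a purely structural fact about a functional graph restricted to a cycle or a path-into-a-sink. The only points that merit explicit care are the equivalence ``$i$ is a sink of~$G_k$ $\iff$ $d(g^i)=\emptyset$'' and the observation that on such a cycle or path the successor map $i \mapsto d(g^i)$ is injective on non-sink vertices and $S$-valued --- both one-line facts, but they are exactly what the subsequent EF1M and non-wastefulness analysis of \Cref{alg:rr} relies on (each agent in $S$ picks a distinct good, and removing $\{g^i : i \in S\}$ leaves the divisibility classification of the surviving goods unchanged for the remaining agents).
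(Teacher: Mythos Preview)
Your proof is correct and follows essentially the same approach as the paper: both exploit the functional-graph structure of~$G_k$ and the fact that $S$ is the vertex set of a simple cycle or a simple path ending at a sink. Your argument for good uniqueness is in fact slightly more complete than the paper's, since you explicitly handle the case $d(g^i)=\emptyset$ (two sinks in~$S$, impossible), whereas the paper's ``indegree at least~$2$'' argument tacitly assumes $d(g^i)\neq\emptyset$.
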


\begin{proof}
For the first property, fix any~$i \in S$.
If $d(g^i) = \emptyset$, the statement holds.
We now consider the case where $d(g^i) \neq \emptyset$.
Note that the relevant graph~$G_k$ when agent~$i$ is considered by \cref{alg:rr:S-property} has exactly one directed edge from agent~$i$ to agent~$d(g^i)$.
Since \cref{alg:rr:S-property} finds either a cycle or a path in which the last vertex has no outgoing edge, $d(g^i)$ must be included in~$S$.

For the second property, suppose for contradiction that there exist distinct agents~$i, j \in S$ such that $g^i = g^j$.
Then, agents~$S$ form a directed graph in which agent~$d(g^i)$ has indegree of at least~$2$, violating the fact that our algorithm selects a cycle or a path in \cref{alg:rr:S-property}.
\end{proof}

These properties give us the following claim.

\begin{claim}
\label{claim:EF1M_nonwasteful_S2}
In each round~$k$, for any two agents~$i, j$, let~$g^i$ and~$g^j$ be the goods allocated to~$i$ and~$j$, respectively.
\begin{itemize}
\item Either $u_i(g^i) \geq u_i(g^j)$, or
\item Agent~$i$ considers~$g^j$ to be indivisible.
\end{itemize}
\end{claim}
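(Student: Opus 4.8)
The plan is to follow the internal bookkeeping of round~$k$ of \Cref{alg:rr}. That round proceeds in consecutive \emph{batches}: each pass of the inner \texttt{while}-loop builds~$G_k$ on the current set~$N_k$, extracts a set~$S$ (a directed cycle, or a path whose last vertex has no outgoing edge), assigns to every agent $a \in S$ its currently most-valued remaining good~$g^a$, and only then deletes those agents from~$N_k$ and those goods from the pool~$R$. I would first record two elementary consequences of this order of operations: (a)~at the moment the batch containing~$a$ is built, $g^a$ is a maximizer of $u_a(\cdot)$ over the pool~$R$ of goods still available then; and (b)~when a batch is built, every good that is assigned in that same batch or in a later batch of round~$k$ is still present in~$R$. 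With these in hand I would assume $i \neq j$ (the case $i = j$ is immediate) and that both agents receive goods in round~$k$, then split on whether the batch of~$i$ is processed before or after the batch of~$j$.

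If the batch of~$i$ is processed no later than the batch of~$j$: let~$R$ be the pool at the moment $i$'s batch is built. Since $g^j$ is assigned in that same batch or in a later batch of round~$k$, fact~(b) gives $g^j \in R$, and then fact~(a) gives $u_i(g^i) \ge u_i(g^j)$ --- the first alternative of the claim.

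If instead the batch~$T$ of~$j$ is processed strictly before the batch of~$i$, so that $i \notin T$: now $g^j$ has already left the pool before $g^i$ is chosen, so no valuation comparison is available, and I would instead invoke the divisibility-preservation part of \Cref{claim:EF1M_nonwasteful_S1}, applied to batch~$T$ and the agent $j \in T$, which yields $d(g^j) = \emptyset$ or $d(g^j) \in T$. Since $i \notin T$, in both cases $d(g^j) \neq i$. Because $g^j \in S_1$, the map $d(\cdot)$ names the unique agent (if any) regarding~$g^j$ as divisible, so $d(g^j) \neq i$ forces $g^j \notin \divGoodsOf{i}$, i.e.\ $g^j \in \indGoodsOf{i}$: agent~$i$ regards~$g^j$ as indivisible, which is the second alternative. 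That exhausts the cases.

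The delicate case is the second one, and its conceptual content is that \Cref{claim:EF1M_nonwasteful_S1} has already done the work: were agent~$i$ to regard~$g^j$ as divisible, then~$i$ would be the divisibility-owner $d(g^j)$, and the construction of~$G_k$ --- which inserts the edge $j \to d(g^j)$ whenever $g^j$ is~$j$'s current favourite --- together with the cycle/path extraction rule would have pulled~$i$ into the very batch~$T$ that contains~$j$, contradicting $i \notin T$. I do not anticipate a genuine obstacle beyond making facts (a)--(b) precise --- spelling out exactly which agents lie in~$N_k$ and which goods lie in~$R$ at each inner iteration --- so that the two cases are manifestly exhaustive and each inference is justified.
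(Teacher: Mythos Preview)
Your proposal is correct and follows essentially the same route as the paper: the paper also splits on the relative order of the batches containing $i$ and $j$, uses the ``most-valued remaining good'' property for the case where $i$'s batch comes no later, and appeals to the divisibility-preservation clause of \Cref{claim:EF1M_nonwasteful_S1} (the paper phrases this last step as a short contradiction argument) when $j$'s batch is strictly earlier. Your presentation is slightly tidier in that you merge the ``same batch'' and ``$i$ earlier'' cases into one and invoke \Cref{claim:EF1M_nonwasteful_S1} directly rather than by contradiction, but the substance is identical.
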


\begin{proof}
If both agents~$i$ and~$j$ were added to~$S$ in the same execution of \cref{alg:rr:S-property}, then by the second property stated in \Cref{claim:EF1M_nonwasteful_S1} and the definition of~$g^i$, we have $u_i(g^i) \geq u_i(g^j)$.
In the following, we consider the scenario where agents~$i$ and~$j$ were added to~$S$ in two separate executions of \cref{alg:rr:S-property}.
Let us first consider that agent~$i$ was processed earlier than agent~$j$.
By the definition of~$g^i$, we have $u_i(g^i) \geq u_i(g^j)$.
We now consider that agent~$i$ was processed later than agent~$j$.
If $u_i(g^i) \geq u_i(g^j)$, the conclusion follows.
We thus assume that $u_i(g^i) < u_i(g^j)$ and will show that agent~$i$ considers~$g^j$ to be indivisible.
Suppose for contradiction that agent~$i$ considers~$g^j$ to be divisible, i.e., $d(g^j) = \{i\}$.
Consider the execution of \cref{alg:rr:S-property} when the algorithm processes agent~$j$.
Due to the construction of~$G_k$, there is a directed edge from agent~$j$ to agent~$i$.
By the first property of \Cref{claim:EF1M_nonwasteful_S1} and the fact that $d(g^j) \neq \emptyset$, agent~$i$ should have been added to~$S$ in the same execution of \cref{alg:rr:S-property} as agent~$j$, a contradiction.
\end{proof}

With these claims, we can now proceed to prove Lemma~\ref{lemma:EF1M_nonwasteful_S1}.

\begin{proof}[Proof of Lemma~\ref{lemma:EF1M_nonwasteful_S1}]
We prove the statement in two steps. First, we show that \Cref{alg:rr} successfully allocates all goods in $S_1$. Then, we prove by induction that each partial allocation produced by the algorithm satisfies EF1M.

We start by showing that when \Cref{alg:rr} terminates, all goods are allocated.
Given any constructed graph~$G_k$, \cref{alg:rr:S-property} can find a desired cycle or path by starting with any vertex on graph~$G_k$ and traversing the graph via outgoing edge of each visited vertex until
\begin{itemize}
\item either the last vertex has no outgoing edge, in which case we have found a desired path, or
\item we visit some visited vertex again, in which case we have found a cycle.
\end{itemize}
Clearly, as long as $N_k \neq \emptyset$, \cref{alg:rr:S-property} can always find a non-empty set of agents~$S$, meaning that in each inner \verb|while|-loop, a number of~$|S|$ goods are allocated.
Since we have assumed that every good is valued positively by some agent, all goods are allocated when the algorithm terminates.
Moreover, by the design of the algorithm, each good is allocated in its entirety to an agent who values it positively, non-wastefulness follows immediately.

In the following, we show the allocation of~$S_1$ satisfies \EFoneM.
For each round~$k$, let $\alloc^k = (A^k_1, \dots, A^k_n)$ denote the partial allocation after round~$k$ terminates.
We prove by induction that each partial allocation~$\alloc^k$ satisfies \EFoneM.
In the base case where $k = 0$, the empty allocation is \EFoneM.
Assume that the partial allocation~$\alloc^{k-1}$ is \EFoneM.
We will show that the partial allocation~$\alloc^k$ is \EFoneM.

Fix any pair of agents~$i, j \in N$.
Denote by~$\widehat{g}^i$ and~$\widehat{g}^j$ the two goods allocated to agents~$i$ and~$j$ in round~$k$, respectively.
(The good(s) could be empty set(s) if, say, some agent(s) value all the remaining goods~$R$ to be allocated in round~$k$ at~$0$.)
If $u_i(\widehat{g}^i) \geq u_i(\widehat{g}^j)$, it can be verified easily that agent~$i$ is still \EFoneM towards agent~$j$.
Else, $u_i(\widehat{g}^i) < u_i(\widehat{g}^j)$, and agent~$i$ considers~$\widehat{g}^j$ to be indivisible by \Cref{claim:EF1M_nonwasteful_S2}.
First, if $u_i(A^{k-1}_i) \geq u_i(A^{k-1}_j)$, allocation~$\alloc^k$ is \EFoneM because
\[
u_i(A^k_i) = u_i(A^{k-1}_i \cup \{\widehat{g}^i\}) \geq u_i(A^{k-1}_j) = u_i(A^k_j \setminus \{\widehat{g}^j\}).
\]
Next, we consider the case where $u_i(A^{k-1}_i) < u_i(A^{k-1}_j)$.
Since allocation~$\alloc^{k-1}$ is \EFoneM, there exists agent~$i$'s indivisible good~$g \in A^{k-1}_j$ such that $u_i(A^{k-1}_i) \geq u_i(A^{k-1}_j \setminus \{g\})$.
Let $k' \in [k-1]$ denote the first round such that $u_i(A^{k'}_i) < u_i(A^{k'}_j)$ and $u_i(A^{k'}_i) \geq u_i(A^{k'}_j \setminus \{g'\})$ for the good~$g'$ allocated to agent~$j$ in round~$k'$.
(Note that following from \Cref{claim:EF1M_nonwasteful_S2}, agent~$i$ considers good~$g'$ to be indivisible.)
By the design of the algorithm, from agent~$i$'s perspective, the value of the good allocated to agent~$i$ in round~$\ell = k', k'+1, \dots, k-1$ is at least that of the good allocated to agent~$j$ in round~$\ell + 1$, implying that $u_i(A^{k-1}_i \setminus A^{k'-1}_i) \geq u_i(A^k_j \setminus A^{k'}_j)$.
We thus have,
\[
u_i(A^k_i) = u_i(A^{k' - 1}_i) + u_i(A^{k-1}_i \setminus A^{k'-1}_i) + u_i(\widehat{g}^i) \geq u_i(A^{k' - 1}_j) + u_i(A^k_j \setminus A^{k'}_j) = u_i(A^k_j \setminus \{g'\}),
\]
showing allocation~$\alloc^k$ satisfies \EFoneM. Finally, the algorithm clearly runs in polynomial time.
The conclusion follows.
\end{proof}

Finally, the allocation returned by \Cref{alg:rr} can be extended by further allocating goods~$S_2$ while preserving both EF1M and non-wastefulness, completing the proof of Theorem~\ref{thm:EF1M_nonwasteful}.

\begin{proof}[Proof of Theorem~\ref{thm:EF1M_nonwasteful}]
Given any instance with agents~$N = [n]$ and goods~$M$, let~$S_1$ be the set of goods such that for each good in~$S_1$, there is at most one agent who considers the good to be divisible, and let $S_2 \coloneqq M \setminus S_1$ denote the other set of goods in which each good is considered as divisible for at least two agents.
By \Cref{lemma:EF1M_nonwasteful_S1}, we can find an allocation~$(A^1_i)_{i \in N}$ of goods~$S_1$ that is \EFoneM and non-wasteful in polynomial time.
It remains to allocate goods~$S_2$.
For each~$g \in S_2$, we divide good~$g$ equally among agents~$d(g)$.
Clearly, the allocation of~$S_2$, denoted as $(A^2_i)_{i \in N}$, is envy-free.

We now show the allocation $(A^1_i \cup A^2_i)_{i \in N}$ is \EFoneM and non-wasteful.
Non-wastefulness is clear, so we will focus on \EFoneM.
Fix any pair of agents~$i, j \in N$.
If, from agent~$i$'s perspective, agent~$j$ receives only divisible good from~$S_1$, we have $u_i(A^1_i) \geq u_i(A^1_j)$.
As a result,
\[
u_i(A^1_i \cup A^2_i) = u_i(A^1_i) + u_i(A^2_i) \geq u_i(A^1_j) + u_i(A^2_j) = u_i(A^1_j \cup A^2_j).
\]
If, from agent~$i$'s perspective, agent~$j$ receives some indivisible good(s) from~$S_1$, then, there exists agent~$i$'s indivisible good~$g \in A^1_j$ such that $u_i(A^1_i) \geq u_i(A^1_j \setminus \{g\})$.
Again,
\[
u_i(A^1_i \cup A^2_i) = u_i(A^1_i) + u_i(A^2_i) \geq u_i(A^1_j \setminus \{g\}) + u_i(A^2_j) = u_i(A^1_j \cup A^2_j \setminus \{g\}),
\]
as desired.
\end{proof}

What if we consider a stronger economic efficiency property like PO?
For mixed goods, every maximum Nash welfare (MNW) allocation satisfies \EFoneM and PO~\citep{CaragiannisKuMo19,NishimuraSu23}.
More formally, the \emph{Nash welfare} of an allocation~$(A_i)_{i \in N}$ is defined at $\prod_{i \in N} u_i(A_i)$.
An allocation~$(A_i)_{i \in N}$ is said to be a \emph{maximum Nash welfare (MNW)} allocation if it has the maximum Nash welfare among all allocations.\footnote{In the case where the maximum Nash welfare is~$0$, an allocation is MNW allocation if it gives positive utility to a set of agents of maximized size and moreover maximizes the product of utilities of the agents in that set.}
Perhaps surprisingly, this is not the case in our setting concerning subjective divisibility, even when there are only two agents.

\begin{example}[MNW does not imply \EFoneM]
Consider an instance with two agents and three goods with the following valuations and subjective divisibility, where $0 < \varepsilon < \frac{1}{2}$:
\begin{center}
\begin{tabular}{@{}l|lll@{}}
\toprule
& $g_1$ & $g_2$ & $g_3$ \\
\midrule
Agent~$1$ & ind, $1$ & div, $\frac{1}{2} + \varepsilon$ & div, $\frac{1}{2} + \varepsilon$ \\
Agent~$2$ & ind, $0$ & ind, $1$ & ind, $1$ \\
\bottomrule
\end{tabular}
\end{center}
It is clear that (i) in any MNW allocation, $g_1$ is given to agent~$1$, (ii) each of~$g_2$ and~$g_3$ is allocated fully to a single agent, and (iii) agent~$2$ gets at least one good.
Below, we discuss two allocations based on how we allocate~$g_2$ and~$g_3$.
\begin{itemize}
\item Allocation~$\alloc = (\{g_1\}, \{g_2, g_3\})$ has Nash welfare $1 \times (1 + 1) = 2$.

\item Allocation~$\alloc' = (\{g_1, g_2\}, \{g_3\})$ has Nash welfare $(1 + \frac{1}{2} + \varepsilon) \times 1 = \frac{3}{2} + \varepsilon$.

(Note that the allocation $(\{g_1, g_3\}, \{g_2\})$ is symmetric.)
\end{itemize}
Allocation~$\alloc$ is the MNW allocation, but does not satisfy \EFoneM as agent~$1$ thinks agent~$2$ receives only divisible good yet still envies agent~$2$.
\end{example}

It remains an interesting open question of whether \EFoneM and PO are compatible in our setting.

\section{Conclusion and Future Work}

In this paper, we have studied a fair division model where every agent has their own subjective divisibility towards the goods to be allocated.
We study fairness properties such as MMS and envy-freeness relaxations in our setting.
Regarding MMS, we show a tight $\frac{2}{3}$-approximation to MMS for $n \leq 3$ agents, and present a $\frac{1}{2}$-MMS algorithm for any number of agents.
Closing the gap between the algorithmic result and our impossibility result (which is~$2/3$) for $n \geq 4$ agents remains an intriguing and important open question.
For the hierarchy of envy-freeness relaxations, we show that \EFM is incompatible with non-wastefulness, even for two agents.
Nonetheless, if we discard at most one good, an \EFXM and non-wasteful allocation always exists for two agents.
When weakening the fairness property to \EFoneM, we show that, for any number of agents, an \EFoneM and non-wasteful allocation always exists.
The compatibility between \EFoneM and PO is a challenging and intriguing open question.

In \Cref{sec:extension-cake}, we extend our model to the case where agents' divisible goods are \emph{heterogeneous}.
Since homogeneous divisible goods are special cases, all of our impossibility results with respect to either MMS or EFM still hold.
We discuss how to adapt our algorithms for the cakes, and hence our algorithmic results carry over to this more general setting.

In future research, it would be interesting to further generalize our model of fair division with subjective divisibility to capture other practical scenarios, e.g., by imposing constraints on the allocation that can be made~\citep{Suksompong21}.
We have assumed each good to be \emph{completely} divisible or indivisible for the agents.
A natural next step is to allow more flexible subjective divisibility assumption, e.g., by having agents specify the degree of their subjective divisibility towards the goods.
One way to instantiate this assumption is via more complex utility functions.
To be more specific, we could let each agent impose a minimum threshold on each good such that they start getting positive utility by receiving (a subset of) the good.
Analogously, agents could also impose a maximum threshold on each good such that they will not get more utility by receiving more part of the good.

\section*{Acknowledgements}

A preliminary version (one-page abstract) of this paper appeared as~\citep{BeiLiLu23}.
We would like to thank the anonymous reviewers for their valuable feedback.

This work was partially supported by the Ministry of Education, Singapore, under its
Academic Research Fund Tier 1 (RG98/23), by ARC Laureate Project FL200100204 on ``Trustworthy AI'', by the National Natural Science Foundation of China (No.\ 62102117), by the Shenzhen Science and Technology Program (Nos.\ GXWD20231129111306002 and RCBS20210609103900003), by the Guangdong Basic and Applied Basic Research Foundation (No.\ 2023A1515011188), and by the CCF-Huawei Populus Grove Fund (No.\ CCF-HuaweiLK2022005).

\bibliographystyle{plainnat}
\bibliography{bibliography}

\appendix

\section{Heterogeneous Divisible Goods (i.e., Cakes)}
\label{sec:extension-cake}

We consider an extension of our model where agents' divisible goods are \emph{heterogeneous}.
In detail, as is the case in \Cref{sec:preliminaries}, each agent~$i \in N$ can partition goods~$M = \{g_1, g_2, \dots, g_m\}$ into two disjoint subsets~$\indGoodsOf{i}$ and~$\divGoodsOf{i}$; however, now, $\divGoodsOf{i}$ consists of the agent's heterogeneous divisible goods (i.e., \emph{cakes}).
Specifically, each agent~$i \in N$ is endowed with a \emph{density function}~$f_{ij} \colon [j-1, j] \to \mathbb{R}_{> 0}$ for each divisible good~$g_j \in \divGoodsOf{i}$, which captures how the agent values different parts of her divisible good~$g_j$.\footnote{Most cake cutting papers assume the density function to be \emph{non-negative}; we do not follow this convention.}
It is worth mentioning that our main model in \Cref{sec:preliminaries} corresponds to the special case where each agent~$i$'s density function for her divisible good~$g_j \in \divGoodsOf{i}$ being a constant~$c_{ij}$ such that $\int_{j-1}^{j} c_{ij} \diff{x} = u_i(g_j)$.
We now describe the cake-cutting query model of \citet{RobertsonWe98} (shortly, RW model), which allows algorithms to interact with the agent via the following two types of queries to get access to agents' density functions:
\begin{itemize}
\item $\textsc{Eval}_i(x, y)$ asks agent~$i$ to evaluate the interval~$[x, y]$ and returns the value~$u_i([x, y])$;

\item $\textsc{Cut}_i(x, \alpha)$ asks agent~$i$ to return the leftmost point~$y$ such that $u_i([x, y]) = \alpha$, or state that no such point exists.
\end{itemize}
We still assume additive utilities, i.e., given an allocation~$\alloc = (A_1, A_2, \dots, A_n)$, agent~$i$'s utility is
\[
u_i(\alloc) = u_i(A_i) = \sum_{g \in M} \left( u_i(g) \cdot \indicator{g \in \indGoodsOf{i}\, \land\, g^{A_i} = g} + \int_{g^{A_i}} f_{ig} \diff{x} \cdot \indicator{g \in \divGoodsOf{i}} \right);
\]
recall that~$g^{A_i}$ denotes the part of good~$g$ that is contained in bundle~$A_i$.
Clearly, all of our fairness notions, MMS and envy-freeness relaxations, as well as the economic efficiency notions are well-defined in the current setting.

\paragraph{Impossibility Results}
Since homogeneous divisible goods are special cases of cakes, all of our impossibility results still hold.
In sum, we have the following corollaries.

\begin{corollary}
In the $n$-agent case, when agents' divisible goods are heterogeneous,
\begin{itemize}
\item the worst-case MMS approximation guarantee is at most~$2/3$;
\item \EFM and non-wastefulness are incompatible.
\end{itemize}
\end{corollary}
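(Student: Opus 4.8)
The plan is to exhibit the model with \emph{homogeneous} divisible goods of \Cref{sec:preliminaries} as a restriction of the cake model just introduced, and then reuse verbatim the hard instances already constructed for the impossibility results, now read as cake instances.

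First I would observe that every instance with homogeneous divisible goods \emph{is} a cake instance: for an agent~$i$ who regards~$g_j$ as divisible with value~$u_i(g_j)$, take the constant density $f_{ij} \equiv u_i(g_j)$ on $[j-1, j]$, so that $\int_{j-1}^{j} f_{ij} \diff{x} = u_i(g_j)$. Under this choice, for any piece~$\widetilde{g}_j \subseteq g_j$ of length~$x$ one has $\int_{\widetilde{g}_j} f_{ij} \diff{x} = x \cdot u_i(g_j)$, which is exactly the utility prescribed by the homogeneous model. Consequently, on such instances the two models agree on (i) the utility every agent assigns to every bundle; (ii) hence the set of achievable utility profiles over $n$-partitions, so that $\MMS_i$ and the notion of an MMS partition are unchanged; (iii) the property of an allocation being non-wasteful (\Cref{def:non-wastefulness} only refers to whether $\utility{i}{g^{A_i}} > 0$); and (iv) the \EFM / \EFXM conditions (\Cref{def:EFM}), which are stated purely in terms of the $u_i(\cdot)$ values and the fixed partition $M = \indGoodsOf{i} \cup \divGoodsOf{i}$, both of which are preserved.

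Then I would conclude: the two- and $n$-agent instances built in \Cref{thm:2-agent-impossibility} and \Cref{coro:n-agent-impossibility} (respectively, in \Cref{thm:EFM-non-wasteful-incompatible}) use only homogeneous divisible goods, so reading them as cake instances via the embedding above, the very same case analyses show that no allocation attains more than a~$2/3$ fraction of some agent's MMS (respectively, that no allocation is simultaneously \EFM and non-wasteful). This yields both bullets of the corollary.

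The only point requiring a moment's care --- and the nearest thing to an obstacle --- is ensuring the heterogeneous model does not create \emph{new} allocations that escape the impossibility: a priori one might worry that cutting a constant-density good along cleverly chosen subintervals helps. It does not, precisely because utility is additive and linear in length, so the set of utility profiles realizable by partitioning a constant-density good is exactly the set realizable by the homogeneous ``divide by value'' operation; hence the space of allocations is, up to utility-equivalence, identical, and the adversary's instances remain tight.
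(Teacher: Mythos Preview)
Your proposal is correct and follows exactly the paper's own approach: the paper dispatches the corollary in a single sentence, ``Since homogeneous divisible goods are special cases of cakes, all of our impossibility results still hold,'' and you have simply spelled out this embedding (constant densities) and verified explicitly that utilities, MMS values, non-wastefulness, and the \EFM condition are preserved. There is nothing to add; your extra paragraph about ``new allocations'' is a nice sanity check but not needed, since the impossibility instances of \Cref{thm:2-agent-impossibility}, \Cref{coro:n-agent-impossibility}, and \Cref{thm:EFM-non-wasteful-incompatible} already quantify over \emph{all} $n$-partitions in the homogeneous model, which coincides (up to utility-equivalence) with all $n$-partitions in the cake model on those instances.
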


\paragraph{Positive Results}
We now proceed to discuss our algorithmic results obtained in \Cref{sec:MMS,sec:EFM}, beginning with MMS.
We first modify the \highValuedAlg (\Cref{alg:high-valued}) as it is used in the $\frac{2}{3}$-MMS algorithms for two and three agents (\Cref{alg:2-agent-2/3-MMS,alg:2/3-MMS}, respectively) and the $\frac{1}{2}$-MMS algorithm for $n$ agents (\Cref{alg:n-agent-1/2-MMS}).
The modification is straightforward: in \cref{ALG-n:high-valued:cut} of \Cref{alg:high-valued}, we now let the algorithm interact with each agent~$i \in N$ via the ``$\textsc{Cut}_i(x, \beta \cdot \MMS_i)$'' query, where $x$ is the leftmost position of good~$g$, so that the agents claim feasible pieces of the least length.
It is easy to verify that \Cref{lem:n-agent-high-valued} still holds with cakes.
Next, all results revolved around an integral bundle also hold with cakes as what matters here is how agents value each entire good.
In short, \Cref{alg:2-agent-2/3-MMS} outputs a $\frac{2}{3}$-MMS allocation for two agents when they have cakes.
In terms of the $3$-agent-$\frac{2}{3}$-MMS algorithm, due to the above arguments, the only step we need to pay extra attention is \cref{ALG-3:common-div:cut} of \Cref{alg:2/3-MMS}, and the modification here is also simple: impose ``$\textsc{Cut}_i(x, \frac{\utility{i}{\{g_1, d, g_2\}}}{2})$'' query, where $x$ is the leftmost position of the line of goods~$g_1, d, g_2$.
Our discussion in \Cref{sec:MMS:discussion} about the computation of agents' MMS as well as the considerations of economic efficiency also applies to the current setting.

We now consider envy-freeness relaxations.
With our assumption that agents' density functions are \emph{strictly positive}, it is easy to verify that \Cref{alg:2-agent-charity} outputs an \EFXM and non-wasteful allocation in polynomial time for two agents when they have heterogeneous divisible goods if at most one good is discarded.
Similarly, \Cref{thm:EF1M_nonwasteful} also holds for cakes, with the following caveat: when allocating the goods each of which is divisible for at least two agents, we use the \emph{perfect allocation} of \citet{Alon87} to ensure that each agent equally values all pieces to be allocated---such an allocation, however, cannot be implemented with a finite number of queries in the RW model~\citep{RobertsonWe98}.

\section{Omitted Details}

\subsection{Approximate MMS Values}
\label{app:MMS-values}

We now give the details of how to efficiently approximate agents' MMS values.
Fix any agent~$i \in N$.
Let agent~$i$ divide her divisible goods $\divGoodsOf{i}$ into at most $\lceil \frac{2n}{\varepsilon} \rceil + |\divGoodsOf{i}| - 1$ intervals each of which is worth at most $\frac{\varepsilon \cdot u_i(\divGoodsOf{i})}{2n}$.
Since $\frac{u_i(\divGoodsOf{i})}{n} \leq \MMS_i(n, M)$, each interval is also worth at most $\frac{\varepsilon}{2} \cdot \MMS_i(n, M)$.
Denote by $\widehat{M}$ the collection of these discretized, \emph{indivisible} intervals, and by $\langle [n], \indGoodsOf{i} \cup \widehat{M} \rangle$ the created instance containing only indivisible goods.
By applying the PTAS of \citet{Woeginger97}, for any constant $\delta > 0$, we can compute a partition $\mathcal{P} = (P_1, P_2, \dots, P_n)$ of goods $\indGoodsOf{i} \cup \widehat{M}$ such that
\[
\min_{j \in [n]} u_i(P_j) \geq (1 - \delta) \cdot \MMS_i(n, \indGoodsOf{i} \cup \widehat{M}).
\]
We will show below that $\MMS_i(n, \indGoodsOf{i} \cup \widehat{M}) \geq \left( 1- \frac{\varepsilon}{2} \right) \cdot \MMS_i(n, M)$.
By setting~$\delta = \varepsilon / 2$,
\begin{align*}
\min_{j \in [n]} u_i(P_j) &\geq (1 - \delta) \cdot \MMS_i(n, \indGoodsOf{i} \cup \widehat{M}) \\
&\geq \left( 1 - \frac{\varepsilon}{2} \right) \cdot \left( 1- \frac{\varepsilon}{2} \right) \cdot \MMS_i(n, M) \geq (1 - \varepsilon) \cdot \MMS_i(n, M).
\end{align*}
To summarize, this new PTAS described above computes $(1 - \varepsilon) \cdot \MMS_i(n, M)$.

In what follows, we establish the inequality $\MMS_i(n, \indGoodsOf{i} \cup \widehat{M}) \geq \left( 1- \frac{\varepsilon}{2} \right) \cdot \MMS_i(n, M)$.
It is worth noting that the following constructive argument is not part of the aforementioned new PTAS.
At a high level, we use agent~$i$'s MMS partition~$\mathcal{S}$ of goods $M = \indGoodsOf{i} \cup \divGoodsOf{i}$ as a guidance to construct a partition~$\mathcal{T} = (T_1, T_2, \dots, T_n)$ of goods $\indGoodsOf{i} \cup \widehat{M}$ for agent~$i$:
\begin{itemize}
\item Let the partition of good~$\indGoodsOf{i}$ in~$\mathcal{T}$ be exactly the same as that in~$\mathcal{S}$.
\item For any bundle in~$\mathcal{T}$ that is worth less than $\left( 1 - \frac{\varepsilon}{2} \right) \cdot \MMS_i(n, M)$, we add one interval from~$\widehat{M}$ at a time to this bundle until its value falls in $\left[ \left( 1 - \frac{\varepsilon}{2} \right) \cdot \MMS_i(n, M), \MMS_i(n, M) \right]$.

Recall that each interval is worth at most $\frac{\varepsilon}{2} \cdot \MMS_i(n, M)$, so this operation is valid.

There are also enough intervals for this operation as in partition~$\mathcal{S}$, each bundle is worth at least $\MMS_i(n, M)$.

\item We then distribute all remaining intervals to any bundles in partition~$\mathcal{T}$.
\end{itemize}
This is the desired partition~$\mathcal{T}$ we would like to construct, as
\[
\MMS_i(n, \indGoodsOf{i} \cup \widehat{M}) \geq \min_{j \in [n]} u_i(T_j) \geq \left( 1- \frac{\varepsilon}{2} \right) \cdot \MMS_i(n, M).
\]
We would like to remark again our argument involving the construction of partition~$\mathcal{T}$ is only used to prove the inequality.

\subsection{Proof of Theorem~\ref{thm:EFM-non-wasteful-incompatible} (Cont.)}
\label{app:omitted-proofs}

Consider an instance with subjective divisibility and utilities as shown below, where $0 < \varepsilon < 1/2$.
Specifically, good~$g_0$ is a unanimous indivisible good which is only positively valued by agents~$1$ and~$2$.
For each~$i \in [n]$, good~$g_i$ is divisible for agent~$i$ but indivisible for all other agents.
For each~$i \in \{3, 4, \dots, n\}$, agent~$i$ only positively values good~$g_i$.

\begin{center}
\begin{tabular}{@{}c|*{8}{l}@{}}
\toprule
& $g_0$ & $g_1$ & $g_2$ & $g_3$ & \ldots & $g_i$ & \ldots & $g_n$\\
\midrule
Agent~$1$ & ind, $1 - \varepsilon/2$ & div, $\varepsilon$ & ind, $1 - \varepsilon$ & ind, $1$ & \ldots & ind, $1$ & \ldots & ind, $1$ \\
Agent~$2$ & ind, $1 - \varepsilon/2$ & ind, $1 - \varepsilon$ & div, $\varepsilon$ & ind, $1$ & \ldots & ind, $1$ & \ldots & ind, $1$ \\
Agent~$3$ & ind, $0$ & ind, $0$ & ind, $0$ & div, $1$ & \ldots & ind, $0$ & \ldots & ind, $0$ \\
\vdots & \vdots & \vdots & \vdots & \vdots & $\ddots$ & \vdots & $\ddots$ & \vdots \\
Agent~$i$ & ind, $0$ & ind, $0$ & ind, $0$ & ind, $0$ & \ldots & div, $1$ & \ldots & ind, $0$ \\
\vdots & \vdots & \vdots & \vdots & \vdots & $\ddots$ & \vdots & $\ddots$ & \vdots \\
Agent~$n$ & ind, $0$ & ind, $0$ & ind, $0$ & ind, $0$ & \ldots & ind, $0$ & \ldots & div, $1$ \\
\bottomrule
\end{tabular}
\end{center}

First, non-wastefulness requires all goods to be allocated as a whole, and thus in the following, we only focus on integral allocations.
Next, in any EFM allocation, for each~$i \in \{3, 4, \dots, n\}$, agent~$i$ should get good~$g_i$; otherwise, agent~$i$ would envy the agent who gets good~$g_i$ and thus not be EFM.
Last, we discuss how goods~$g_0, g_1, g_2$ are distributed.
It is clear that in any EFM allocation, neither agent~$1$ nor agent~$2$ can get an empty bundle.
This is because good~$g_1$ (resp., good~$g_2$) is given to some agent and agent~$1$ (resp., agent~$2$) should be envy-free towards this agent.
If goods~$g_0, g_1, g_2$ are distributed between agents~$1$ and~$2$, according to argument on \cpageref{thm:EFM-non-wasteful-incompatible}, EFM is not attainable.
So, we may try to let some agent~$j \in \{3, 4, \dots, n\}$ get exactly one good from $\{g_0, g_1, g_2\}$; note that neither of agents~$1$ and~$2$ gets an empty bundle.
Such a redistribution, however, cannot give an EFM allocation as well.
For instance, consider the allocation before redistribution (\Cref{table:redistribution-not-EFM}) in which agents~$1$ and~$2$ get bundles~$\{g_0, g_1\}$ and~$\{g_2\}$, respectively:
\begin{itemize}
\item If good~$g_0$ is given to agent~$j$, then agent~$1$ is not EF1 towards agent~$j$ because $u_1(g_1) = \varepsilon$ is less than $u_1(g_0) = 1 - \varepsilon/2$ when good~$g_j$ is removed from agent~$j$'s bundle~$\{g_0, g_j\}$.
Similar argument works for agent~$2$.

\item Else, good~$g_1$ is given to agent~$j$, then agent~$1$ is not envy-free towards agent~$j$ due to $u_1(g_0) = 1 - \varepsilon/2 < u_1(\{g_1, g_j\}) = 1 + \varepsilon$ and agent~$2$ is not EF1 towards agent~$j$ because $u_2(g_2) = \varepsilon < u_2(g_1) = 1 - \varepsilon$ when good~$g_j$ is removed from agent~$j$'s bundle~$\{g_1, g_j\}$.
\end{itemize}
All other cases are listed in \Cref{table:redistribution-not-EFM}.

\begin{table}[t]
\centering
\begin{tabular}{@{}ll|l|l@{}}
\toprule
Agent~$1$ & Agent~$2$ & \multicolumn{2}{l}{Remarks} \\
\midrule
$\{g_0, g_1\}$ & $\{g_2\}$ & $j \gets g_0$: $a_1, a_2$ are not EF1. & $j \gets g_1$: $a_1$ is not EF; $a_2$ is not EF1. \\
$\{g_2\}$ & $\{g_0, g_1\}$ & $j \gets g_0$: $a_1, a_2$ are not EF1. & $j \gets g_1$: $a_1$ is not EF. \\
$\{g_0, g_2\}$ & $\{g_1\}$ & $j \gets g_0$: $a_1, a_2$ are not EF1. & $j \gets g_2$: $a_2$ is not EF. \\
$\{g_1\}$ & $\{g_0, g_2\}$ & $j \gets g_0$: $a_1, a_2$ are not EF1. & $j \gets g_2$: $a_1$ is not EF1; $a_2$ is not EF. \\
$\{g_1, g_2\}$ & $\{g_0\}$ & $j \gets g_1$: $a_1$ is not EF. & $j \gets g_2$: $a_1$ is not EF1; $a_2$ is not EF. \\
$\{g_0\}$ & $\{g_1, g_2\}$ & $j \gets g_1$: $a_1$ is not EF; $a_2$ is not EF1. & $j \gets g_2$: $a_2$ is not EF. \\
\bottomrule
\end{tabular}
\caption{Redistribution of~$g_0, g_1, g_2$ does not give an EFM allocation in the proof of \Cref{thm:EFM-non-wasteful-incompatible}.}
\label{table:redistribution-not-EFM}
\end{table}

We thus conclude that EFM and non-wastefulness are incompatible if all goods need to be allocated.
\end{document}